\newif\ifacmversion
\newif\ifshortversion
\author{%
\Author{Yannick Chevalier}\\
\Address{IRIT -- Universit{\'e} de Toulouse, France}\\
\Email{ychevali@irit.fr}\\
}
\author{Yannick Chevalier}
\title{Finitary Deduction Systems}
\newtheorem{theorem}{Theorem}
\newtheorem{definition}{Definition}
\newtheorem{lemma}{Lemma}
\newtheorem{proposition}{Proposition}
\newtheorem{example}{Example}
\newcounter{sdnum}
\newcounter{stepnum}[sdnum]
\newcounter{previousstepnum}[sdnum]
\def\parseleftoutput#1#2{%\typeout{output left}%
\global\def\outputstate{t}\global\def\outleft{t}\global\def\sdnodetype{#2}}
\def\parserightoutput#1#2{%\typeout{output right}
\global\def\outputstate{t}\global\def\outleft{f}\global\def\sdnodetype{#2}}
\def\parsenooutput#1{\global\def\outputstate{f}\global\def\sdnodetype{#1}}
\def\typeOf{\@ifnextchar w\parseleftoutput{\@ifnextchar e\parserightoutput\parsenooutput}}
\global\def\arrowlength{.5*\sdunitfactor}
\global\def\sdunitfactor{1}
\global\def\abscisse{3*\thesdnum*\sdunitfactor}
\global\def\ordonnee{\thestepnum*\sdunitfactor}
\global\def\nodename#1#2{sd#1-#2}
\global\def\currentnodename{\nodename{\varname}{\thestepnum}}
\global\def\sdnode{%
\draw (\abscisse,\ordonnee)
      node[circle,draw, 
           label=above right:{\nodevariable},
           label={[label distance=-3mm]below right:{\nodeequation}},
           name=\currentnodename]
           {\step};
\if t\inputstate 
\draw[red,<-]  (\currentnodename) --  \if t\inleft +(-\arrowlength,0)\else +(\arrowlength,0)\fi;
\fi
\if t\outputstate 
\draw[green,->] (\currentnodename) -- \if t\outleft +(-\arrowlength,0)\else +(\arrowlength,0)\fi ;
\fi
\if t\lastvisible \draw[->] (\nodename{\varname}{\thepreviousstepnum}) -- (\currentnodename);\fi
}
\global\def\symbolicderivation#1#2{
  \stepcounter{sdnum}
  \def\varname{#1}
  \draw (\abscisse,\ordonnee) node (sd\varname-\thestepnum) {SD \varname} ;
  \global\def\lastvisible{f}
  \foreach \step/\completetype/\param in {#2} {
%    \tracingall
    \stepcounter{stepnum}
    % default values
    \global\def\inputstate{f}
    \global\def\outputstate{f}
    \global\def\nodevariable{\ensuremath{\varname_{\step}}}
    \global\def\nodeequation{}
    % parsing whether this is an output node
    \expandafter\typeOf\completetype
    \if t\outputstate \fi %\typeout{output+}\fi
    \if h\sdnodetype
       %\typeout{hidden, }
       \global\def\lastvisible{f}
    \else
       \if r\sdnodetype
          %\typeout{reuse, }
          \global\def\nodevariable{\ensuremath{{\varname}_{\param}}}
        \else\fi
        \if d\sdnodetype
           %\typeout{deduction, }
           \global\def\nodeequation{\ensuremath{\nodevariable\unif\param}}
        \else\fi
        \if i\sdnodetype
           %\typeout{input, }
           \global\def\inputstate{t}
           \if l\param
           \global\def\inleft{t}\else\global\def\inleft{f}\fi
        \fi
        \def\tracingall{0}
        \sdnode
        \global\def\lastvisible{t}
     \fi % end "visible step"
    \stepcounter{previousstepnum}
  } % fin de la boucle for
}
\def\connecttoright#1#2{\draw[color=blue,style=->] (#1) .. controls +(right:1cm) and +(left:1cm) .. (#2);}
\def\connecttoleft#1#2{\draw[color=blue,style=->] (#1) .. controls +(left:1cm) and +(right:1cm) .. (#2);}
\def\orderto#1#2{\draw[style=->] (#1) -- (#2);}
\global\def\sdunitfactor{0.8}
\newcommand{\thmref}{}
\newcommand{\envthm}[4]{
\newenvironment{#1}{%
\begin{trivlist}\item {#3 #2.} #4}%
{\end{trivlist}}}
\newcommand{\tq}{\,\vert\,}
\newenvironment{reftheorem}[1]{\def\thmref{#1}\begin{rtheorem}}{\end{rtheorem}}
\newenvironment{reflemma}[1]{\def\thmref{#1}\begin{rlemma}}{\end{rlemma}}
\newenvironment{refproposition}[1]{\def\thmref{#1}\begin{rproposition}}{\end{rproposition}}
\def\makeconstant#1#2{%
\expandafter\def\csname #1\endcsname{%
\ensuremath{\operatorname{#2}}}
}
\def\makefunction#1#2{%
  \expandafter\newcommand\csname #1\endcsname[2][]%
  {\ensuremath{\operatorname{#2}_{\rm ##1}(##2)}}
}
\def\make@math@constant#1{\makeconstant{#1}{#1}}
\def\make@math@function#1{\makefunction{#1}{#1}}
\def\make@math@cal#1{%
\expandafter\def\csname #1\endcsname{%
\ensuremath{\mathcal{#1}}\xspace}
}
\def\make@math@ded#1{%
\expandafter\def\csname D#1\endcsname{%
\ensuremath{\mathcal{D}_{\textrm{\footnotesize #1}}}}
}
\def\make@math@iterator#1#2{\@for\p@arg:=#1\do{%
\edef\arg{\expandafter\@firstofone\p@arg}
\csname make@math@#2\endcsname\arg\relax
}}
\def\MakeLabelledFunctions#1{\make@math@iterator{#1}{labelledfunction}}
\def\MakeFunctions#1{\make@math@iterator{#1}{function}}
\def\MakeConstants#1{\make@math@iterator{#1}{constant}}
\def\MakeCal#1{\make@math@iterator{#1}{cal}}
\def\MakeDed#1{\make@math@iterator{#1}{ded}}
\def\call#1{\ensuremath{\mathcal{#1}}}
\newenvironment{ilitz}{\begin{inparaenum}[\itshape a\upshape)]}{\end{inparaenum}}
\let\Res\@undefined
\newcommand{\Cons}[2][]{\ensuremath{\operatorname{Const}_{\rm #1}(#2)}}
\newcommand{\mgu}[2][]{\ensuremath{\operatorname{mgu}_{\rm #1}(#2)}}
\newcommand{\Res}[2][]{\ensuremath{\operatorname{Res}_{\rm #1}(#2)}}
\newcommand{\Sub}[2][]{\ensuremath{\operatorname{Sub}_{\rm #1}(#2)}}
\newcommand{\Supp}[2][]{\ensuremath{\operatorname{Supp}_{\rm #1}(#2)}}
\newcommand{\Var}[2][]{\ensuremath{\operatorname{Var}_{\rm #1}(#2)}}
\newcommand{\atoms}[2][]{\ensuremath{\operatorname{atoms}_{\rm #1}(#2)}}
\newcommand{\pk}[2][]{\ensuremath{\operatorname{pk}_{\rm #1}(#2)}}
\newcommand{\sk}[2][]{\ensuremath{\operatorname{sk}_{\rm #1}(#2)}}
\newcommand{\enc}[2][]{\ensuremath{\operatorname{enc}_{\rm #1}(#2)}}
\newcommand{\dec}[2][]{\ensuremath{\operatorname{dec}_{\rm #1}(#2)}}
\newcommand{\Nat}{\ensuremath{\text{\rm{}I\kern-0.20em{}N}}}
\newcommand{\Rat}{\ensuremath{\text{\sf{}l\rm{}\kern-0.42em{}Q}}}
\newcommand{\Real}{\ensuremath{\text{\rm{}I\kern-0.20em{}R}}}
\newcommand{\Constants}{\ensuremath{C}}
\newcommand{\Variables}{\ensuremath{\mathcal{X}}\xspace}
\newcommand{\atomes}{\ensuremath{\mathcal{A}}\xspace}
\newcommand{\sig}[1]{\ensuremath{\mathcal{T}(#1)}\xspace}
\newcommand{\gsig}[1]{\ensuremath{\mathcal{T}(#1)}\xspace}
\newcommand{\vsig}[1]{\sig{#1,\Variables}\xspace}
\newcommand{\set}[1]{\ensuremath{\lbrace #1 \rbrace}}
\newcommand{\condset}[2]{\set{#1 \,\vert{}\, #2}}
\newcommand{\interpret}[2][\call{I}]{\ensuremath{[\kern-0.15em[#2]\kern-0.15em]_{#1}}}
\newcommand{\unif}{\ensuremath{\stackrel{?}{=}}}
\newenvironment{decisionproblem}[1]{

\vspace*{-2ex}

\begin{tabbing}
  \underline{\textbf{#1}}\\
  \hspace*{2em}\= \textbf{Input:}~ \=}{
\end{tabbing}

\vspace*{-2em}

}
\newcommand{\entreeu}[1]{ \hbox{\vbox{\parbox[t]{0.8\linewidth}{#1}}}\\}
\newcommand{\sortie}[1]{
\> \textbf{Output:} \> \hbox{\vbox{\parbox[t]{0.8\linewidth}{#1}}}\\}
\newcommand{\minic}{\ensuremath{c_{\text{min}}}}
\newcommand{\Nonces}{\ensuremath{C_{\text{new}}}}
\newcommand{\intrus}[3]{\ensuremath{(#1,#2,#3)}}
\newcommand{\defsd}[1]{\ensuremath{(\call V #1,\call S #1,\call K #1,\Invar #1,\Outvar #1)}}
\newcommand{\Ind}{\mbox{\sc{Ind}}\xspace}
\newcommand{\Invar}{\mbox{\sc{In}}\xspace}
\newcommand{\Outvar}{\mbox{\sc{Out}}\xspace}
\newcommand{\trace}[2]{\ensuremath{\text{\rm Tr}_{#2}(#1)}}
\newcommand{\open}[2]{\ensuremath{\text{\rm open}_{#2}(#1)}}
\newcommand{\rhnorm}[2]{\ensuremath{{(#1)\!\!\downarrow_{#2}}}}
\newcommand{\norm}[1]{\rhnorm{#1}{}}
\newcommand{\normo}[1]{\rhnorm{#1}{\call{O}}}
\def\comp{\circ}
\newcommand{\DY}{\ensuremath{\mathcal{{DY}}}\xspace}
\newcommand\union\cup
\newcounter{claimc}
\newenvironment{claim}{\stepcounter{claimc}\begin{trivlist}\item \textbf{Claim~\arabic{claimc}.} \rm }{\end{trivlist}}
\newenvironment{proofclaim}{
  \begin{itemize}
  \item[]\textrm{Proof of the claim. }\it}
{\hfil$\Diamond$\end{itemize}}
\newenvironment{proof}{\setcounter{claimc}{0}\begin{trivlist}\item \textsc{Proof.} \sl}{\hfil$\Box$\end{trivlist}}
\newcommand{\penc}[2]{\enc[p]{#1,#2}}
\newcommand{\pdec}[2]{\dec[p]{#1,#2}}
\newcommand{\paire}[2]{\ensuremath{\left\langle #1,#2\right\rangle}}
\newcommand{\sfree}{\ensuremath{\text{\rm sf}}}
\begin{document}

\maketitle

\begin{abstract}
  Cryptographic protocols are the cornerstone of security in
  distributed systems. The formal analysis of their properties is
  accordingly one of the focus points of the security community, and
  is usually split among two groups. In the first group, one focuses
  on trace-based security properties such as confidentiality and
  authentication, and provides decision procedures for the existence
  of attacks for an on-line attackers. In the second group, one focuses
  on equivalence properties such as privacy and guessing attacks, and
  provides decision procedures for the existence of attacks for an
  offline attacker. In all cases the attacker is modeled by a
  deduction system in which his possible actions are expressed.

  We present in this paper a notion of finitary deduction systems that
  aims at relating both approaches. We prove that for such deduction
  systems, deciding equivalence properties for on-line attackers can be
  reduced to deciding reachability properties in the same setting.
\end{abstract}

\section{Introduction}
\label{equiv:sec:intro}

\paragraph{Context.}
Security protocols, \textit{i.e.} protocols in which the messages are
cryptographically secured, are a cornerstone of security in
distributed applications. The need for optimizing resource utilization
and their distributed nature make their design error prone, and formal
methods have been applied successfully to detect errors in the
past~\cite{Lowe96,ArmandoCCCT08}.  But they are limited in
expressiveness since in most cases authors either were focused on the
resolution of reachability problems, or considered models in which the
attacker could not interfere with the on-going communications among
the honest agents. In contrast we consider in this paper the general
case of equivalence properties \textit{w.r.t.} an on-line attacker.

Formal models of cryptographic protocols usually present the reader
with a dichotomy between the honest agents---translated into a
constraint system~\cite{AmadioL00,MS01,RT01} or a
frame~\cite{abadi97calculus}---, and the attacker---modeled by a
deduction system expressing its possible actions.  In contrast we have
introduced in~\cite{DBLP:conf/frocos/ChevalierLR07} a notion of
\emph{symbolic derivation} that unifies the honest and dishonest agent
models: the actions of all agents are represented by a sequence of
deductions, nonce creation, and communication actions.  The notion of
equivalence considered in this paper is the one of symbolic
derivations representing honest agents.

\paragraph{Intuition.} First, a trivial remark: since one can
construct deduction systems for which reachability is decidable but
static equivalence is not, it is clear that generally speaking being
able to decide reachability does not imply being able to decide
symbolic equivalence. However, in most cases, one can model
reachability as the satisfiability of a constraint system, and
describe the decision procedure using constraint transformation rules.
A \emph{solved form} is defined as a constraint system in which the
attacker just has to instantiate variables by any term he can
construct.  In practice, the proof of completeness of the procedure
consists in assuming the existence of a sequence of deduction steps
that satisfies the constraint system, and in proving that as long as
one such sequence exists, either the constraint system is in solved
form or there exists a transformation rule applicable on the
constraint system. Then, an argument is given to prove that there is
no infinite sequence of transformations. Using K{\"o}nig's lemma, the
finiteness (also to be proved) of the number of possible successors of
each constraint system implies termination of the procedure. 

Our motivation was that such procedures actually do much more than
simply deciding reachability, as they end with a set of constraint
systems in solved form that, as long as the completeness proof is
along the lines given above, cover all possible attacks. Formalizing
this argument is however not trivial, since
\begin{itemize}
\item not all instances of the variables occurring in a constraint
  system in solved form correspond to attacks; and
\item when testing the equivalence of two protocols, we have to take
  into account the equality tests the attacker can perform to analyze
  the responses of the honest agents.
\end{itemize}
We have bypassed the first difficulty by imposing that the attacker
instantiates the first-order variables in a constraint system in
solved form with constants, and proved that replacing these constants
by any possible construction yields another attacks. This replacement
is formalized by on ordering on the attacks, the attacks corresponding
to solved forms being the minimal ones. Finitary deduction systems are
those for which the set of minimal attacks is always finite. The
second difficulty is solved by first proving that it suffices to
consider an attacker that performs at most one test, and then proving
that this test can be guessed \emph{before} the computation of solved
forms. Finally and implementationwise, we consider \emph{effective}
finitary deduction system, for which we assume that this finite set is
computable.

\paragraph{Applications.} The symbolic equivalence notion we consider
in this paper has three straightforward applications, related
respectively to on-line guessing attacks, to proving cryptographic
properties in a symbolic setting, and to privacy. We have proved, in
collaboration with
M.~Rusinowitch~\cite{DBLP:journals/ipl/ChevalierR10} that every
protocol narration (for any deduction system) can be compiled into an
active frame, which is a simplified form of symbolic derivations with
a total ordering on states and no intermediate computations between
communications.

\emph{Guessing attacks.} Introduced by Schneier~\cite{Schneier94}
under the name of \emph{dictionary attacks}, they consist in guessing a
secret piece of data, and then being able to check whether the guess
is correct. They can be offline, in which case the attacker observes
interactions between honest participants and has to decide whether the
guessed piece of data has been employed, or on-line, in which case the
intruder can interact with the honest participants.

Guessing attacks have been formalized thanks to the concept of
indistinguishability (see \textit{e.g.}~\cite{AbadiBW06}).  We can say
now that a protocol is vulnerable to undetectable on-line guessing
attacks whenever \textit{(i)} the honest agents cannot distinguish
between a session with the right piece of data and one involving a
wrong guess, whereas \textit{(ii)} the intruder can distinguish the
two executions. We model the first point by stating that the tests
performed by the honest agents succeed in both cases, and the second
point by saying that the two executions are not equivalent.

% crypto properties
\emph{Cryptographic properties.} A line of works initiated
by~\cite{abadirogaway} showed that computational proofs of
indistinguishability ensuring the security of a protocol can be
derived, under some natural hypothesis on cryptographic primitives,
from symbolic equivalence proofs. This has opened the path to the
automation of computational proofs.  It was shown by
\cite{Comon-LundhC08} that \emph{in presence of an active attacker}
observational equivalence of the symbolic processes can be transferred
to the computational level.

% privacy, e-voting
\emph{Privacy.}  Symbolic equivalence is a crucial notion for
specifying security properties such as anonymity or secrecy of a
ballot in vote protocols~\cite{kremer-vote}.  More generally, the
analysis of privacy, \textit{e.g.}  client's identity in an
anonymization protocol such as IDEMIX~\cite{FormalIDMX,cms10}, in
communication protocols is inherently an equivalence problem. One has
to prove that a protocol preserves the \emph{strong secrecy} of an
attribute, \textit{i.e.}  that an observer cannot distinguish the
execution of a protocol transmitting this attribute's value, be it a
vote or her identity, from one in which a random piece of data is
exchanged.

\paragraph{Related works.}
We believe that Mathieu Baudet's modeling of attacks by instantiation
of \emph{second-order} variables~\cite{Baudet05} is the real
breakthrough that enabled the formal analysis of the equivalence
problem in the on-line attacker setting. Indeed, it was the first-time
that the actions of the attacker were represented explicitly in
solutions, instead of just keeping track (with a substitution on the
first-order variables of the constraint system) of their interaction
with the honest participants.

In collaboration with
M.~Rusinowitch~\cite{DBLP:journals/ipl/ChevalierR10} we have given
another proof of Baudet's result in the setting of symbolic
derivations. We believe that this setting is more complex but
introduces a langage fit to prove decidability and complexity results.
Also it possesses a symmetry between honest participants and the
attacker that permits to greatly simplify otherwise redundant proofs.
We consider in this paper a setting in which the actions of the honest
agents are represented by one \emph{Honest symbolic derivation} (HSD)
and those of a \emph{unique} intruder by one \emph{Attacker Symbolic
  Derivation} (ASD).  Symbolic derivations can be seen as standing
between symbolic traces~\cite{Baudet05} and the simple cryptographic
processes of~\cite{CortierD09}: the sequence of messages is not
totally ordered as it is the case in~\cite{Baudet05}, but there is no
branching but for termination on error nor any recursive process.

Few decidability results are available. In the article~\cite{huttel}
H{\"u}ttel proves decidability for a fragment of the spi-calculus without
recursion for framed bisimilarity. Since, the only original
decidability result on the equivalence of symbolic traces\footnote{a
  restriction of symbolic equivalence in which the actions of all the
  honest agents are totally ordered.}  we are aware of is for the
class of subterm deduction systems and was given by
M.~Baudet~\cite{Baudet05,Baudet-these}. We have recently given another
proof of this result~\cite{jar2010}, on which this paper elaborates.
Implementation-wise, an efficient procedure is presented
in~\cite{ijcar2010} in which one considers only the Dolev-Yao
deduction system. In spite of the relevance of this problem, we are
not aware of any extension of Baudet's decidability results to other
classes of deduction systems.

In~\cite{DBLP:conf/csfw/TiuD10} the authors consider, as
H{\"u}ttel~\cite{huttel}, the same problem in the simpler case of the
standard Dolev-Yao syntactic deduction system (with no equational
theory). They employ the notion of solved form as introduced
in~\cite{AmadioL00}, and more specifically that solved forms cover all
possible attacks. The existence of such a finite set of solved forms
corresponds exactly to our notion of finitary deduction system.

However, we note that their setting enforces a strict separation
between the values of the first order variables and the observer
process. This has in our opinion two negative side-effects. First, it
is well-known that not all instances of the first-order substitutions
constructed are instances of attacks. Second, given that the authors
of~\cite{DBLP:conf/csfw/TiuD10} only keep track of the constraints
that remain to be solved, the attacks themselves are not represented
explicitly in the solution. Hence it is not possible to reason on all
first-order instances of a solved form (since they are not all
attacks) nor on the observer processes (since only their interaction
with the processes under scrutiny is recorded). This is the reason why
we believe that the symbolic derivation setting adopted in this paper,
while more cumbersome at first, is better suited to reason on sets of
solutions, and therefore on process equivalence.

Many works have been dedicated to proving correctness properties of
cryptographic protocols using equivalences on process calculi.  In
particular \emph{framed bisimilarity} has been introduced by Abadi and
Gordon~\cite{abadi97calculus} for this purpose, for the spi-calculus.
Another approach that circumvents the context quantification problem
is presented in ~\cite{boreale99proof} where labeled transition
systems are constrained by the knowledge the environment has of names
and keys. This approach allows for more direct proofs of equivalence.

In \cite{CortierD09} the authors show how to apply the result by
Baudet on S-equivalence to derive a decision procedure for symbolic
equivalence for subterm convergent theories for simple processes.
Since \cite{CortierD09} relies on the proof of Baudet's result, that
is long and difficult \cite{Baudet-these}, we believe that providing a
simple criterion will be useful to derive other decidability results
in process algebras.
  
To the best of our knowledge, the only tool (besides~\cite{ijcar2010})
capable of verifying equivalence-based secrecy is the resolution-based
algorithm of ProVerif~\cite{BlanchetOakland04} that has been extended
for handling equivalences of processes that differ only in the choice
of some terms in the context of the applied $\pi$-calculus
\cite{BlanchetAbadiFournetLICS05}. This allows to add some equational
theories for modeling properties of the underlying cryptographic
primitives.

\paragraph{Example finitary deduction systems.} We remark that the
standard Dolev-Yao deduction system~\cite{dolev83ieee} is finitary,
since for every attack one can guess a subsequence of deduction steps
which is itself an attack~\cite{ChevalierLR-LPAR07}. In this regard,
this work extends~\cite{DBLP:conf/csfw/TiuD10} to other deduction
systems such as subterm deduction systems (the proof that from every
attack one can guess a sequence of deductions bounded by the size of
the input protocol is given \textit{e.g.} in~\cite{Kourjieh-these}).
We leave to future work the extension to contracting saturated
deduction systems, also defined in~\cite{Kourjieh-these}.

\paragraph{Organization of this paper.}
We reuse in this paper the notions and notations for terms,
equational theories, deduction systems, and symbolic derivations
introduced in earlier papers (sections 2--3). We give in Section~4 a
few properties of symbolic derivations, and define finitary deduction
systems accordingly.  We present in Section~5 a sketch of the proof
the symbolic equivalence is decidable for finitary deduction systems,
and conclude in Section~6. 
\ifshortversion
A version of this article with the proofs
of all statements has been submitted to arxiv.
\else
This document is the version of an article submitted to ACM CCS 2011
with the addition of the proofs of all statements.
\fi

\section{Formal setting}

\subsection{Term algebra}

We consider a countable set of free constants \Constants{}, a
countable set of variables \Variables, and a signature \call{F}
(\textit{i.e.} a set of function symbols with arities). We denote by
\gsig{\call{F}} (resp.  \vsig{\call{F}}) the set of terms over
$\call{F}\cup{}\Constants{}$ (resp.
$\call{F}\cup{}\Constants{}\cup\Variables$).  The former is called the
set of ground terms over \call{F}, while the latter is simply called
the set of terms over \call{F}.  Variables are denoted by $x$, $y$,
terms are denoted by $s,t,u,v,\ldots$, and decorations thereof,
respectively.

A \emph{constant} is either a \emph{free} constant in \Constants{} or a
function symbol of arity $0$. Given a term $t$ we denote by \Var{t}
the set of variables occurring in $t$ and by \Cons{t} the set of
constants occurring in $t$.  We denote by \atoms{t} the set $\Var{t}
\cup \Cons{t}$. We denote by \atomes ~the set of all constants and
variables.  A substitution $\sigma$ is an idempotent mapping from
\Variables{} to \vsig{\call{F}} such that $\Supp{\sigma}= \{x|
\sigma(x)\not=x\}$, the \emph{support} of $\sigma$, is a finite set.
The application of a substitution $\sigma$ to a term $t$ is denoted
$t\sigma$ and is equal to the term $t$ where all variables $x$ have
been replaced by the term $x\sigma$. A substitution $\sigma $ is
\emph{ground} w.r.t.  $\call{F}$ if the image of $\Supp{\sigma }$ is
included in $\gsig{\call{F}}$.  

The set of the \emph{subterms} of a term $t$, denoted $\Sub{t}$, is
defined inductively as follows. If $t$ is a constant or a variable
then $\Sub{t}=\set{t}$. Otherwise, $t$ must be of the form
$f(t_1,\ldots,t_n)$, and we define
$\Sub{t}=\set{t}\cup\bigcup_{i=1}^n\Sub{t_i}$. The \emph{positions} in
a term $t$ are defined recursively as usual (\textit{i.e.} as
sequences of integers), $\epsilon$ being the empty sequence.  We
denote by $t_{|p}$ the subterm of $t$ at position $p$.  We denote by
$t[p \leftarrow s]$ the term obtained by replacing in $t$ the
syntactic subterm $t_{|p}$ by $s$.

\subsection{Equational theories and Unification}

We consider in this paper an equational theory \call{E} that defines a
congruence on the terms in \vsig{\call{F}}. We assume it is
consistent, \textit{i.e.} that it has a model with more than one
element. \emph{Ordered rewriting}~\cite{dershowitz90rewrite} then
permits us to employ the unfailing completion procedure
of~\cite{hsiangr87} to produce a (possibly infinite) set of equations
for which ordered rewriting is convergent \emph{on ground terms}, its
\emph{$o$-completion}. In turn, this convergence permits us to
constructively choose one element in the congruence class of each
ground term $t$, called its \emph{normal form}, and denoted \norm{t}.
We use in this paper the fact that since ordered rewriting is a
relation on ground terms, if a term $t$ is ground then the term
\norm{t} is also a ground term.

This construction relies on the assumption that the ground terms are
totally ordered by a simplification ordering, and that the minimum for
this ordering is a free constant \minic.

\subsubsection{Unification and equational theory type}

Our result on deduction systems may seem vacuous as the
definitions---based on an ordering on the ``attacks'' on a
protocol---are not constructive. They however follow a classical line
of definitions in the context of unification and equational theories.
We present in this subsection these classical notions (and refer the
reader \textit{e.g.} to~\cite{unification} for a more complete
overview) in order to hilight the similitudes between our definitions
and the classical ones for unification.

\begin{definition}{(\call{E}-unifiers)\label{def:fo:E:unification}}
  Let \call{E} be an equational theory. We say that two terms $t$ and
  $s$ are \emph{\call{E}-equal}, and denote $s=_{\call{E}}t$, if
  $\call{E}\models_= t=s$. We say that a substitution $\sigma$ is a
  \emph{\call{E}-unifier} of $s$ and $t$ if $\call{E}\models_=
  t\sigma=s\sigma$.
\end{definition}

We say that two terms that have a \call{E}-unifier are
\emph{\call{E}-unifiable}.

We denote $\Sigma_{\call{E}}(t,t')$ the set of all unifiers of $t$ and
$t'$. This set is not empty if, and only if, $t$ and $t'$ are
unifiable.  We extend the notion of unifier to conjunctions of
equations as follows.

\begin{definition}{(Unification systems)\label{def:protmod:unification}}
  Let $\call{E}$ be an equational theory.  An
  $\call{E}$-\emph{Uni\-fi\-ca\-tion system} $S$ is a finite set of
  equations denoted by $\set{u_i \unif{}
  v_i}_{i\in\set{1,\ldots,n}}$ with terms $u_i,v_i\in\vsig{\call{F}}$.  It is
  satisfied by a substitution $\sigma$, and we note $\sigma\models{}_\call{E}
  S$, if for all $i\in\set{1,\ldots,n}$ $u_i\sigma =_\call{E}
  v_i\sigma$.
\end{definition}

One defines an \emph{instantiation ordering} on unifiers by setting
$\sigma \le_i \tau$ whenever there exists a substitution $\theta$ such
that $\sigma\theta=_{\call{E}} \tau$.  Equational theories are
classified~\cite{Schmidt-Schauss86} \textit{w.r.t.} the possible
cardinalities of \emph{complete sets of unifiers}.
\begin{definition}{(Complete set of
      unifiers)\label{def:fo:complete:unifiers}}
    Let \call{E} be an equational theory and $t,t'$ be two terms. We
    say that a subset $S$ of $\Sigma_{\call{E}}(t,t')$ is a
    \emph{complete set of unifiers} of $t$ and $t'$ if, for every
    substitution $\sigma\in\Sigma_{\call{E}}(t,t')$ there exists a
    substitution $\tau\in S$ and a substitution $\theta$ such that
    $\tau\theta=_{\call{E}}\sigma$.
\end{definition}
Or, using the instantiation ordering terminology, a complete set of
unifiers is a set of minimal unifiers for the instantiation ordering
such that every unifier is an instance of a unifier in this set.
Finally, we define a set of \emph{most general unifiers} to be a
minimal set, for standard set inclusion, among the complete sets of
unifiers. The rationale for this definition is that modulo an
equational theory, two substitutions may be non-trivial instances one
of the other. In this case one of the two is redundant and can be
removed, hence the following definition.

\begin{definition}{(Most general \call{E}-unifiers)\label{def:fo:equational:mgu}}
  Let \call{E} be an equational theory. We call a set of most general
  \call{E}-unifiers of $t$ and $t'$, and denote \mgu[\call{E}]{t,t'},
  a minimal (for set inclusion) complete set of unifiers of two terms
  $t$ and $t'$.
\end{definition}
In the rest of this paper, and as long as there is no ambiguity, we
simply refer to such sets as sets of most general unifiers, or sets of
mgu.  Also, the notion of mgu is extended as usual to unification
systems. One proves the next lemma by constructing explicitly an
injection from each complete set of unifiers to the other.

\begin{lemma}{\label{lem:fo:cardinality:E-unifiers}}
  Let \call{E} be an equational theory, $t,t'$ be two terms, and
  $S,S'$ be two sets of most general unifiers of $t$ and $t'$.  Then
  $S$ and $S'$ have the same cardinality.
\end{lemma}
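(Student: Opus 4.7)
The plan is to construct an explicit bijection $f : S \to S'$ using the completeness of each set to locate, for every $\sigma \in S$, a canonical ``partner'' in $S'$.

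The key preliminary observation I would establish first is an antichain property of any minimal (for inclusion) complete set of unifiers $T$: if $\sigma_1, \sigma_2 \in T$ satisfy $\sigma_1 \le_i \sigma_2$, then $\sigma_1 = \sigma_2$. Indeed, otherwise $T \setminus \{\sigma_2\}$ would still be complete, since any unifier $\rho$ with $\sigma_2 \le_i \rho$ (witnessed by some $\theta$ with $\sigma_2 \theta =_{\call{E}} \rho$) also satisfies $\sigma_1 \le_i \rho$ by composition of the instantiation witnesses, contradicting the minimality of $T$. This immediately rules out distinct $\le_i$-equivalent elements in $T$ as well.

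Now I would define $f$ as follows: given $\sigma \in S$, completeness of $S'$ yields some $\tau \in S'$ with $\tau \le_i \sigma$; applying completeness of $S$ to the unifier $\tau$ produces $\sigma'' \in S$ with $\sigma'' \le_i \tau$, hence $\sigma'' \le_i \sigma$. The antichain property for $S$ forces $\sigma'' = \sigma$, which in turn gives $\sigma \le_i \tau$, so $\sigma$ and $\tau$ are $\le_i$-equivalent. If two candidates $\tau_1, \tau_2 \in S'$ both satisfied $\tau_i \le_i \sigma$, the same reasoning would make each $\le_i$-equivalent to $\sigma$, hence to each other; the antichain property for $S'$ then forces $\tau_1 = \tau_2$. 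Thus $f(\sigma) := \tau$ is well-defined, and symmetrically we define $g : S' \to S$.

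Finally, for any $\sigma \in S$, $\sigma$ is $\le_i$-equivalent to $f(\sigma)$, and $f(\sigma)$ is $\le_i$-equivalent to $g(f(\sigma))$, so $\sigma$ and $g(f(\sigma))$ are $\le_i$-equivalent elements of $S$; the antichain property forces $g(f(\sigma)) = \sigma$. The symmetric argument gives $f \circ g = \mathrm{id}_{S'}$, so $f$ is a bijection and $|S| = |S'|$. The one delicate point is the antichain property; once that is in hand, the rest is just bookkeeping with the instantiation preorder.
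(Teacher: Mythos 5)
Your proof is correct and follows essentially the approach the paper indicates (the paper only sketches it in one sentence: ``one proves the next lemma by constructing explicitly an injection from each complete set of unifiers to the other''). Your antichain observation --- that minimality for set inclusion forces a minimal complete set to contain no two distinct $\le_i$-comparable elements --- is exactly the ingredient needed to make those injections well-defined and mutually inverse, and your bookkeeping with the instantiation preorder is sound.
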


The finiteness or even the existence of a minimal complete set of
unifiers of two terms unifiable modulo \call{E} is not guaranteed. We
say that an equational theory is \emph{finitary} whenever, for every
two unifiable terms $t,t'$, \mgu[\call{E}]{t,t'} is a finite set.

One important property of unification systems that we shall use in the
rest of this paper is the following replacement property.

\begin{lemma}{\label{lem:unif:replacement}}
  For any equational theory \call{E}, if a \call{E}-unification system
  \call{S} is satisfied by a substitution $\sigma$, and $c$ is any
  free constant in \Constants{} away from $\call{S}$, then for any
  term $t$, $\sigma\delta_{c,t}$ is also a solution of \call{S}.
\end{lemma}

\paragraph{Variables and constants.} 
Using Lemma~\ref{lem:unif:replacement} we can clarify the difference
and similitudes between variables and free constants. First, a formal
point: since free constants do not occur in the equations of the
equational theory they are not among the constants obtained by
skolemization. Second, we agree that in the resolution
procedure~\cite{Robinson65}, variables have a special role whereas by
Herbrand's theorem we know that it suffices to consider models of a
set of clauses with at most one free constant. In spite of this we
almost use variables and free constants (as in
Lemma~\ref{lem:unif:replacement}) interchangeably.

The rationale is that ordered completion yields a rewriting relation
which is convergent on \emph{ground} terms, and thus cannot be
employed to normalize terms that contain variables.
Lemma~\ref{lem:unif:replacement} is thus fundamental since it implies
that some of the free constants that may appear in an unifier can be
replaced, the main difference with variables being that if, for a
simplification ordering $<$, we have $t<t'$, then for every
substitution $\sigma$ we also have $t\sigma<t'\sigma$, whereas it is
not the case that for every replacement $\delta_{c,s}$ we also have
$t\delta_{c,s}<t'\delta_{c,s}$.

\subsection{Deduction systems}
Our protocol analysis is based on the assumption that all the agents
operate on messages \textit{via} a message manipulation library.  We
consider a signature \call{F} containing the function symbols employed
to denote the messages, with a special subset of symbols $\call{F}_p$
denoting the functions of the library which can be employed by all
participants.

\begin{definition}{(Deduction
    systems)\label{def:protmod:deduction:system}}
  A \emph{deduction system} is defined by a triple
  $(\call{E},\call{F},\call{F}_p)$ where \call{E} is an equational
  presentation on a signature \call{F} and $\call{F}_p$ a subset of
  \emph{public} constructors in \call{F}.
\end{definition}

\begin{example}
  For instance the following deduction system models public key
  cryptography:
  $$
  \begin{array}{l}
    (  \set{\pdec{\penc{x}{y}}{y^{-1}} = x },\\
    \set{ \pdec{\_}{\_}, \penc{\_}{\_}, {\_^{-1}} }, \\
    \set{\pdec{\_}{\_},\penc{\_}{\_} } ) 
  \end{array}
  $$
  The equational theory is reduced here to a single equation that
  expresses that one can decrypt a cipher text when the inverse key is
  available.
\end{example}

\section{Symbolic derivations}
\label{sec:protmod:sd}

We present in this section our model for agents.

\subsection{Definitions}
\label{subsec:symb:def}

\paragraph{Symbolic derivations.}
Given a deduction system \intrus{\call{F}}{\call{P}}{\call{E}}, a role applies
public symbols in \call{P}{} to construct a response from its initial knowledge
and from messages received so far. Additionally, it may test equalities between
messages to check the well-formedness of a message.  Hence the activity of a
role can be expressed by a fixed symbolic derivation:

\begin{definition}{\label{def:symbolic:derivation}(Symbolic Derivations)}
  A symbolic derivation for a deduction system
  \intrus{\call{F}}{\call{P}}{\call{E}} is a tuple \defsd{} where \call{V}{}
  is a mapping from a finite ordered set $(\Ind,<)$ to a set of
  variables \Var{\call{V}}, \call{K}{} is a set of ground terms (the initial
  knowledge) \Invar{} is a subset of $\Ind$, \Outvar{} is a multiset
  of elements of $\Ind$ and {\call{S}} is a unification system.

  The set $\Ind$ represents internal states of the symbolic
  derivation.  We impose that any $i \in \Ind$ is exactly one of the
  following kind:
  \begin{description}
  \item[Deduction state:] There exists a public symbol $f\in{}\call{P}
    $ of arity $n$ such that
    $\call{V}{}(i)\unif{}f(\call{V}(\alpha_1),\ldots,\call{V}(\alpha_n))\in\call{S}$
    with $\alpha_j<i $ for $j\in\set{1,\ldots,n}$ .
  \item[Re-use state:] if there exists $j<i$ with
    $\call{V}(j)=\call{V}(i)$;
  \item[Memory state:] if there exists $t$ in \call{K} and an equation
    $\call{V}{}(i)\unif{}t$ in \call{S};
  \item[Reception state:] if $i \in \Invar{}$;
  \end{description}
  Additionally, a state $i$ is also an \textbf{emission state} if
  $i\in\Outvar$. 

  The unification system \call{S} contains no equation but those
  described above and equations $\call{V}(i)\unif \call{V}(j)$, and
  the mapping \call{V} must be injective on non-re-use states.
  
  A symbolic derivation is \emph{closed} if it has no reception
  state.  A substitution $\sigma$ \emph{satisfies} a closed
  symbolic derivation if $\sigma \models_\call{E} \call{S}$.
\end{definition}

We believe that using symbolic derivations instead of more standard
constraint systems permits one to simplify the proofs by having a more
homogeneous framework. There is however one drawback to their usage.
While most of the time it is convenient to have an identification
between the order of deduction of messages and their send/receive
order, building in this identification too strictly would prevent us
from expressing simple problems. Re-use states are employed to reorder
the deduced messages to fit an order of sending messages which can be
different. For example consider an intruder that knows (after
reception) two messages $a$ and $b$ received in that order, and that
he has to send first $b$, then $a$.  Since the states in a symbolic
derivation have to be ordered, we have to use at least one re-use
state (for $a$) to be able to consider a sending of $a$ \emph{after}
the sending of $b$. We note that re-use states that are not employed
in a connection can be safely eliminated without changing the
deductions, the definition of the knowledge nor the tests in the
unification system.

With respect to earlier definitions, we have chosen to consider
injective variable-state mapping functions. The rationale for this
choice is essentially aesthetic, as using this more strict definition
implies that every equality test performed by the attacker is an
equality $\call{V}(i)\unif\call{V}(j)$ in the unification system. Not
having this restriction would require the introduction of
\begin{ilitz}
\item an equivalence class on ASDs to model the fact that two ASDs can
  be solutions to exactly the same HSDs, and
\item the subset of ASDs that have an injective variable-state mapping
  function,and
\item the construction, by adding equality tests, for every ASD of an
  equivalent ASD in this subset.
\end{ilitz}

\begin{example}{\label{ex:narration}}
  Let us consider the cryptographic protocol for deduction system
  $\DY$ where $\call{F}_\call{D}$ and $\call{P}_\call{D}$ have been
  extended by a free public symbol $f$:
  $$
  \begin{array}{c@{\rightarrow}c@{:}l}
    A & B &  ~\penc{N_a}{\pk B}\\
    B & A & ~\penc{f(N_a) }{\pk A}\\
    \multicolumn 3l {\text{\bf where }}\\
    \multicolumn 3l {A\textbf{ knows }A,B,\pk B,\pk A, \sk A} \\
    \multicolumn 3l {B\textbf{ knows }A,B,\pk A,\pk B, \sk B} \\
  \end{array}
  $$
  Let us define a symbolic derivation for role $B$:
  $$
  \begin{array}{rcl}
    \Ind_B &=& \set{1,\ldots,9}\\ 
    \call{V}_B{}&=& i\in\Ind \mapsto x_i\\
    \call{K}_B{}  &=&\set{A,B,\pk A,\pk B, \sk B}\\
    \Invar_B{} &=&\set{6}\\
    \Outvar_B{} &=&\set{9}\\
    {\call{S}_B}&=&\{x_1\unif{}A,x_2\unif{}B,x_3\unif{}\pk A,x_4\unif{}\pk B, x_5\unif{} \sk B\\
    &&x_7\unif{}\pdec{x_6}{x_5}, x_8\unif{}f(x_7), x_9\unif{} \penc{x_8}{x_3}\}
  \end{array}
  $$
  The set of deduction states in $B$ is $\{7,8,9\}$, there are no
  re-use state, the set of memory states is $\{1,\ldots,5\}$ and the
  only reception state is $6$. Assuming that the role $B$ tests
  whether the received message is a cipher, one may add a tenth
  deduction state with $x_{10}\unif{}\penc{x_7}{x_4}$ and an equation
  $x_6\unif{} x_{10}$.

  Similarly, a symbolic derivation for role $A$ would be:
  $$
  \begin{array}{rcl}
    \Ind_A &=& \set{1,\ldots,10}\\ 
    \call{V}{}&=& i\in\Ind \mapsto y_i\\
    \call{K}{}  &=&\set{A,B,\pk A,\pk B, \sk A,Na}\\
    \Invar{} &=&\set{9}\\
    \Outvar{} &=&\set{7}\\
    {\call{S}}&=&\{y_1\unif{}A,y_2\unif{}B,y_3\unif{}\pk
    A,y_4\unif{}\pk B, y_5\unif{} \sk A,y_6\unif Na\\
    &&y_7\unif{}\penc{y_5}{y_3}, y_8\unif{}f(y_6), y_{10}\unif{}
    \pdec{y_9}{y_5},y_{10}\unif y_8\}
  \end{array}
  $$
  The set of deduction states in $A$ is $\{6,7,9\}$, there are no
  re-use state, the set of memory states is $\{0,\ldots,5\}$ and the
  only reception state is $8$. We have added an equality test
  $y_9\unif y_7$ to model that $A$ checks whether the message received
  actually contains the encryption of $f(Na)$.
  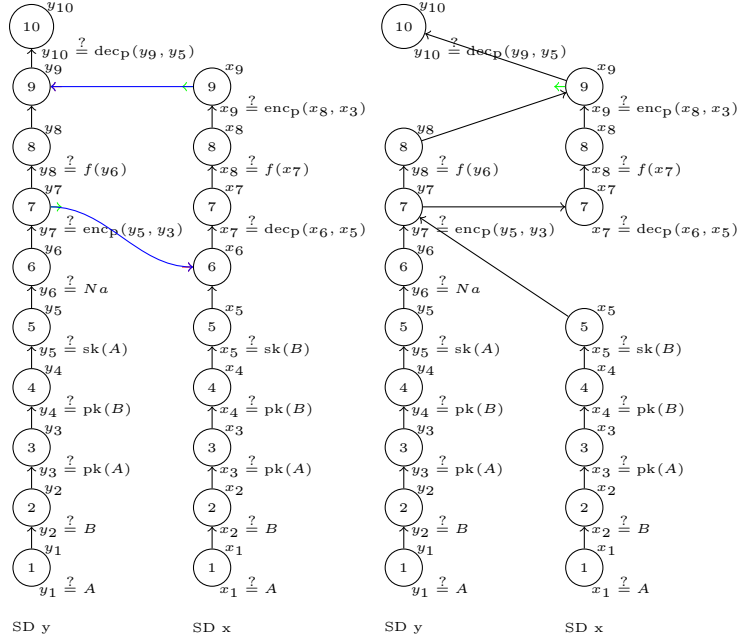
\begin{figure}[htbp]\centering
      \subfloat[With a connection, before computing the resulting
      symbolic derivation]{\hspace*{-1.5cm}
\begin{tikzpicture}[font=\tiny,label distance=-2mm]
        \symbolicderivation{y}{1/d/A,2/d/B,3/d/{\pk{A}},4/d/{\pk{B}},5/d/{\sk{A}},6/d/{Na},7/ed/{\ensuremath{\penc{y_5}{y_3}}},8/d/{\ensuremath{f(y_6)}},9/i/r,10/d/{\ensuremath{\pdec{y_9}{y_5}}}}
        \symbolicderivation{x}{1/d/A,2/d/B,3/d/{\pk{A}},4/d/{\pk{B}},5/d/{\sk{B}},6/i/l,7/d/{\ensuremath{\pdec{x_6}{x_5}}},8/d/{\ensuremath{f(x_7)}},9/wd/{\ensuremath{\penc{x_8}{x_3}}}}
        \connecttoright{sdy-7}{sdx-6}; \connecttoleft{sdx-9}{sdy-9};
      \end{tikzpicture}}
      \subfloat[After computing the
      symbolic derivation resulting from the connection]{\begin{tikzpicture}[font=\tiny,label distance=-2mm]
        \symbolicderivation{y}{1/d/A,2/d/B,3/d/{\pk{A}},4/d/{\pk{B}},5/d/{\sk{A}},6/d/{Na},7/d/{\ensuremath{\penc{y_5}{y_3}}},8/d/{\ensuremath{f(y_6)}},9/h/r,10/d/{\ensuremath{\pdec{y_9}{y_5}}}}
        \symbolicderivation{x}{1/d/A,2/d/B,3/d/{\pk{A}},4/d/{\pk{B}},5/d/{\sk{B}},6/h/l,7/d/{\ensuremath{\pdec{x_6}{x_5}}},8/d/{\ensuremath{f(x_7)}},9/wd/{\ensuremath{\penc{x_8}{x_3}}}}
          \orderto{sdx-5}{sdy-7}
         \orderto{sdy-7}{sdx-7}
         \orderto{sdy-8}{sdx-9}
         \orderto{sdx-9}{sdy-10}
      \end{tikzpicture}\hspace*{-1.5cm}}
    \caption{Honest symbolic derivations of Example~\ref{ex:narration}
      with a connection corresponding to the intended communications
      and the test equations not shown}
\end{figure}
Generally speaking, if ground reachability and ground symbolic
equivalence for the deduction system are decidable (see
Section~\ref{subsec:decision:problems}) then an \emph{as prudent as
  possible} set of deductions and equality tests for the narration can
be computed (see~\cite{cr-ipl}).
\end{example}

In addition we assume that two symbolic derivations do not share any
variable, and that equality between symbolic derivations is defined
modulo a renaming of variables.  The proof of the following lemma is a
direct consequence of the definition.

\begin{lemma}{(Properties of symbolic derivations)\label{lem:ASD:properties}}
  Let $\call{C}=\defsd{}$ be a symbolic derivation. We have:
  \begin{enumerate}{(i)}
  \item For every variable $\call{V}(i)$ there is at most one equation
    in \call{S} of the form $\call{V}(i)\unif{} f(t_1,\ldots,t_n)$;
  \item If $\call{V}(i)$ is a variable such that the above equation is
    in \call{S}, then either
    \begin{ilitz}
    \item $i$ is a deduction state and $i=\min(j\tq \call V(i)=\call
      V(j))$, or
    \item $i$ is a re-use state.
    \end{ilitz}
  \end{enumerate}
\end{lemma}

We rely on the normal form defined by the \textit{o}-completion of the
equational theory \call{E} to prove that every closed symbolic
derivation defines in a unique way the terms deduced.

\begin{lemma}{\label{lem:closed:imply:ground}}
  Let \call{I} be a deduction system, and consider a closed and
  satisfiable \call{I}-symbolic derivation $\call{C}=\defsd{}$.  Then
  there exists a unique ground substitution $\sigma$ in normal form
  that satisfies \call{S}.
\end{lemma}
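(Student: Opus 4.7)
The plan is to establish existence and uniqueness in parallel, by induction on the linear order $(\Ind,<)$ together with the classification of states given in Definition~\ref{def:symbolic:derivation}. The key enabling fact is that the $o$-completion of $\call{E}$ provides, on ground terms, a function $\norm{\cdot}$ that picks a unique representative in each $\call{E}$-class.

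For existence, I would start from any substitution $\tau$ with $\tau \models_{\call{E}} \call{S}$, which is given by the satisfiability hypothesis. Since the only variables occurring in $\call{S}$ lie in $\Var{\call{V}}$, I can extend $\tau$ on any residual variables by $\minic$ so as to make it ground (the replacement property of Lemma~\ref{lem:unif:replacement} applied dually justifies that the resulting substitution still satisfies $\call{S}$). Then I would set $x\sigma = \norm{x\tau}$ for each $x \in \Var{\call{V}}$. By convergence of ordered rewriting on ground terms, $\sigma$ is in normal form; since $\norm{\cdot}$ is compatible with $=_{\call{E}}$ on ground terms, $\sigma \models_{\call{E}} \call{S}$.

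For uniqueness, the plan is to show that any normal-form $\sigma$ satisfying $\call{S}$ is forced, at each index $i\in\Ind$, by the values $\call{V}(j)\sigma$ for $j<i$. I would split on the type of state (using that the derivation is closed, so there are no reception states):
\begin{itemize}
\item If $i$ is a memory state, the equation $\call{V}(i)\unif t$ with $t\in\call{K}$ ground forces $\call{V}(i)\sigma =_{\call{E}} t$, hence $\call{V}(i)\sigma = \norm{t}$ because $\sigma$ is in normal form.
\item If $i$ is a deduction state with equation $\call{V}(i)\unif f(\call{V}(\alpha_1),\ldots,\call{V}(\alpha_n))$ (and $\alpha_j<i$), then $\call{V}(i)\sigma$ must equal $\norm{f(\call{V}(\alpha_1)\sigma,\ldots,\call{V}(\alpha_n)\sigma)}$.
\item If $i$ is a re-use state, then $\call{V}(i)=\call{V}(j)$ for some $j<i$, so $\call{V}(i)\sigma = \call{V}(j)\sigma$.
\end{itemize}
The minimum of $\Ind$ cannot be a re-use state nor a positive-arity deduction state, so it is either a memory state or a nullary deduction state, and the recurrence has a base case. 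Item~(i) of Lemma~\ref{lem:ASD:properties} ensures that the "defining" equation used at each step is unique, so the recurrence is well defined.

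The only delicate point is that $\call{S}$ also contains equality-test equations of the form $\call{V}(i)\unif\call{V}(j)$, which the inductive recurrence above does not explicitly use. These are handled indirectly: the recurrence uniquely determines a candidate $\sigma$ from the defining equations alone, whereas the existence argument exhibits a normal-form solution $\sigma_0$ of all of $\call{S}$; by the recurrence, $\sigma_0$ coincides with the candidate, so the candidate in fact satisfies every equality test. Thus existence and uniqueness hold simultaneously, and the satisfiability hypothesis is precisely what guarantees the consistency of the defining equations with the equality tests.
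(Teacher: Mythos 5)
Your proof is correct and follows essentially the same route as the paper's: an induction along the totally classified, ordered index set $(\Ind,<)$, using the $o$-completion's normal form to pin down the value at each memory, deduction, and re-use state. The only difference is presentational --- the paper runs existence and uniqueness in a single induction, whereas you separate them and are somewhat more explicit about why the equality-test equations $\call{V}(i)\unif\call{V}(j)$ and the satisfiability hypothesis are needed, which is a point the paper's proof leaves implicit.
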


\begin{proof}
  Since the symbolic derivation $\call{C}=\defsd{}$ is closed is has
  by definition no input states, and thus all states are either
  knowledge, re-use or deduction states.  By induction on the set of
  indexes $\Ind$ ordered by $<$.
  \begin{description}
  \item[Base case:] Assume $i$ is a minimal element in $\Ind$. By
    minimality $i$ cannot be a re-use state. If it is a knowledge
    state then by definition there exists in $\call{S}$ an equation
    $\call{V}(i)\unif t$, with $t$ a ground term in normal form, and
    thus for every unifier $\tau$ of $\call{S}$ we must have
    $\call{V}(i)\tau = t$. If $i$ is a deduction state, and since it
    is minimal, the public symbol employed must be of arity $0$ and
    hence is a constant, \emph{i.e.} again a ground term $t$. In both
    cases there exists a unique ground substitution $\sigma$ in normal
    form defined on \set{\call{V}(i)} and such that any unifier of
    \call{S} is an extension of $\sigma$.
  \item[Induction case:] Assume there exists a unique ground
    substitution $\sigma$ in normal form with support:
    $\condset{\call{V}(j)}{j<i}$ such that any unifier of \call{S} is
    an extension of $\sigma$. If $i$ is a re-use state, we note that
    $\call{V}(i)$ is already in the support of $\sigma$, and we are
    done. If it is a knowledge state, reasoning as in the basic case
    permits us to extend $\sigma$ to $\call{V}(i)$. If it is a
    deduction state then there exists in $\call{S}$ an equation
    $\call{V}(i)\unif f(\call{V}(j_1),\ldots,\call{V}(j_n))$ with
    $j_1,\ldots,j_n < i$ that has to be satisfied by every unifier
    $\theta$ of $\call{S}$. By induction every such unifier has to be
    equal to $\sigma$ on \set{\call{V}(j_1),\ldots,\call{V}(j_n)}.
    Thus for every unifier $\theta$ of $\call{S}$ we have
    $\call{V}(i)\theta=_{\call{E}}
    f(\call{V}(j_1)\theta,\ldots,\call{V}(j_n)\theta)$. By induction
    $f(\call{V}(j_1)\theta,\ldots,\call{V}(j_n)\theta) =_{\call{E}}
    f(\call{V}(j_1)\sigma,\ldots,\call{V}(j_n)\sigma)$. Thus, we have
    $\call{V}(i)\theta =
    \norm{f(\call{V}(j_1)\sigma,\ldots,\call{V}(j_n)\sigma)}$ and
    $\sigma$ can be uniquely extended on $\call{V}(i)$ with
    $\call{V}(i)\sigma =
    \norm{f(\call{V}(j_1)\sigma,\ldots,\call{V}(j_n)\sigma)}$ which is
    again a ground term.
  \end{description}
\end{proof}

By Lemma~\ref{lem:closed:imply:ground}, if a derivation is closed,
then for every $i\in\Ind$ the variable $\call{V}(i)$ is instantiated by a
ground term. Figuratively we  say that a term $t$ is \emph{known at
step $i$} in a closed symbolic derivation if there exists $j\le i$ such
that $\call{V}(j)$ is instantiated by $t$. 

\paragraph{Ground symbolic derivations.}
An important case when considering protocol refutation is the one in
which the attacker cannot alter the messages exchanged among the
honest participants. This case can either be employed to model a
weaker attacker or, when trying to refute a cryptographic protocol, by
guessing first which messages are sent by the attacker, and then by
checking whether these guesses correspond to messages the attacker can
actually send.

\begin{definition}{(Ground symbolic derivation)\label{def:protmod:ground:sd}}
  We say that a symbolic derivation $\call{C}_h=\defsd{_h}$ is a
  \emph{ground symbolic derivation} whenever $\call{S}_h$ is
  satisfiable and there exists a ground substitution $\sigma$ such
  that, for every unifier $\tau$ of $\call{S}_h$ and every
  $i\in\Ind_h$ we have $\V_h(i)\sigma=\V_h(i)\tau$.
\end{definition}

In other words the input and output messages of a ground symbolic
derivation are fixed ground terms. We note that since $\call{C}_h$ is
not closed, and in spite of having $\call{S}_h$ satisfiable, it is not
necessarily true that $\call{C}_h^\star\neq\emptyset$. Also a simple
analysis of the case study of the proof of
Lemma~\ref{lem:closed:imply:ground} shows that it suffices to assume
that $\sigma$ is defined only on indexes $i\in\Invar_h$.

\paragraph{Connection.} We express the communication between two
agents represented each by a symbolic derivation by \emph{connecting}
these symbolic derivations. This operation consists in identifying
some input variables of one derivation with some output variables of
the other and contrariwise.  This connection should be compatible with
the variable orderings inherited from each symbolic derivation, as
detailed in the following definition:

\begin{definition}{\label{def:connection:derivation}}
  Let $\call{C}_1$, $\call{C}_2$ be two symbolic derivations with for
  $i\in\set{1,2}$ $\call{C}_i=\defsd{_i}$, with disjoint sets of variables
  and index sets $(\Ind_1,<_1)$ and $(\Ind_2,<_2)$ respectively.  Let
  $I_1,I_2$, be subsets of $\Invar{}_1$, $\Invar{}_2$, and $O_1,O_2$
  be sub-multisets of $\Outvar{}_1$, $\Outvar{}_2$ respectively.

  Assume that there is a monotone bijection $\phi$ from
  $I_1\cup I_2$ to $O_1\cup O_2$ such that $\phi(I_1)=O_2$ and
  $\phi(I_2)=O_1$.  A \emph{connection} of $\call{C}_1$ and $\call{C}_2$ over the
  \emph{connection} function $\phi$, denoted $\call{C}_1\comp_\phi\call{C}_2$ is a symbolic
  derivation
  $$
  \call{C}=(\call{V},\phi(\call{S}_1 \cup \call{S}_2),\call{K}_1\union\call{K}_2,(\Invar{}_1 \cup
  \Invar_2)\setminus (I_1\cup I_2), (\Outvar_1\cup\Outvar_2)\setminus
  (O_1 \cup O_2))
  $$ 
  where:
  \begin{itemize}
  \item $(\Ind,<)$ is defined by:
    \begin{itemize}
    \item $\Ind=(\Ind_1\setminus I_1) \cup (\Ind_2\setminus I_2)$;
    \item $<$ is the transitive closure of the relation: $<_1 \cup
      <_2$;
    \end{itemize}
  \item $\phi$ is extended to a renaming of variables in
    $\Var{\call{V}_1}\cup\Var{\call{V}_2}$ such that $\phi(\call{V}_1(i))=\call{V}_2(j)$
    (resp.  $\phi(\call{V}_2(i))=\call{V}_1(j)$) if $i\in I_1$ (resp. $I_2$) and
    $\phi(i)=j$
  \end{itemize}
  When the exact connection function in a connection does not matter, is
  uniquely defined, or is described otherwise, we will omit the
  subscript and denote it $\call{C}_1\comp \call{C}_2$.
\end{definition}

A connection is \emph{satisfiable} if the resulting symbolic
derivation is satisfiable. It can easily computed, when it exists, by
considering increasing sequences of states in each symbolic derivation
and mapping input states of one SD with output states of the other.

\begin{example}{\label{ex:compos}}
  Let $\call{C}_h$ be the symbolic derivation in
  Example~\ref{ex:narration}:
  $$
  \begin{array}{rcl}
    \Ind_h &=& \set{0,\ldots,8}\\ 
    \call{V}{}_h&=& i\in\Ind \mapsto x_i\\
    \call{K}{}_h  &=&\set{A,B,\pk A,\pk B, \sk B}\\
    \Invar{}_h &=&\set{5}\\
    \Outvar{}_h &=&\set{0,\ldots,8,8}\\
    {\call{S}}_h&=&\{x_0\unif{}A,x_1\unif{}B,x_2\unif{}\pk A,x_3\unif{}\pk B, x_4\unif{} \sk B\\
    &&x_6\unif{}\pdec{x_5}{x_4}, x_7\unif{}f(x_6), x_8\unif{} \penc{x_7}{x_2}\}
  \end{array}
  $$
  We model the initial knowledge of the intruder with another symbolic
  derivation $\call{C}_K$:
  $$
  \begin{array}{rcl}
    \Ind_K &=& \set{0^k,\ldots,3^k}\\ 
    \call{V}_K&=& i^k\in\Ind_k \mapsto y_i\\
    \call{K}{}_K  &=&\set{A,B,\pk A,\pk B}\\
    \Invar{}_K &=&\emptyset\\
    \Outvar{}_K &=&\Ind_K\\
    {\call{S}}_K&=&\set{y_{0}\unif{}A,y_{1}\unif{}B,y_{2}\unif{}\pk A,y_3\unif{}\pk B}
  \end{array}
  $$
  and we let $\call{C}'$ be the following derivation:
  $$
  \begin{array}{rcl}
    \Ind' &=& \set{0',\ldots,8} \\ 
    \call{V}'{} &=& i'\in\Ind' \mapsto z_i\\
    \call{K}{} &=&\set{n} \subset \Nonces\\
    \Invar{}'  &=&\set{0',\ldots,3',8'}\\
    \Outvar{}' &=&\set{5'}\cup\Ind'\\
    {\call{S}}'      &=&\{z_{4}\unif{}n,z_{5}\unif{}\penc{z_{4}}{z_{3}}, \\
    & & z_{6}\unif{}f(z_{4}), z_{7} \unif{} \penc{z_6}{z_2},z_8\unif{}z_7\}
  \end{array}
  $$
  Let $\phi$ be the application from $0^k,\ldots, 3^k, 5',8$ to
  $0',\ldots, 3', 5, 8'$ respectively and $\psi$ be a function of
  empty domain.  Then we have $(\call{C}_h\comp_\psi \call{C}_K)
  \comp_{\phi} \call{C}'$:
  $$
  \begin{array}{rcl}
    \Ind &=& \set{0,\ldots,4,0^k,\ldots,3^k,5',6',7',6,7,8}\\ 
    \call{V}{} &=& {\call{V}_h}_{|\Ind} \cup {\call{V}_K}_{|\Ind}\cup {\call{V}'}_{|\Ind}\\
    \call{K}{} &=&\set{A,B,\pk A, \pk B, \sk B,n} \\
    \Invar{}  &=&\emptyset\\
    \Outvar{} &=&\Ind\cap\Ind'\\
    {\call{S}}      &=&\{x_0\unif{}A,x_1\unif{}B,x_2\unif{}\pk A,x_3\unif{}\pk B, x_4\unif{} \sk B\\
    &&x_6\unif{}\pdec{x_5}{x_4}, x_7\unif{}f(x_6), x_8\unif{} \penc{x_7}{x_2}\\
    &&y_{0}\unif{}A,y_{1}\unif{}B,y_{2}\unif{}\pk A,y_3\unif{}\pk B\\
    && z_{5}\unif{}n,z_{6}\unif{}\penc{z_{5}}{z_{3}},\\
    && z_{7}\unif{}f(z_{5}), z_{8} \unif{} \penc{z_7}{z_2},z_9\unif{}z_8\}
  \end{array}
  $$
  with the ordering:
  $$
  \begin{array}{l}
    0 < 1 < 2 < 3 < 4 < 5' < 6 < 7 < 8\\
    0^k < \ldots < 3^k <  4' < \ldots < 7'<8\\
  \end{array}
  $$
\end{example}

The connection of two symbolic derivations $\call{C}_1$ and
$\call{C}_2$ identifies variables in the input of one with variables
in the output of the other. Variables that have been identified are
removed from the input/output set of the resulting symbolic derivation
\call{C}. The set of equality constraints of \call{C} is the union of
the equality constraints in $\call{C}_1$ and $\call{C}_2$, plus
equalities stemming from the identification of input and output. We
have chosen to have a multiset of output variables to enable the
modeler to specify whether a communication between two participants is
hidden---when the output state occurs only once in the initial output
multiset---or visible---in which case there is more than one
occurrence of the output state in the initial output multiset---to an
external observer.

One easily checks that a connection of two symbolic derivations is
also a symbolic derivation. Also, the associativity of function
composition applied on the connections implies the associativity of
the connection of symbolic derivations. Since connection functions are
bijective, we will also identify $\call{C}\comp\call{C}'$ and $\call{C}'\comp\call{C}$. Thus
when we compose several symbolic derivations, we will freely
re-arrange or remove parentheses.

\paragraph{Traces.} Let $\call{C}_1$ and $\call{C}_2$ be two
\call{I}-symbolic derivations and $\varphi$ be a connection such that
$\call{C}=\call{C}_1\comp_\varphi\call{C}_2=\defsd{}$ is closed and
satisfiable.  Lemma~\ref{lem:closed:imply:ground} implies that there
exists a unique ground substitution $\tau$ in normal form such that
any unifier $\sigma$ of $\call{S}_1\cup\call{S}_2$ is equal to $\tau$
on the image of \call{V}. We denote
\trace{\call{C}'}{\call{C}_1\comp_\varphi\call{C}_2} the restriction
of this substitution $\tau$ to the variables in the sequence of
$\call{C}'$, for
$\call{C}'\in\set{\call{C}_1,\call{C}_2,\call{C}_1\comp_\varphi\call{C}_2}$,
and call it the \emph{trace} of the connection on $\call{C}'$. In the
rest of this paper we will always assume that trace substitutions are
in normal form.

\subsection{Solutions of symbolic derivations}
\label{subsec:symb:sat}

\subsubsection{Honest  and attacker symbolic derivations}

Generally speaking, a \emph{solution} of a symbolic derivation
$\call{C}$ is any couple $(\call{C}',\varphi)$ such that
$\call{C}\comp_\varphi\call{C}'$ is closed and satisfiable. We
specialize this definition for the case of protocol analysis in order
to ensure that every term possessed by the attacker, including her
initial knowledge, has been either leaked by the protocol or is a
nonce she has created. This consideration lead us to consider two
types of symbolic derivations, one that is employed to model honest
agents, and one to model an attacker.

\paragraph{Honest derivations.}
We do not impose constraints on the symbolic derivations representing
honest principals, but for the avoidance of constants in an infinite
set $\Nonces\subseteq \Constants$. These constants are employed to
model new values created by an attacker. We assume that nonces created
by the honest agents are created at the beginning of their execution
and are constants away from \Nonces.

\begin{definition}{(Honest symbolic derivations)}
  A symbolic derivation $\call{C}$ is an \emph{honest symbolic
    derivation} or HSD, if the constants occurring in \call{C}
  are away from \Nonces.
\end{definition}

  \begin{example}{\label{ex:hsd}}
    The symbolic derivation for role B in Example~\ref{ex:narration}
    is honest.
  \end{example}

\paragraph{Attacker derivations.}
We consider an attacker modeled by a symbolic derivation in which 
only the following actions are possible:
\begin{itemize}
\item create a fresh, random value; 
\item receive from and send a message to  one of the honest participant;
\item deduce a new message from the set of already known messages;
\item every state is in \Outvar{} given that the intruder should be
  able to observe his own knowledge;
\item given that we consider an actual execution, the set of states is
  totally ordered.
\end{itemize}
The definition of \emph{attacker} symbolic derivations models these
constraints:

\begin{definition}{\label{def:minimal:sd}(Attacker symbolic derivations)}
  Let $\call{C}=\defsd{}$ be a symbolic derivation. It is an
  \emph{attacker symbolic derivation}, or ASD, if
  \begin{ilitz}
  \item \Ind is a total order, and
  \item \Outvar contains at least one occurrence of each index in
    \Ind, and
  \item \call{K} is a subset of \Nonces.%, and
%  \item $\call{S}$ contains only equations of the form 
%    \begin{description}
%    \item[Test equation:] $\call{V}(i)\unif\call{V}(j)$ for $i,j\in\Ind$;
%    \item[Deduction at state $i$:] $\call{V}(i)\unif{}f(\call{V}(i_1),\ldots,\call{V}(i_n))$,  with
%      $i_1,\ldots,i_n<i$, and $f$ a public symbol;
%    \item[Nonce creation at state $i$:] $\call{V}(i)\unif{} c_i$ with $c_i\in\Nonces$.
%    \end{description}
  \end{ilitz}
\end{definition}

The fact that the initial knowledge of the attacker is empty but for
the nonces is not a restriction when analyzing protocols, as one can
see from Ex.~\ref{ex:compos}.

\begin{example}{\label{ex:asd}}
  The following derivation $\call{C}'$ is an ASD for the same
  deduction system as Example~\ref{ex:narration}:
  $$
  \begin{array}{rcl}
    \Ind' &=& \set{0',\ldots,8} \\ 
    \call{V}'{} &=& i'\in\Ind' \mapsto z_i\\
    \call{K}{} &=&\set{n} \subset \Nonces\\
    \Invar{}'  &=&\set{0',\ldots,3',8'}\\
    \Outvar{}' &=&\set{5'}\cup\Ind'\\
    {\call{S}}'      &=&\{z_{4}\unif{}n,z_{5}\unif{}\penc{z_{4}}{z_{3}}, \\
    & & z_{6}\unif{}f(z_{4}), z_{7} \unif{} \penc{z_6}{z_2},z_8\unif{}z_7\}
  \end{array}
  $$
  Informally the ASD expresses that the attacker receives some key
  $k$, creates a nonce $n$, sends the encrypted nonce to a role $B$ as
  in Example~\ref{ex:narration}.  Then the attacker tries to check
  that applying $f$ to $n$ gives a term equal to the decryption of B's
  response.
\end{example}

\paragraph{Solutions of a symbolic derivation.}
Given a symbolic derivation $\call{C}_h$ we denote $\call{C}_h^\star$
the set of couples $(\call{C},\varphi)$ where $\call{C}$ is an ASD and
$\varphi$ is a connection function between $\call{C}$ and $\call{C}_h$
such that $\call{C}_h\comp\call{C}$ is closed and satisfiable. In that
case we say that \call{C} is a solution of $\call{C}_h$.

\begin{example}{\label{ex:solasd}}
  In Example~\ref{ex:compos} the ASD $\call{C}'$ is a solution of
  $\call{C}_h\comp\call{C}_K$ since
  $(\call{C}_h\comp_\psi\call{C}_K)\comp_{\phi} \call{C}'$ is closed
  and ${\call{S}}$ is satisfiable (by simply propagating the
  equalities $x_0=A,x_1=B,\ldots$).
\end{example}

\subsection{Decision problems}
\label{subsec:decision:problems}

\paragraph{Satisfiability.} The problem of the existence of a secrecy
attack on a bounded protocol execution---shown to be NP-complete
in~\cite{RT01} for the standard Dolev-Yao deduction system---is
equivalent to the satisfiability problem below.

\begin{comment}
  In the formulation of decision problems, we employ a constant
  \textsc{Sat} which is the encoding of the content of the tape at the
  end of a successful computation of a terminating Turing machine when
  the tape contains initially and encoding of the input that satisfies
  the conditions written after \textsc{Sat} iff.
\end{comment}

\begin{decisionproblem}{\call{I}-Satisfiability}
  \entreeu{a HSD \call{C}} %
  \sortie{\textsc{Sat} iff $\call{C}^\star\neq\emptyset$}
\end{decisionproblem}

A variant of \call{I}-satisfiability is its restriction to set of
inputs $\call{C}$ which are ground symbolic derivations, and that we
call \call{I}-ground satisfiability.
\begin{decisionproblem}{Ground \call{I}-Satisfiability}
  \entreeu{a ground HSD \call{C}} %
  \sortie{\textsc{Sat} iff $\call{C}^\star\neq\emptyset$}
\end{decisionproblem}

\paragraph{Equivalence.} Let us now define the equivalence of HSDs
\textit{w.r.t.}  an active intruder.
\begin{definition}{\label{def:symbolic:equivalence}}
  Two HSDs $\call{C}_h$ and $\call{C}_h'$ are \emph{symbolically
    equivalent} iff $\call{C}_h^\star={\call{C}_h'}^\star$.
\end{definition}

\begin{decisionproblem}{\call{I}-Symbolic Equivalence}
  \entreeu{Two honest \call{I}-symbolic derivations $\call{C}_h$ and $\call{C}_h'$} %
  \sortie{\textsc{Sat} iff ${\call{C}_h}^\star={\call{C}_h'}^\star$.}
\end{decisionproblem}

Again it is possible to define a ground version of the
\call{I}-symbolic equivalence problem when the input consists in two
ground symbolic derivations. One can easily encode static equivalence
problems into ground $\call{I}$-Symbolic Equivalence problems by
publishing every constant not hidden in the frame.
\begin{decisionproblem}{Ground \call{I}-Symbolic Equivalence}
  \entreeu{Two honest \call{I}-ground symbolic derivations
    $\call{C}_h$ and $\call{C}_h'$} %
  \sortie{\textsc{Sat} iff ${\call{C}_h}^\star={\call{C}_h'}^\star$.}
\end{decisionproblem}

\paragraph{Remark.} Another possible definition of the set of
solutions would be a set of ASDs, without mention of the connection
function.  The equivalence relation would have been distinct since in
that case an ASD can be in two sets of solutions but without the same
connection function. However, this would have had no impact on our
decidability result. Our choice in this paper corresponds to
diff-equivalence between
biprocesses~\cite{BlanchetAbadiFournetLICS05}: the diff operator
defines a bijection between the in- and output states of two processes
derivations, and the equality of the sets of solutions is understood
modulo this one-to-one function.

\section{Finitary Deduction Systems}
\label{equiv:sec:finitary}

An equational theory \call{E} is \emph{finitary} whenever every
\call{E}-unification system has a finite set of more general unifiers.
We define an analog for deduction systems \textit{w.r.t.} symbolic
derivations rather than  equational theories \textit{w.r.t.}
unification systems. In the rest of this paper, we consider
\emph{effective} finitary deduction systems, \textit{i.e.} deduction
systems for which it is possible to compute a finite set of ``most
general attacks''.

\subsection{Stutter-free ASDs}
\label{equiv:subsec:aware:stutter:free}

We say that an ASD $\call{C_I}$ is \emph{well-formed} \textit{w.r.t.}
a HSD $\call{C}_h$ and a connection $\varphi$ if, in the connection
$\call{C}_h\comp_\varphi\call{C_I}$, a deduction subsequently applied
on a deduced term $t$, or a re-use of the term $t$ is always applied
by referring to the state in which $t$ was first deduced.

\begin{definition}{(Well-formed ASD)\label{def:well:formed}}
  Let $\call{C}_h$ be a HSD and consider an ASD $\call{C_I} = \defsd{_\call{I}}$ 
  such that $(\call{C_I},\varphi)\in\call{C}_h^\star$, and
  $\sigma=\trace{\call{C_I}}{\call{C_I}\comp_\varphi\call{C}_h}$. We
  say that $\call{C_I}$ is \emph{$(\call{C}_h,\varphi)$-well-formed}
  if for every deduction states $i$, for every state
  $j\in\Ind_{\call{I}}$ with $i<j$ we have
  $\call{V_I}(i)\sigma=\call{V_I}(j)\sigma$ implies that
  \begin{itemize}
  \item either $\call{V_I}(i)=\call{V_I}(j)$, \textit{i.e.} $j$ is a
    re-use state;
  \item or there is no equation $x\unif
    f(\ldots,\call{V_I}(j),\ldots)$ in $\call{S_I}$ and $j$ is not an
    emission state.
  \end{itemize}
\end{definition}

This restriction is mostly syntactic, and can be assumed
\textit{w.l.o.g.} for our purpose, as shown by the
Lemma~\ref{lem:well:formed}.

Our aim is the reduction of equivalence problems to reachability
problems for finitary deduction systems. In the latter problems, one
only considers which terms are deducible by the attacker. Hence the
following definitions that will be employed to split an ASD into a
\emph{deduction only} part solving a reachability problem and a
\emph{testing} part modeling the possible tests.

\begin{definition}{(Deduction-only ASD)\label{def:deduction:only}}
  An ASD $\call{C_I}=\defsd{_{\call{I}}}$ is \emph{deduction-only} if
  $\call{S_I}$ contains no equation $\call{V_I}(i)\unif\call{V_I}(j)$.
\end{definition}

\begin{definition}{(Testing ASD)\label{def:testing}}
  An ASD $\call{C_I}=\defsd{_{\call{I}}}$ is \emph{testing} if
  $\call{K_I}=\emptyset$.
\end{definition}

\begin{definition}{(Stutter-free ASDs)\label{def:stutter:free}}
  A well-formed deduction-only ASD is said to be \emph{stutter-free}.
\end{definition}

\begin{comment}
  \begin{definition}{(Classification of ASDs)}
    An ASD $\call{C_I}=\defsd{_{\call{I}}}$ is \emph{deduction-only}
    if $\call{S_I}$ contains no equation
    $\call{V_I}(i)\unif\call{V_I}(j)$, \emph{stutter-free} if it is
    well-formed and deduction-only, and \emph{testing} if
    $\call{K_I}=\emptyset$.
  \end{definition}
\end{comment}

Given a HSD $\call{C}_h$ we denote ${\call{C}_h}^\sfree$ the set of
stutter-free solutions of $\call{C}_h$. These ASDs have the special
property that a connection cannot be unsatisfiable because of a
rejection by the attacker. Formally speaking, we have the following
proposition.

\begin{proposition}{\label{equiv:prop:stutter:free}}
  Let $\call{C}_\call{I}=\defsd{_\call{I}}\in\call{C}_h^\star$ be a
  deduction-only ASD. Then for any ground substitution $\sigma$ of
  domain $\Invar_\call{I}$ the unification system
  $\call{S}_\call{I}\sigma$ is satisfiable in the empty theory.
\end{proposition}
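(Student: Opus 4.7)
The plan is to build, by induction on the totally ordered index set $(\Ind_\call{I},<)$ of the ASD, a ground substitution $\tau$ extending $\sigma$ such that every equation of $\call{S_I}$ is satisfied \emph{syntactically} (hence also in the empty theory). First I will set $\tau$ equal to $\sigma$ on $\Invar_\call{I}$, and then process the remaining indices one at a time in increasing order.

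For the inductive step at index $i$, I will use the four-way case split from Definition~\ref{def:symbolic:derivation}. If $i$ is a reception state, $\call{V_I}(i)\tau$ is already defined by $\sigma$ and no equation of $\call{S_I}$ constrains $\call{V_I}(i)$ on its left-hand side. If $i$ is a memory state, the relevant equation reads $\call{V_I}(i)\unif t$ with $t$ a ground term in $\call{K_I}\subseteq\Nonces$, so I will put $\call{V_I}(i)\tau = t$. If $i$ is a re-use state, $\call{V_I}(i)$ is literally the same variable as some $\call{V_I}(j)$ with $j<i$ and so is already assigned by the inductive hypothesis. Finally, if $i$ is a deduction state, the defining equation has the shape $\call{V_I}(i)\unif f(\call{V_I}(\alpha_1),\ldots,\call{V_I}(\alpha_n))$ with $\alpha_k<i$, and I will put $\call{V_I}(i)\tau = f(\call{V_I}(\alpha_1)\tau,\ldots,\call{V_I}(\alpha_n)\tau)$, which makes the equation hold literally.

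The key consistency check is that no two cases assign conflicting values to the same variable. I will appeal to Lemma~\ref{lem:ASD:properties}: each variable carries at most one equation of the form $\call{V_I}(i)\unif f(\ldots)$ in $\call{S_I}$, and this equation is attached to the minimal index at which the variable appears, so a re-use index never competes with a fresh definition. Moreover $\call{V_I}$ is injective on non-re-use indices, which rules out any accidental collision between the memory- and deduction-equations. The main potential obstacle is the risk that some further constraint in $\call{S_I}$ enforces a specific relation between input values; this is precisely what the deduction-only hypothesis forbids, since it removes every equation of the form $\call{V_I}(i)\unif\call{V_I}(j)$ from $\call{S_I}$. Consequently, the only equations that have to be satisfied are the memory and deduction equations above, each of which becomes a literal identity after the inductive definition of $\tau$, yielding $\tau\models \call{S_I}\sigma$ in the empty theory, as required.
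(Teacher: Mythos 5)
Your proof is correct and takes essentially the same route as the paper's: the paper notes that, using the state ordering as the variable ordering and Lemma~\ref{lem:ASD:properties} to guarantee at most one defining equation per variable, the deduction-only hypothesis leaves $\call{S}_\call{I}\sigma$ in solved form and hence satisfiable, while your induction along the total order on $\Ind_\call{I}$ is just the explicit construction of the unifier that such a solved form admits. The key ingredients (absence of equations $\call{V}_\call{I}(i)\unif\call{V}_\call{I}(j)$, uniqueness and orientation of the remaining equations) coincide, so there is nothing to add.
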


\begin{proof}
  We remind that a unification system $\call{S}$ is in solved form in
  the empty theory if and only if there exists an ordering $<_u$ on
  variables such that $\call{S}$ contains, for each variable $x$, at
  most one equation $x\unif{}t$ and if for every $y\in\Var{t}$ we have
  $y<_u x$.  First let us notice that since $\call{C}_\call{I}$ is
  deduction-only, $\call{S}_\call{I}$ does not contain any equation
  $\call{V}_\call{I}(i)\unif \call{V}_\call{I}(j)$ with
  $\call{V}_\call{I}(i)\neq \call{V}_\call{I}(j)$.

  By definition $\call{S}_\call{I}$ contains exactly one equation
  $\call{V}_\call{I}(i)\unif t$ if $i$ is not an input or the re-use
  of an input state, and none otherwise.  In the former case we can
  assume that for a mgu $\theta$ of $\call{S}$ we have
  $\call{V}(i)\theta=\call{V}(i)$. Using the ordering on states as the
  ordering $<_u$, Lemma~\ref{lem:ASD:properties} implies that
  $\call{S}_\call{I}$ is in solved form, and adding to
  $\call{S}_\call{I}$ equations $\call{V}_\call{I}(i)\unif t_i$, for
  $i\in\Invar_\call{I}$ and $t_i$ a ground term thus leads to a
  unification system also in solved form.
\end{proof}

\subsection{Sets of solutions}

\paragraph{Outline.}
We prove in this section that ASDs are such that, when replacing a
constant in \Nonces{} by the result of a sequence of compositions
(this operation is called \emph{opening}) we obtain another ASD which
can be connected to all the HSDs the original ASD could be connected
to (Lemma~\ref{equiv:lem:replace:constant:by:term}). This notion of
replacement acts as the instantiation of a unifier modulo an
equational theory. Accordingly we define from it a well-founded
ordering on ASDs mimicking the role of the instantiation ordering on
unifiers. Finally, we prove that given a set of ASDs $S$, the
inclusion $S\subseteq \call{C}_h^\star$ can be check by testing only
the minimal ASDs in $S$ (Lemma~\ref{equiv:lem:reduction:sol}).

\paragraph{Opening of symbolic derivations.} If $\call{C}=\defsd{}$
and $C\subseteq \Nonces\cap \call{K}$ is a set such such that
$C\cap\Sub{\call{K}\setminus C}=\emptyset$, we \emph{open \call{C} on
  $C$}, and denote the operation $\open{\call{C}}{C}$, when for each
$c\in C$:
\begin{itemize}
\item If $i\in\Ind$ is the first knowledge state with
  $\call{V}(i)\unif c\in\call{S}$, we remove this equation from
  $\call{S}$ and add $i$ to the input states;
\item we replace all occurrences of $c$ in \call{C} by $\call{V}(i)$.
\end{itemize}
We note that the set $\call{K}'$ obtained from \call{K} after the
replacement is still a set of ground terms since
$C\cap\Sub{\call{K}\setminus C}=\emptyset$, and thus the result of the
operation is still a symbolic derivation. Also, $\call{C}$ is an ASD,
then so is $\open{\call{C}}{C}$.

\begin{lemma}{\label{equiv:lem:replace:constant:by:term}}
  Let $\call{C}_\call{I}\in\call{C}_h^\star$ with
  $\call{C}_\call{I}=\defsd{_\call{I}}$, let
  $C\subseteq\call{K}_\call{I}$ and let
  $\call{C}_c\in{\call{C}_h'}^\sfree$ for some HSD $\call{C}_h'$. If a
  connection $\call{C}_c\comp
  \call{C}_h\comp\open{\call{C}_\call{I}}{C}$ is closed then it is
  satisfiable.
\end{lemma}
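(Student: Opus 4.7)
The plan is to construct a ground substitution $\sigma$ satisfying the composition $\call{C}' := \call{C}_c \comp \call{C}_h \comp \open{\call{C}_\call{I}}{C}$, by suitably modifying the known solution $\tau$ of $\call{C}_h\comp\call{C}_\call{I}$---which exists uniquely in normal form by Lemma~\ref{lem:closed:imply:ground} since $(\call{C}_\call{I},\varphi) \in \call{C}_h^\star$---so that each opened nonce is replaced by the value that $\call{C}_c$ (or another component) supplies at the corresponding new input. The crucial observation is that each $c \in C \subseteq \Nonces$ is a free constant whose only occurrence in $\call{S}_h \cup \call{S}_\call{I}$ is the knowledge equation $\call{V}_\call{I}(i_c) \unif c$, where $i_c$ denotes the state that opening turns into an input state. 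Opening removes precisely this equation; $\call{C}_h$ being an HSD its constants avoid $\Nonces$; and, after renaming its nonces, we may assume $\call{C}_c$ also avoids $C$. Hence each $c \in C$ is away from the whole unification system $\call{S}_{\call{C}_c} \cup \call{S}_h \cup \call{S}_{\open{\call{C}_\call{I}}{C}}$ of $\call{C}'$.

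I then define $\sigma$ by induction on the well-founded state ordering of $\call{C}'$, mimicking the proof of Lemma~\ref{lem:closed:imply:ground}. A memory, deduction, or re-use state receives its value from the equation governing it and the previously computed values; a state that was an input in one component but is identified by the connection with an earlier output state $j$ in $\call{C}'$ inherits the value $\sigma(\call{V}(j))$. This yields a ground substitution, and in particular a value $t_c := \sigma(\call{V}_\call{I}(i_c))$ for each $c \in C$.

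It remains to verify that $\sigma$ satisfies every equation of $\call{C}'$. Equations of $\call{S}_{\call{C}_c}$ follow from Proposition~\ref{equiv:prop:stutter:free}: since $\call{C}_c$ is deduction-only (being stutter-free), for the ground assignment to $\Invar_{\call{C}_c}$ induced by $\sigma$, $\call{S}_{\call{C}_c}$ is satisfiable in the empty theory, and its unique solution is precisely the one produced by the inductive construction. For the equations of $\call{S}_h \cup \call{S}_{\open{\call{C}_\call{I}}{C}}$, the substitution $\tau$ satisfies $\call{S}_h\cup\call{S}_\call{I}$ and hence its subset obtained by dropping the removed knowledge equations; iterating Lemma~\ref{lem:unif:replacement} once per $c \in C$---using that each $c$ is a free constant away from this remaining system---shows that $\tau \prod_{c \in C} \delta_{c,t_c}$ also satisfies it. A second application of Lemma~\ref{lem:closed:imply:ground}, restricted to the variables of $\call{C}_h$ and $\open{\call{C}_\call{I}}{C}$ with their inputs fixed to the values supplied by the connection, shows that this modified substitution coincides with $\sigma$ on those variables. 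The connection identifications are satisfied by construction of $\sigma$. The main delicate point is the well-foundedness of the joint recursive definition across the three components, which rests on $\call{C}'$'s state ordering being a genuine partial order---a prerequisite for $\call{C}'$ to be a valid symbolic derivation at all.
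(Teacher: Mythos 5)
Your proof is correct and follows essentially the same route as the paper's: both handle the equations of $\call{C}_c$ via Proposition~\ref{equiv:prop:stutter:free}, define each $t_c$ as the value supplied at the opened input state by the full connection, and transfer satisfaction of the remaining equations of $\call{S}_h\cup\call{S}_{\call{I}}$ from the original solution by iterating Lemma~\ref{lem:unif:replacement} with the replacements $\delta_{c,t_c}$, together with an induction on the states of the connection. The only difference is presentational: you build the satisfying substitution explicitly (re-running the construction of Lemma~\ref{lem:closed:imply:ground}) before verifying it, whereas the paper argues directly on traces.
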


\begin{proof}
  By Proposition~\ref{equiv:prop:stutter:free} %the substitution
  $\trace{\call{C}_c}{\call{C}_c\comp\call{C}_h\comp\open{\call{C}_\call{I}}{\set{c}}}$
  satisfies $\call{S}_c$. Since $\call{C}_\call{I}$ is an ASD we have
  $C\cap\Sub{\call{K}\setminus C}=\emptyset$, and thus
  $C\cap\Sub{\call{S}_h}=\emptyset$. Let us denote
  $\call{S}_\call{I}'$ the unification system $\call{S}_\call{I}$ in
  which the equations $x\unif c$ with $c\in C$ are removed. For any
  substitution $\sigma$ and any constant $c\in C$,
  Lemma~\ref{lem:unif:replacement} and
  $\sigma\models_{\call{E}}\call{S}_h\comp\call{S}_\call{I}'$ imply
  $\sigma\delta_{c,t}\models_{\call{E}}\call{S}_h\comp\call{S}_\call{I}'$.

  Let
  $\sigma'=\trace{\call{C}_\call{I}}{\call{C}_c\comp\call{C}_h\comp\open{\call{C}_\call{I}}{C}}$.
  For each memory state $i\in\Ind_\call{I}$ that contains a constant
  $c\in C$ we let $t_c=\call{V}_\call{I}(i)\sigma'$. We define
  $\delta$ as the replacement of each constant $c\in C$ by the term
  $t_c$.

  By induction on the indexes of the connection
  $\call{C}_c\comp\call{C}_h\comp\open{\call{C}_\call{I}}{C}$ we have:
  $$
  \trace{\call{C}_c\comp\call{C}_h\comp\open{\call{C}_\call{I}}{C}}
  {\call{C}_c\comp\call{C}_h\comp\open{\call{C}_\call{I}}{C}} =
  \trace{\call{C}_h\comp\call{C}_\call{I}}{\call{C}_h\comp\call{C}_\call{I}}\delta
  $$
  Thus every equation in $\call{S}_h\cup\call{S}_\call{I}$ (minus the
  removed memory equations) is satisfied by the composition with
  $\call{C}_c$.  Since every equation in its unification system is
  satisfied the connection
  $\call{C}_c\comp\call{C}_h\comp\open{\call{C}_\call{I}}{C}$ is
  satisfiable.
\end{proof}

\paragraph{Ordering on symbolic derivations.} Consider two symbolic
derivations:
$$
\left\lbrace
  \begin{array}{rcl}
    \call{C}_\call{I}&=&\defsd{_\call{I}}\\
    \call{C}_\call{I}'&=&\defsd{_\call{I}'}\\
  \end{array}
\right.
$$
We say that $\call{C}_\call{I}\le\call{C}_\call{I}'$ if:
\begin{itemize}
\item there exists $C\subseteq\call{K}_\call{I}$, a stutter-free
  symbolic derivation $\call{C}_C$ and a connection $\varphi$ such
  that
  $\call{C}_C\comp_\varphi\open{\call{C}_\call{I}}{C}={\call{C}_\call{I}'}$
  modulo a renaming of variables;
\item or there exists a set of memory states
  $I\subseteq\Ind_\call{I}'$ such that $C_\call{I}$ is equal to
  $\call{C}_\call{I}''=\defsd{_\call{I}''}$ where:
  \begin{itemize}
  \item $\call{V}_\call{I}''$ is the restriction of
    $\call{V}_\call{I}'$ to the domain $\Ind_\call{I}'\setminus I$
  \item and
    $\call{S}_\call{I}''=\call{S}_\call{I}'\setminus\set{\call{V}_\call{I}'(i)\unif
      c_i}_{i\in I}$.
  \end{itemize}
\end{itemize}
We say that $\call{C}_\call{I},\call{C}_\call{I}'$ are
\emph{equivalent modulo a renaming of nonces}, and denote
$\call{C}_\call{I} \equiv \call{C}_\call{I}'$, whenever there exists
$C\subseteq\call{K}_\call{I}$, a stutter-free symbolic derivation
$\call{C}_C$ \emph{with only memory states}, and a connection
$\varphi$ such that
$\call{C}_C\comp_\varphi\open{\call{C}_\call{I}}{C}={\call{C}_h'}$.
Given a set $S$ of ASDs we denote $\min_<(S)$ the set of ASDs in $S$
that are minimal in $S$ modulo renaming of nonces.
\begin{comment}
  Since $\call{C}_\call{I}''$ is a symbolic derivation, we note that
  the memory states of $\call{C}_\call{I}'$ that are removed are never
  re-used nor employed in any deduction. We also note that
  $\call{C}\le\call{C}'$ implies that either:
  \begin{itemize}
  \item $\call{C}$ has strictly less deduction states than
    $\call{C}'$, and less states;
  \item \call{C} has strictly less states than \call{C}';
  \item or $\call{C}$ and $\call{C}'$ are equivalent modulo a
    renaming of nonces.
  \end{itemize}
  it is clear that $<$ is a well-founded ordering relation modulo this
  renaming.
\end{comment}

Since $\call{C}\le\call{C}'$ implies that either:
\begin{ilitz}
\item $\call{C}$ has strictly less deduction states than $\call{C}'$,
  and less states,
\item \call{C} has strictly less states than \call{C}',
\item or $\call{C}$ and $\call{C}'$ are equivalent modulo a renaming
  of nonces,
\end{ilitz}
it is clear that $<$ is a well-founded ordering relation modulo this
renaming.

\begin{lemma}{\label{equiv:lem:reduction:sol}}
  Let $S$ be a set of ASDs and $\call{C}_h$ be a HSD. If
  $\min_<(S)\subseteq {\call{C}_h}^\star$ then $S\subseteq
  {\call{C}_h}^\star$.
\end{lemma}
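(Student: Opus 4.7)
The plan is to exploit the well-foundedness of $<$ modulo renaming of nonces in order to reduce an arbitrary element of $S$ to a minimal one, and then to lift the solution from the minimal element back up to the arbitrary one using the two clauses of the definition of $\le$, relying on Lemma~\ref{equiv:lem:replace:constant:by:term} to handle the non-trivial case.

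First, I would fix an arbitrary $\call{C}_\call{I}\in S$. The set $\{\call{C}'\in S \mid \call{C}'\le \call{C}_\call{I}\}$ is non-empty (it contains $\call{C}_\call{I}$), and by well-foundedness of $<$ modulo renaming of nonces it has a minimal element $\call{C}^{*}$ (up to equivalence). This $\call{C}^{*}$ lies in $\min_<(S)$, hence by hypothesis $\call{C}^{*}\in\call{C}_h^\star$: there is a connection $\psi$ such that $\call{C}_h\comp_\psi\call{C}^{*}$ is closed and satisfiable.

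It then suffices to prove the one-step lifting property: if $\call{C}\in\call{C}_h^\star$ and $\call{C}\le\call{D}$ via a single clause of the definition of $\le$, then $\call{D}\in\call{C}_h^\star$. The full claim follows by iterating this along the (finite) chain of atomic steps witnessing $\call{C}^{*}\le \call{C}_\call{I}$. Two cases arise.

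In the \emph{memory-state} clause, $\call{D}$ is $\call{C}$ together with a finite set $I$ of additional memory states satisfying fresh equations $\call{V}_\call{D}(i)\unif c_i$ for $i\in I$; by Lemma~\ref{lem:ASD:properties} and the fact that they are memory states, these indices are neither re-used nor arguments of any deduction, so the connection with $\call{C}_h$ remains closed and any satisfying substitution for $\call{C}_h\comp\call{C}$ extends trivially to one for $\call{C}_h\comp\call{D}$ by setting $\call{V}_\call{D}(i)\mapsto c_i$. In the \emph{opening-and-connection} clause we have $\call{D}=\call{C}_C\comp_\varphi\open{\call{C}}{C}$ for some $C\subseteq\call{K}_{\call{C}}$ and some stutter-free $\call{C}_C$. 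Since $\call{D}$ is itself an ASD, reassociating connections yields $\call{C}_h\comp\call{D}=\call{C}_C\comp\call{C}_h\comp\open{\call{C}}{C}$, which is closed as soon as $\call{C}_h\comp\call{C}$ is closed (the new inputs of $\open{\call{C}}{C}$ produced by the opening are exactly those connected to the outputs of $\call{C}_C$ by $\varphi$). Satisfiability is then the exact content of Lemma~\ref{equiv:lem:replace:constant:by:term} applied to $\call{C}^{*}$, $C$, and $\call{C}_C$.

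The only real subtlety is bookkeeping: one must check that in Case 1 the associativity of connection gives a composition to which Lemma~\ref{equiv:lem:replace:constant:by:term} applies verbatim, i.e.\ that $\call{C}_C$ can play the role of the stutter-free solution $\call{C}_c\in{\call{C}_h'}^\sfree$; this amounts to observing that being stutter-free already delivers, via Proposition~\ref{equiv:prop:stutter:free}, the satisfiability property used inside the proof of Lemma~\ref{equiv:lem:replace:constant:by:term}, regardless of the choice of the auxiliary HSD $\call{C}_h'$. Once this is in place, the argument is a direct unwinding of the inductive chain, concluding that $\call{C}_\call{I}\in\call{C}_h^\star$; since $\call{C}_\call{I}\in S$ was arbitrary, $S\subseteq\call{C}_h^\star$.
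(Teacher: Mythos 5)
Your proof is correct and follows essentially the same route as the paper's: pick an arbitrary element of $S$, descend to a minimal element of $S$ below it, and lift membership in ${\call{C}_h}^\star$ back up using Lemma~\ref{equiv:lem:replace:constant:by:term}. If anything, your version is more careful than the paper's, which invokes only the first clause of the definition of $\le$ and tacitly assumes a single opening-and-connection step from a minimal element suffices, whereas you decompose $\call{C}^{*}\le\call{C}_\call{I}$ into atomic steps, treat the memory-state clause explicitly, and flag the bookkeeping needed to let the stutter-free derivation of the ordering play the role of $\call{C}_c$ in Lemma~\ref{equiv:lem:replace:constant:by:term}.
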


\begin{proof}
  Assume $\min_<(S)\subseteq {\call{C}_h}^\star$ and let
  $\call{C}_\call{I}$ be in $S$. By definition of the ordering, first
  point, there exists a derivation $\call{C}_\call{I}'\in \min_<(S)$,
  a set of constants $C$, and a stutter-free derivation $\call{C}_c$
  such that
  $\call{C}_c\comp\open{\call{C}_\call{I}'}{C}=\call{C}_\call{I}$.  By
  hypothesis we have $\call{C}_\call{I}'\in{\call{C}_h}^\star$. By
  Lemma~\ref{equiv:lem:replace:constant:by:term} this implies that
  $\call{C}_\call{I}=\call{C}_c\comp\open{\call{C}_\call{I}'}{C}$ is
  also in ${\call{C}_h}^\star$.
\end{proof}

\paragraph{Complete sets of solutions.}
The ordering $<$ plays the same role \textit{w.r.t.} the solutions of
a HSD as the instantiation ordering on substitutions \textit{w.r.t.}
the solutions of an unification system. In particular the traditional
notion of most general unifier is translated into a notion of minimal
solution.

\begin{definition}{\label{equiv:def:complete:set}(Complete set of solutions)}
  A set $\Sigma$ of ASDs is a \emph{complete set of solutions} of an
  HSD $\call{C}_h$ whenever:
  \begin{itemize}
  \item $\Sigma\subseteq\call{C}_h^\star$;
  \item for every ASD $\call{C}_\call{I}\in\call{C}_h^\sfree$ there
    exists an ASD $\call{C}_m\in\Sigma$ and a stutter free ASD
    $\call{C}_c$ such that
    $\call{C}_m\le\call{C}_\call{I}\comp\call{C}_c$.
  \end{itemize}
\end{definition}

We have departed from our line of translating terms from the
unification framework to the symbolic derivation framework by
introducing a symbolic derivation $\call{C}_c$. It permits us to
consider cases in which the computation of a complete set of unifiers
introduces unnecessary deduction steps in individual ASDs. A common
example of such addition is the normalization of messages
$\paire{t}{t'}$, \textit{i.e.}  the automatic deduction of the two
messages $t$ and $t'$ even when they are not useful for the attacker.

\subsection{Finitary deduction systems}

We have already noted that a NP decision procedure for the
satisfiability of HSDs for the Dolev-Yao deduction system is known
since~\cite{RT01}. While this procedure is based on the guessing of an
attack of minimal size, other procedures have been
proposed~\cite{AmadioL00,MS01} that instead cover all possible
stutter-free derivations~\cite{ChevalierLR-LPAR07}, \textit{i.e.}
compute a complete set of solutions. We define deduction systems for
which such a procedure exists to be \emph{finitary}.

\begin{definition}{\label{equiv:def:finitary}(Finitary Deduction Systems)}
  Let \call{I}{} be a deduction system. If there exists a procedure
  that computes for every \call{I}{}-HSD $\call{C}_h$ a finite
  complete set of solutions we say that \call{I}{} is a \emph{finitary
    deduction system}.
\end{definition}

\section{Decidability of Symbolic Equivalence}
\label{equiv:sec:equiv}

This section is devoted to the proof of the main theorem of this
paper.
\begin{theorem}{\label{equiv:theo:dy:decidable}}
  Symbolic equivalence is decidable for finitary deduction systems.
\end{theorem}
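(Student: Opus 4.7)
The plan is to reduce symbolic equivalence of two HSDs to finitely many decidable questions, exploiting finitariness twice. By symmetry I would decide each inclusion $\call{C}_h^\star \subseteq {\call{C}_h'}^\star$ separately, and by Lemma~\ref{equiv:lem:reduction:sol} each such inclusion reduces to checking membership in ${\call{C}_h'}^\star$ only for the $<$-minimal elements of $\call{C}_h^\star$. The finitariness of $\call{I}$ provides a finite complete set $\Sigma$ of solutions of $\call{C}_h$, so by Definition~\ref{equiv:def:complete:set} every minimal solution is, up to connection with an auxiliary stutter-free ASD, represented by some element of $\Sigma$.

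The second ingredient, following the intuitive outline of the introduction, is the decomposition of a minimal solution into a stutter-free (deduction-only) backbone together with a testing component. First I would prove that it suffices to consider attackers performing at most one equality test: Proposition~\ref{equiv:prop:stutter:free} guarantees that the deduction-only backbone of an ASD is never rejected on unification grounds, so exactly one test carries the burden of distinguishing the two HSDs and the remaining tests can be dropped without changing solution sets. Then I would show that the surviving test may be guessed in advance from a finite catalogue, namely equalities between the variables occurring in the connection of $\call{C}_h$ with an element of $\Sigma$; this step uses Lemma~\ref{equiv:lem:replace:constant:by:term} to replace any attacker-constructed witness by an opening of the underlying minimal ASD, ensuring that the variables involved in any meaningful test already appear inside a pre-computed connection.

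With these two reductions in place, the decision procedure is transparent. For each pair $(\call{C}_m,e)$ with $\call{C}_m \in \Sigma$ and $e$ a candidate equality test, I would form the two augmented connections $\call{C}_h \comp \call{C}_m$ and $\call{C}_h' \comp \call{C}_m$ with $e$ added to their unification systems, and check that the two resulting closed derivations are simultaneously satisfiable or simultaneously unsatisfiable. Each individual check is a ground satisfiability question, which is decidable under the finitariness hypothesis since one can enumerate a finite complete set of solutions of the augmented HSD and search it for a satisfying element. Running the same procedure with the roles of $\call{C}_h$ and $\call{C}_h'$ swapped completes the decision of symbolic equivalence.

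The step I expect to be the main obstacle is the rigorous justification of the second ingredient, namely that a single test drawn from a pre-computable finite list is sufficient. Beyond the use of Lemma~\ref{equiv:lem:replace:constant:by:term}, this will require a careful case analysis showing that several tests performed by the attacker can be linearised into one final test on the variables of the connection, together with an argument based on the well-foundedness of $<$ modulo nonce renaming to show that minimal ASDs indeed admit this clean split into backbone and single test. Once this structural fact is established, the remainder of the argument is a finite double loop of finitary-decidable subproblems and poses no further difficulty.
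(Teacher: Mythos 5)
Your overall architecture (decide each inclusion separately, split attacks into a deduction-only backbone plus a testing part, reduce to finitely many finitary-decidable checks) matches the paper's, but there is a genuine gap in the way you quantify over tests, and it sits exactly at the step you yourself flag as the main obstacle. You propose to first compute a finite complete set $\Sigma$ of solutions of $\call{C}_h$ alone, and only then to enumerate candidate tests as equalities between the variables occurring in the connection of $\call{C}_h$ with an element $\call{C}_m\in\Sigma$. This fails for two reasons. First, Proposition~\ref{equiv:prop:minimal:test} shows that the distinguishing test may require \emph{one deduction} in addition to one equality: the witness can be of the form $x_0\unif f(x_1,\ldots,x_n),\ x_0\unif x_t$ with $f$ a public symbol applied to received values. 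A catalogue consisting only of equalities between connection variables misses every attack in which the intruder must apply one more public function to exhibit the difference, and that extra deduction need not occur in any minimal deduction-only solution of $\call{C}_h$, since it is useless for merely satisfying $\call{C}_h$ and is therefore pruned by minimality.

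Second, and more structurally, the order of the two reductions must be the reverse of the one you adopt: the test has to be guessed \emph{before} the complete set of solutions is computed, and the finite complete set must be computed for the \emph{augmented} derivation $\call{C}_h\comp\call{C}_t$, not for $\call{C}_h$. The backbone of a distinguishing attack is indeed an instance (via opening of nonces and composition with an auxiliary stutter-free ASD) of some $\call{C}_m\in\Sigma$; but the two terms compared by the test live in the instance, not in $\call{C}_m$, and the instantiation ordering does not transport the test back to the minimal element: after projecting down, the compared terms may exist only inside whatever replaced a nonce of $\call{C}_m$, so they are not denoted by any variable of the connection of $\call{C}_h$ with $\call{C}_m$. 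Lemma~\ref{equiv:lem:symmetry} is precisely the device that repairs this: it shows that the backbone $\call{C}_\call{I}$ of an attack with test $\call{C}_t$ is itself a stutter-free solution of $\call{C}_t\comp\call{C}_h$, so that once $\call{C}_t$ is fixed (from the genuinely finite catalogue of Proposition~\ref{equiv:prop:minimal:test}) the finite complete set of solutions of $\call{C}_h\comp\call{C}_t$ provably covers the relevant backbones, and Lemmas~\ref{equiv:lem:reduction:sol} and~\ref{equiv:lem:replace:constant:by:term} then make the membership check on minimal elements sound. Your proposal contains no substitute for this symmetry step, so the ``finite double loop'' you describe is not known to be complete.
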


We first prove that every ASD can be written as the connection between
a stutter-free ASD and a testing ASD in which no new term is deduced
(Lemma~\ref{equiv:lem:decompose:ASD:connection}). This implies the
reduction of the inclusion problem to the one of checking whether, for
any stutter-free ASD in $\call{C}_h^\star$, the connections of this
ASD with $\call{C}_h$ and $\call{C}_h'$ result in \emph{closed}
symbolic derivations $\call{C}_1$ and $\call{C}_2$ such that
$\call{C}_1^\star\subseteq \call{C}_2^\star$
(Lemma~\ref{equiv:lem:split}). Given a stutter-free ASD in
$\call{C}_h^\star$ this latter test is simple since it suffices to
consider the connection with ASD that have at most one deduction
(Prop.~\ref{equiv:prop:minimal:test}).

We relate these types of ASD with well-formed ASDs with the following
lemma.

\begin{lemma}{\label{equiv:lem:decompose:ASD:connection}}
  Let $\call{C_I}$ be a $(\call{C}_h,\varphi)$-well-formed ASD. Then
  there exists a connection $\psi$, a well-formed deduction-only ASD
  $\call{C}_d$, and a testing ASD $\call{C}_t$ such that:
  \begin{itemize}
  \item $\call{C_I}=\call{C}_d \comp_\psi \call{C}_t$,
  \item for all HSD $\call{C}'$ and connection $\psi$, the connection
    $\call{C}'\comp_\psi \call{C_I}$ is closed if, and only if,
    $\call{C}'\comp_\psi \call{C}_d$ is closed.
  \end{itemize}
\end{lemma}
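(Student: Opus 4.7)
The plan is to split $\call{C_I}$ along its equality-test equations. Let $E$ denote the set of equations in $\call{S_I}$ of the form $\call{V_I}(i)\unif\call{V_I}(j)$ with $\call{V_I}(i)\neq\call{V_I}(j)$; by Definition~\ref{def:symbolic:derivation} these are precisely the equations in $\call{S_I}$ not produced by deduction, memory, or reception states. Let $J\subseteq\Ind_\call{I}$ be the (multi)set of indexes appearing in equations of $E$.

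First I would build $\call{C}_d$ by removing $E$ from $\call{S_I}$, leaving $\call{V}_\call{I}$, $\call{K}_\call{I}$, $\Ind_\call{I}$ and $\Invar_\call{I}$ unchanged, and enlarging $\Outvar_\call{I}$ with one fresh occurrence of each index in $J$ in order to feed the testing part. Then $\call{S}_d$ contains no equation $\call{V}_d(i)\unif\call{V}_d(j)$, so $\call{C}_d$ is deduction-only; it inherits well-formedness from $\call{C_I}$ because the conditions of Definition~\ref{def:well:formed} only constrain deduction and re-use states and are therefore preserved after the deletion of equality tests.

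Next I would build $\call{C}_t$ as a totally ordered fresh copy of $J$ (inheriting the order from $\Ind_\call{I}$), with fresh variables $\call{V}_t$, empty knowledge $\call{K}_t=\emptyset$ (which makes it a testing ASD), every state being both a reception and emission state, and unification system equal to the image of $E$ under the index identification. I then let $\psi$ be the obvious bijection between the inputs $\Ind_t$ and the freshly added output occurrences of $J$ inside $\Outvar_d$; monotonicity of $\psi$ is immediate because the order on $\Ind_t$ was chosen to coincide with the order on $J$. A direct unfolding of Definition~\ref{def:connection:derivation} then shows that $\call{C}_d\comp_\psi\call{C}_t$ coincides with $\call{C_I}$ up to variable renaming: the indexes of $\Ind_t$ are collapsed onto $J$, the unification system $(\call{S_I}\setminus E)\cup\psi(E)$ recovers $\call{S_I}$, and the knowledge, input set and output multiset are restored.

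For the second bullet point, I would use the fact that closedness of a connection is purely a question of reception states having been entirely matched by output states. Since $\Invar_d=\Invar_\call{I}$ and the only reception states of $\call{C}_t$ are exactly those that get consumed when forming $\call{C}_d\comp_\psi\call{C}_t$, connecting any HSD $\call{C}'$ through any connection $\varphi$ yields the same residual reception states in $\call{C}'\comp_\varphi\call{C_I}$ and in $\call{C}'\comp_\varphi\call{C}_d$, which gives the required equivalence. The main obstacle is the bookkeeping around the multiset $\Outvar$ and the monotonicity of the connection: if the same index occurs in several equations of $E$, one must add as many fresh output occurrences to $\Outvar_d$ as $\call{C}_t$ has corresponding inputs, and order them so that $\psi$ remains monotone; this is straightforward but requires the multiset semantics of $\Outvar$ to be handled carefully.
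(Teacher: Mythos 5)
Your overall plan coincides with the first stage of the paper's proof: put the equality tests into a testing ASD $\call{C}_t$ with empty knowledge, keep everything else in a deduction-only ASD $\call{C}_d$, and feed the relevant states of $\call{C}_d$ into $\call{C}_t$ through extra output occurrences. The reconstruction of $\call{C_I}$ from $\call{C}_d\comp_\psi\call{C}_t$ and the argument for the second bullet (closedness depends only on the reception states, and $\Invar_d=\Invar_\call{I}$) are also in line with the paper.

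The gap is in your claim that $\call{C}_d$ \emph{inherits} well-formedness from $\call{C_I}$ ``because the conditions of Definition~\ref{def:well:formed} only constrain deduction and re-use states''. That reading of the definition is incorrect: its second disjunct requires, for a later state $j$ carrying the same value as an earlier deduction state $i$ with $\call{V_I}(i)\neq\call{V_I}(j)$, both that $\call{V_I}(j)$ feeds no further deduction \emph{and that $j$ is not an emission state}. Your construction enlarges the output multiset of $\call{C}_d$ with fresh occurrences of every index occurring in a test equation, i.e.\ it creates new emission states; a state $j$ that appears in a test precisely because it re-obtains a value already deduced at some earlier $i$ (the prototypical equality test) thereby becomes an emission state of $\call{C}_d$ and violates the condition. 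This is exactly the point where the paper's proof does extra work: it concedes that the naive split ``may not be well-formed'' and adds a second stage in which every deduction state of $\call{C}_d$ that re-derives an already-deduced value is removed from $\call{C}_d$ and migrated into $\call{C}_t$, its equation $\call{V_I}(i)\unif f(x_1,\ldots,x_n)$ becoming a deduction equation of the testing part (which Definition~\ref{def:testing} permits, since testing only constrains $\call{K}_t=\emptyset$). Without this migration step the conclusion that $\call{C}_d$ is a \emph{well-formed} deduction-only ASD --- which is what later lets Lemma~\ref{equiv:lem:split} place $\call{C}_d$ in $\call{C}_h^{\sfree}$ --- is not established, so you need to add it (or give a genuinely different argument for well-formedness that accounts for the new emission states).
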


\begin{proof}
  Let $\call{C}_h$ be a HSD, $\call{C_I}=\defsd{_{\call{I}}}$ bean
  ASD, and $\varphi$ be a connection such that
  $(\call{C_I},\varphi)\in \call{C}_h^\star$.  We construct a sequence
  of couples $(\call{C}_d,\call{C}_t)$ of ASDs such that $\call{C}_d$
  is deduction-only, $\call{C}_t$ is testing, and such that in the end
  $\call{C}_d$ is well-formed.  We start from:
  $$
  \left\lbrace
    \begin{array}{rcl}
      \call{C}_d&=&(\call{V_I},\call{S_I}\setminus
      \call{S}_=,\call{K_I},\Invar_{\call I},\Outvar_{\call I}\cup
      \Ind_{\call I})\\
      \call{C}_t&=&(\call{V_I},\call{S}_=,\emptyset,
      \Ind_{\call I},\Outvar_{\call I})\\
    \end{array}
  \right.
  $$
  and the connection $\psi$ being the identity. By construction
  $\call{C}_t$ is testing and $\call{C}_d$ is deduction-only.  However
  $\call{C}_d$ may not be well-formed.

  For each deduction state $i$ in $\call{C}_d$ such that there exists
  a deduction state $j<i$ with $\call{V_I}(i)\sigma =
  \call{V_I}(j)\sigma$, let $\call{S}_{\call{V_I}(i)}$ be the subset
  of equations of $\call{S_I}$ in which $\call{V_I}(i)$ occurs.  Since
  $i$ is a deduction state, $\call{S}_{\call{V_I}(i)}$ contains one
  equation $\call{V_I}(i) \unif f(x_1,\ldots,x_n)$.  Since the ASD is
  well-formed, all other equations in $\call{S}_{\call{V_I}(i)}$ are
  of the form $\call{V_I}(i) \unif \call{V_I}(j)$, and thus are
  already in $\call{S}_=$. We obtain a new couple of ASDs
  $(\call{C}_d',\call{C}_t')$ by removing the state $i$ from
  $\call{C}_d$ (and thus from the output variables of $\call{C}_d$,
  removing $i$ from the input states of $\call{C}_t$, and adding the
  equation $\call{V_I}(i) \unif f(x_1,\ldots,x_n)$ to the unification
  system of $\call{C}_t$, thereby making $i$ a deduction state in
  $\call{C}_t$.

  It is clear that once the construction is performed on every
  deduction states from $\call{C}_d$, this symbolic derivation will be
  well-formed.
\end{proof}

\begin{lemma}{\label{lem:well:formed}}
  Let $\call{C}_h,\call{C}_h'$ be two HSDs such that
  $\call{C}_h^\star\setminus {\call{C}_h'}^\star\neq\emptyset$. Then
  $\call{C}_h^\star\setminus {\call{C}_h'}^\star$ contains a
  $(\call{C}_h,\varphi)$-well-formed ASD.
\end{lemma}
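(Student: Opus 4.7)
The plan is to proceed by well-founded induction on the number of deduction states of ASDs. Pick a pair $(\call{C}_\call{I},\varphi)\in \call{C}_h^\star\setminus{\call{C}_h'}^\star$ minimizing this measure and argue that $\call{C}_\call{I}$ must then be $(\call{C}_h,\varphi)$-well-formed, which directly yields the lemma.

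Suppose for contradiction that $\call{C}_\call{I}$ admits a violation $(i,j)$: $i$ is a deduction state, $i<j$, $\call{V_I}(i)\sigma=\call{V_I}(j)\sigma$ for the trace $\sigma=\trace{\call{C}_\call{I}}{\call{C}_\call{I}\comp_\varphi\call{C}_h}$, $\call{V_I}(i)\neq\call{V_I}(j)$, and $j$ is either an emission state or $\call{V_I}(j)$ occurs as an argument of some equation in $\call{S_I}$. The transformation I would apply moves $j$'s deduction to a fresh position located at or before $i$ (possible whenever none of $j$'s arguments depend transitively on $\call{V_I}(i)$, since they all lie strictly below $j$), then replaces $i$ by a re-use state of the moved $j$ by renaming $\call{V_I}(i)\mapsto\call{V_I}(j)$ in $\call{S_I}$ and dropping the now-duplicate deduction equation for $\call{V_I}(j)$, as permitted by Lemma~\ref{lem:ASD:properties}. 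This strictly decreases the number of deduction states.

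For membership in $\call{C}_h^\star$ the argument is direct: the trace equality $\call{V_I}(i)\sigma=\call{V_I}(j)\sigma$ shows that $\sigma$, restricted to the surviving variables, satisfies the modified unification system combined with that of $\call{C}_h$. The delicate part is preserving non-membership in ${\call{C}_h'}^\star$. Here I would argue that a putative solution $\tau$ of the transformed ASD joined with $\call{C}_h'$ extends, by setting $\call{V_I}(i)\tau:=\call{V_I}(j)\tau$, to a solution of the original system augmented by the test $\call{V_I}(i)\unif\call{V_I}(j)$; since such a strengthening cannot be more satisfiable than the original and $(\call{C}_\call{I},\varphi)\notin{\call{C}_h'}^\star$, we derive a contradiction. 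The direction of the collapse must be chosen so that the downstream uses of $\call{V_I}(j)$ responsible for the mismatch with $\call{C}_h'$ remain in place after the renaming, which is precisely why one reorders so that it is $i$ that becomes the re-use state rather than $j$.

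The main obstacle is the degenerate case where $j$'s deduction depends transitively on $\call{V_I}(i)$, so that the reordering is blocked. In that situation the trace equality $\call{V_I}(i)\sigma=\call{V_I}(j)\sigma$ forces a fixed-point identity of the form $f(\ldots,\call{V_I}(i)\sigma,\ldots)=\call{V_I}(i)\sigma$ modulo $\call{E}$, which lets one eliminate $j$'s deduction outright and, by an argument analogous to the one above (using Lemma~\ref{lem:unif:replacement} to substitute the common value away), still reduce the measure while remaining in the difference. In all cases the minimality hypothesis is contradicted, so $\call{C}_\call{I}$ is well-formed.
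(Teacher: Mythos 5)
Your strategy diverges from the paper's and, as written, has a genuine gap at exactly the step you flag as delicate. The paper's proof never merges states or reorders anything: it keeps both variables $\call{V_I}(i)$ and $\call{V_I}(j)$, \emph{adds} the test equations $\call{V_I}(i)\unif\call{V_I}(j)$ for every offending pair, and then redirects argument occurrences of $\call{V_I}(j)$ to $\call{V_I}(i)$. The resulting unification system is logically equivalent to the original one augmented with tests that the trace $\sigma$ already satisfies, so it is a pure strengthening: membership in $\call{C}_h^\star$ is preserved because $\sigma$ satisfies the new tests, and non-membership in ${\call{C}_h'}^\star$ is preserved because every solution of the new system is a solution of the old one. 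Your transformation instead \emph{deletes} a constraint: after renaming $\call{V_I}(i)\mapsto\call{V_I}(j)$ you drop the ``now-duplicate'' deduction equation, but $\call{V_I}(i)\unif f_i(\ldots)$ and $\call{V_I}(j)\unif f_j(\ldots)$ are duplicates only up to the trace of the connection with $\call{C}_h$; syntactically they are distinct deductions that merely happen to produce $\call{E}$-equal values in \emph{that} connection. In the connection with $\call{C}_h'$ the two deductions may produce different values, so the transformed ASD, which now feeds $f_j$'s value into all former uses of $\call{V_I}(i)$, can perfectly well be satisfiable with $\call{C}_h'$ even though the original is not. Your repair --- extending a solution $\tau$ of the transformed system by $\call{V_I}(i)\tau:=\call{V_I}(j)\tau$ --- does not yield a solution of the original system, because nothing forces $\call{V_I}(j)\tau=_{\call{E}}f_i(\ldots)\tau$ once that equation has been discarded. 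Hence your minimal counterexample may simply fall out of $\call{C}_h^\star\setminus{\call{C}_h'}^\star$, and the induction does not close.

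There are two further problems. First, in Definition~\ref{def:well:formed} the state $j$ ranges over \emph{all} states above $i$, not only deduction states; if $j$ is a reception or memory state there is no ``deduction of $j$'' to move, and your transformation is undefined precisely in some of the cases (an emitted or argument-occurring received value colliding with a deduced one) that well-formedness is meant to exclude. Second, moving $j$'s deduction below $i$ requires moving its transitive dependencies as well, and these may include reception states lying between $i$ and $j$; displacing those alters the connection function and can break the monotonicity required of a connection, which your argument does not address (nor does the appeal to Lemma~\ref{lem:unif:replacement} in the ``fixed-point'' case, since that lemma concerns replacement of free constants and says nothing about eliminating a deduction whose value is still used downstream). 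All of these difficulties evaporate in the paper's in-place construction: no state changes position, no equation is removed, and the only modification is the addition of test equations together with a solution-preserving rewriting of argument occurrences.
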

\begin{proof}
  Assume
  $(\call{C_I},\varphi)\in\call{C}_h^\star\setminus{\call{C}_h'}^\star$,
  and $\call{C_I}=\defsd{_{\call{I}}}$, and
  $\sigma=\trace{\call{C_I}}{\call{C_I}\comp_\varphi\call{C}_h}$.  By
  hypothesis $\sigma$ satisfies $\call{S_I}$. Let $\call{S}_1$ be the
  set of equations $\call{V_I}(i)\unif\call{V_I}(j)$ on all states
  $i,j$ such that:
  \begin{ilitz}
  \item $i$ is a deduction state, and
  \item $i<j$, and
  \item $\call{V_I}(i)\sigma=\call{V_I}(j)\sigma$.
  \end{ilitz}
  It is clear that $\call{S_I}\cup\call{S}_1$ is also satisfied by
  $\sigma$.

  Then, replace in \call{S_I} each equation $x\unif
  f(\ldots,\call{V_I}(j),\ldots)$ such that there exists a deduction
  state $i<j$ with $\call{V_I}(i)\sigma=\call{V_I}(j)\sigma$ by the
  equation $x\unif f(\ldots,\call{V_I}(i),\ldots)$, and let
  $\call{S_I}'$ be the obtained unification system. Given the
  equations in $\call{S}_1$ it is clear that
  $\call{S_I}\cup\call{S}_1$ and $\call{S_I}'\cup\call{S}_1$ are
  satisfied by the same set of substitutions.

  Let $\call{C_I}'=(\call {V_I},\call {S_I}'\cup\call{S}_1, \call
  {K_I}, \Invar_{\call I}, \Outvar_{\call{I}})$. It remains to note
  that:
  \begin{itemize}
  \item
    $\trace{\call{C_I}\comp_\varphi\call{C}_h}{\call{C_I}\comp_\varphi\call{C}_h}=
    \trace{\call{C_I}'\comp_\varphi\call{C}_h}{\call{C_I}'\comp_\varphi\call{C}_h}$;
  \item
    $\trace{\call{C_I}\comp_\varphi\call{C}_h'}{\call{C_I}\comp_\varphi\call{C}_h'}=
    \trace{\call{C_I}'\comp_\varphi\call{C}_h'}{\call{C_I}'\comp_\varphi\call{C}_h'}$,
    and thus $(\call{C_I}',\varphi)\notin\call{C}_h'$;
  \item by construction $\call{C_I}'$ is
    $(\call{C}_h,\varphi)$-well-formed.
  \end{itemize}
  Thus, $\call{C_I}'$ is $(\call{C}_h,\varphi)$-well-formed ASD in
  $\call{C}_h^\star\setminus {\call{C}_h'}^\star$.
\end{proof}

As a consequence, we obtain the following lemma that permits to split
the symbolic equivalence problem into two simpler problems.

\begin{lemma}{\label{equiv:lem:split}}
  Let $\call{C}_h$ and $\call{C}_h'$ be two HSDs. We have
  $\call{C}_h^\star\subseteq\call{C}_h'^\star$ if, and only if:
  \begin{itemize}
  \item $\call{C}_h^{\sfree}\subseteq \call{C}_h'^\star$;
  \item and for each ASD
    $\call{C}_\call{I}\in\call{C}_h^{\sfree}$ and for all testing ASD
    $\call{C}_t\in (\call{C}_\call{I}\comp\call{C}_h)^\star$ we have
    $\call{C}_t\in(\call{C}_\call{I}\comp\call{C}_h')^\star$.
  \end{itemize}
\end{lemma}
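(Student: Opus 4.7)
The plan is to prove each direction by repeatedly combining two tools: the decomposition result of Lemma~\ref{equiv:lem:decompose:ASD:connection}, which factors a well-formed ASD as a stutter-free (deduction-only) part connected to a testing part, and the reduction to well-formed ASDs given by Lemma~\ref{lem:well:formed}. Throughout, the crucial book-keeping device is the associativity of the connection operation $\comp$, which lets us re-parenthesise $\call{C}_t \comp \call{C}_\call{I} \comp \call{C}_h$ either as a solution $\call{C}_t \comp \call{C}_\call{I}$ of $\call{C}_h$ or as a solution $\call{C}_t$ of $\call{C}_\call{I}\comp\call{C}_h$.

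For the forward direction, assume $\call{C}_h^\star \subseteq \call{C}_h'^\star$. The first bullet is immediate: every stutter-free solution is a solution, so $\call{C}_h^\sfree \subseteq \call{C}_h^\star \subseteq \call{C}_h'^\star$. For the second bullet, take $\call{C}_\call{I}\in\call{C}_h^\sfree$ and $\call{C}_t\in(\call{C}_\call{I}\comp\call{C}_h)^\star$. Then $(\call{C}_t\comp\call{C}_\call{I})\comp\call{C}_h = \call{C}_t\comp(\call{C}_\call{I}\comp\call{C}_h)$ is closed and satisfiable, which means $\call{C}_t\comp\call{C}_\call{I}$ is a solution of $\call{C}_h$. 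By hypothesis it is then a solution of $\call{C}_h'$, i.e.\ $\call{C}_t\comp(\call{C}_\call{I}\comp\call{C}_h')$ is closed and satisfiable, so $\call{C}_t\in(\call{C}_\call{I}\comp\call{C}_h')^\star$.

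For the reverse direction, assume both bullets. Suppose by contradiction that some ASD lies in $\call{C}_h^\star\setminus\call{C}_h'^\star$. By Lemma~\ref{lem:well:formed} we may choose such a $(\call{C}_h,\varphi)$-well-formed ASD $\call{C}_\call{I}$. Apply Lemma~\ref{equiv:lem:decompose:ASD:connection} to write $\call{C}_\call{I}=\call{C}_d\comp_\psi\call{C}_t$ with $\call{C}_d$ well-formed deduction-only (hence stutter-free) and $\call{C}_t$ testing. The same lemma gives that $\call{C}_h\comp\call{C}_d$ is closed, and since its unification system is a sub-system of that of $\call{C}_h\comp\call{C}_\call{I}$, any unifier of the latter unifies the former, so $\call{C}_h\comp\call{C}_d$ is satisfiable. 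Thus $\call{C}_d\in\call{C}_h^\sfree$, and the first hypothesis yields $\call{C}_d\in\call{C}_h'^\star$. Re-associating, $\call{C}_\call{I}\comp\call{C}_h=(\call{C}_d\comp\call{C}_h)\comp\call{C}_t$ being closed and satisfiable gives $\call{C}_t\in(\call{C}_d\comp\call{C}_h)^\star$, so the second hypothesis gives $\call{C}_t\in(\call{C}_d\comp\call{C}_h')^\star$. Re-associating once more, $\call{C}_h'\comp\call{C}_\call{I}=\call{C}_h'\comp\call{C}_d\comp\call{C}_t$ is closed and satisfiable, contradicting $\call{C}_\call{I}\notin\call{C}_h'^\star$.

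The main obstacle is not a deep argument but the careful tracking of connection functions under decomposition and re-composition: one must check that the input/output interface exposed by $\call{C}_d$ (after stripping $\call{C}_t$) is precisely the one used by $\call{C}_h$ and $\call{C}_h'$, and that the testing ASD $\call{C}_t$ obtained from Lemma~\ref{equiv:lem:decompose:ASD:connection} reattaches correctly on the $\call{C}_h'$ side. All of this is provided by the ``closed iff'' clause of Lemma~\ref{equiv:lem:decompose:ASD:connection} together with the associativity of~$\comp$; once these identifications are made, the satisfiability transport between $\call{C}_h$ and $\call{C}_h'$ is driven entirely by the two hypotheses.
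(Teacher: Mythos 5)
Your proof is correct and follows essentially the same route as the paper's: reduce to a well-formed counterexample via Lemma~\ref{lem:well:formed}, decompose it with Lemma~\ref{equiv:lem:decompose:ASD:connection} into a stutter-free part and a testing part, and re-associate the connections to see that a failure of the inclusion must violate one of the two bullets. The only difference is presentational (you argue one direction directly and the other by contradiction, where the paper states both as contrapositives and leaves the easy direction implicit), so no further comparison is needed.
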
 
\begin{proof}
  Assume
  $(\call{C_I},\psi)\in\call{C}_h^\star\setminus\call{C}_h'^\star$.
  By Lemma~\ref{lem:well:formed} we can assume wlog that
  $\call{C_I}=\defsd{_{\call{I}}}$ is well-formed. By
  Lemma~\ref{equiv:lem:decompose:ASD:connection} \call{C_I} can be
  written $\call{C}_d\comp_\varphi \call{C}_t$ where $\call{C}_d$ is a
  stutter-free ASD and $\call{C}_t$ is a testing ASD. By construction
  we have $(\call{C}_t,\varphi)\in (\call{C}_d\comp_\psi
  \call{C}_h)^\star$.  Since $\call{C}_d \comp_\varphi \call{C}_t =
  \call{C_I} \notin {\call{C}_h'}^\star$ then either
  $\call{C}_d\comp_\psi \call{C}_h'$ is closed, but not satisfiable,
  or $\call{C}_t\comp_\varphi (\call{C}_d \comp_\psi \call{C}_h')$.
  In the former case we have $\call{C}_h^{\sfree}\not\subseteq
  \call{C}_h'^\star$, and in the latter case we have $\call{C}_t\in
  (\call{C}_\call{I}\comp\call{C}_h)^\star \setminus
  (\call{C}_\call{I}\comp\call{C}_h')^\star$.

  Conversely, if one of the two points does not hold, we easily
  construct an ASD in $\call{C}_h^\star\setminus\call{C}_h'^\star$.
\end{proof}

Then we prove that if in the previous lemma the testing part is known,
the stutter-free part is also a stutter-free solution of the
connection between the testing part and the HSD.

\begin{lemma}{\label{equiv:lem:symmetry}}
  Assume $\call{C}_\call{I}\in\call{C}_h^\sfree$ and $\call{C}_t\in
  (\call{C}_\call{I}\comp\call{C}_h)^\star$. Then
  $\call{C}_\call{I}\in (\call{C}_t\comp\call{C}_h)^\sfree$.
\end{lemma}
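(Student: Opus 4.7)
The plan is to reduce the lemma to the associativity and commutativity of connection, plus a verification that $\call{C}_t \comp \call{C}_h$ qualifies as an HSD so that $(\call{C}_t\comp\call{C}_h)^\sfree$ is even defined. To show $\call{C}_\call{I} \in (\call{C}_t\comp\call{C}_h)^\sfree$, I need two things: first, that $\call{C}_\call{I}$ is a stutter-free ASD (granted by hypothesis), and second, that $(\call{C}_t\comp\call{C}_h)\comp\call{C}_\call{I}$ is closed and satisfiable.

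First I would check that $\call{C}_t\comp\call{C}_h$ is an HSD. Since $\call{C}_t$ is testing, $\call{K}_t=\emptyset$, so the knowledge of the composition reduces to $\call{K}_h$, which contains no constants in $\Nonces$ because $\call{C}_h$ is an HSD. Moreover $\call{C}_t$, having empty initial knowledge, introduces no constants from $\Nonces$ in its unification system (the only way nonces can enter an ASD is through its knowledge set). Hence $\call{C}_t\comp\call{C}_h$ is a symbolic derivation whose constants are away from $\Nonces$, and the notation $(\call{C}_t\comp\call{C}_h)^\sfree$ is meaningful.

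Next I would appeal to the stated associativity and commutativity (up to identification via the connection functions) of the $\comp$ operator to rewrite
$$
(\call{C}_t\comp\call{C}_h)\comp\call{C}_\call{I} \;=\; \call{C}_\call{I}\comp\call{C}_h\comp\call{C}_t \;=\; (\call{C}_\call{I}\comp\call{C}_h)\comp\call{C}_t.
$$
By hypothesis $\call{C}_t\in(\call{C}_\call{I}\comp\call{C}_h)^\star$, which by definition means precisely that $(\call{C}_\call{I}\comp\call{C}_h)\comp\call{C}_t$ is closed and satisfiable. Transporting this conclusion back through the equality above gives that $(\call{C}_t\comp\call{C}_h)\comp\call{C}_\call{I}$ is closed and satisfiable, which, together with $\call{C}_\call{I}$ being stutter-free, yields $\call{C}_\call{I}\in(\call{C}_t\comp\call{C}_h)^\sfree$.

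The main obstacle, minor as it is, is the bookkeeping around the connection functions: one must verify that the reshuffled composition $(\call{C}_t\comp\call{C}_h)\comp\call{C}_\call{I}$ is realized by a legitimate connection function, built from the ones underlying the original compositions, and that the set of reception states really collapses to $\emptyset$ under this reshuffling. This is exactly what the paper's identification of $\call{C}\comp\call{C}'$ with $\call{C}'\comp\call{C}$ and its free re-arrangement of parentheses allow us to take for granted, so once the HSD check is dispatched the remainder is a direct application of that principle.
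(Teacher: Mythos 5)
Your overall strategy---re-associate $(\call{C}_\call{I}\comp\call{C}_h)\comp\call{C}_t$ into $(\call{C}_t\comp\call{C}_h)\comp\call{C}_\call{I}$ and transport closedness and satisfiability across the equality---is the same as the paper's, and your preliminary check that $\call{C}_t\comp\call{C}_h$ is an HSD is a sensible addition the paper leaves implicit. But there is a gap in your first step: you assert that $\call{C}_\call{I}$ being a stutter-free ASD is ``granted by hypothesis,'' treating stutter-freeness as an intrinsic property of the derivation. It is not. Stutter-free means well-formed and deduction-only, and well-formedness (Definition~\ref{def:well:formed}) is a property relative to a pair $(\call{C}_h,\varphi)$: it constrains the behaviour of $\call{C}_\call{I}$ under the trace $\sigma=\trace{\call{C_I}}{\call{C_I}\comp_\varphi\call{C}_h}$. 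Membership in $(\call{C}_t\comp\call{C}_h)^\sfree$ therefore requires well-formedness with respect to the \emph{new} HSD $\call{C}_t\comp\call{C}_h$ and the \emph{new} connection, and this does not follow formally from well-formedness with respect to $(\call{C}_h,\varphi)$. The paper closes this by observing that the reshuffled connection $\varphi'$ agrees with $\varphi$ on $\Invar_h\cup\Invar_\call{I}$, hence
$$
\trace{\call{C}_\call{I}}{\call{C}_\call{I}\comp_{\varphi'}(\call{C}_h\comp_{\psi'}\call{C}_t)}=\trace{\call{C}_\call{I}}{\call{C}_\call{I}\comp_{\varphi}\call{C}_h},
$$
so the well-formedness condition, which only mentions this trace, carries over verbatim. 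Your proof needs this observation (or an equivalent one) to conclude.

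Relatedly, the ``bookkeeping'' you defer to the paper's associativity convention is in fact the substantive content of the proof, and it is slightly more than re-parenthesising: the connection $\psi$ realising $\call{C}_t\in(\call{C}_\call{I}\comp\call{C}_h)^\star$ sends input states of $\call{C}_t$ into $\Outvar_\call{I}\cup\Outvar_h$, so before you can form $\call{C}_t\comp\call{C}_h$ at all you must split $\psi$ into the part $\psi_{|\bar{D}}$ landing in $\Outvar_h$ (used to build $\call{C}_t\comp\call{C}_h$) and the part $\psi_{|D}$ landing in $\Outvar_\call{I}$ (which must be merged with $\varphi$ to form the new connection to $\call{C}_\call{I}$). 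The paper's blanket licence to re-arrange parentheses covers compositions that are already given; it does not by itself produce this decomposition of $\psi$, and the equality of the two restricted connections that rescues well-formedness above is a by-product of carrying out this split explicitly. So the split should appear in the proof rather than be taken for granted.
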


\begin{proof}
  We let $\call{C}_\call{I}$, $\call{C}_h$, and $\call{C}_t$ be as
  in the statement of the lemma, and denote them as follows:
  $$
  \left\lbrace
    \begin{array}{rcl}
      \call{C}_\call{I}&=&\defsd{_\call{I}}\\
      \call{C}_h&=&\defsd{_h}\\
      \call{C}_t&=&\defsd{_t}\\
    \end{array}
  \right.
  $$
  Since $\call{C}_\call{I}\in\call{C}_h^\sfree$ there exists a
  one-to-one\footnote{Since the connection is closed the mapping is
    total.} mapping
  $\varphi:\Invar_\call{I}\cup\Invar_h\to\Outvar_\call{I}\cup\Outvar_h$
  such that $\call{C}_h'=\call{C}_\call{I}\comp_\varphi\call{C}_h$ is
  closed and satisfiable.  Let us denote $\call{C}_h'=\defsd{_h'}$.
  
  Also by hypothesis there exists a one-to-one mapping $\psi:
  \Invar_h'\cup\Invar_t\to\Outvar_h'\cup\Outvar_t$ such that
  $\call{C}_t\comp_\psi\call{C}_h'$ is closed and satisfiable. Since
  $\call{C}_h'$ is closed the function $\psi$ is actually a mapping
  from $\Invar_t$ to $\Outvar_h'\cup\Outvar_t$. Let $D$ be the subset
  of the domain of $\psi$ of indexes $i$ such that
  $\psi(i)\in\Outvar_\call{I}$, and $\bar{D}$ be its complement in the
  domain of $\psi$.  Let us define from $\psi$ and $D$ two functions:
  $$
  \left\lbrace
    \begin{array}{rcl}
      \psi'&=&\psi_{|\bar{D}}\\
      \varphi'&=&\psi_{|D}\cup\varphi\\
    \end{array}
  \right.
  $$
  Let $\call{C}_h''=\call{C}_h\comp_{\psi'}\call{C}_t$. Since by
  construction
  $$
  \call{C}_\call{I}\comp_{\varphi'} (
  \call{C}_h\comp_{\psi'}\call{C}_t ) = \call{C}_t\comp_\psi
  (\call{C}_h\comp_\varphi\call{C}_\call{I})
  $$
  and $\call{C}_t\in(\call{C}_h\comp_\varphi\call{C}_\call{I})^\star$
  the connection between $\call{C}_\call{I}$ and $\call{C}_h''$ is
  also closed and satisfiable, and thus
  $\call{C}_\call{I}\in(\call{C}_h'')^\star$. Since
  $\call{C}_\call{I}\in\call{C}_h^\sfree$ the first two points of the
  definition of stutter free derivations are satisfied by
  $\call{C}_\call{I}$. Given that:
  $$
  \varphi'_{\Invar_h\cup\Invar_\call{I}} =
  \varphi_{\Invar_h\cup\Invar_\call{I}}
  $$
  it is easy to see that:
  $$
  \trace{\call{C}_\call{I}}{\call{C}_\call{I}\comp_{\varphi'} (
    \call{C}_h\comp_{\psi'}\call{C}_t )}=
  \trace{\call{C}_\call{I}}{\call{C}_\call{I}\comp_{\varphi}
    \call{C}_h}
  $$
  As a consequence the hypothesis
  $\call{C}_\call{I}\in\call{C}_h^\sfree$ implies
  $\call{C}_\call{I}\in(\call{C}_h'')^\sfree$.
\end{proof}

The next step is to bound the size of the testing ASD $\call{C}_t$
obtained in Lemma~\ref{equiv:lem:split}.  To this end, given an ASD
$\call{C}_\call{I}\in\call{C}_h^\sfree$ we define:
$$
\chi(\call{C}_\call{I})=\condset{\call{C}_t \text{ testing ASD}}{\call{C}_t\comp\call{C}_\call{I}\in
  \call{C}_h^\star\setminus {\call{C}_h'}^\star}
$$
\textit{i.e.} the set of testing ASDs that distinguish $\call{C}_h$
from $\call{C}_h'$.  By Lemma~\ref{equiv:lem:split},
$\call{C}_h^\star\not\subseteq{\call{C}_h'}^\star$ if, and only if,
there exists an ASD $\call{C}_\call{I}$ such that
$\chi(\call{C}_\call{I})\neq\emptyset$. By ordering the equations in
the unification system of an ASD
$\call{C}_t\in\chi(\call{C}_{\call{I}})$ and keeping a minimal one, we
prove that an ASD of bounded length can be constructed from
$\call{C}_t$.

\begin{proposition}{\label{equiv:prop:minimal:test}}
  $\call{C}_h^\star \not\subseteq{\call{C}_h'}^\star$ if, and only if, there
  exists $\call{C}_\call{I}\in\call{C}_h^\sfree$ such that $\chi(\call{C}_\call{I})$ contains an
  ASD $\call{C}_t$ with at most one deduction and one equality test.
\end{proposition}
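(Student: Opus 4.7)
The plan is to dispatch the easy direction and then reduce the hard direction via Lemma~\ref{equiv:lem:split} followed by a surgical extraction of a minimal failing test.

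For the $\Leftarrow$ direction, this is immediate from the definition of $\chi(\call{C}_\call{I})$: if $\call{C}_t \in \chi(\call{C}_\call{I})$ then $\call{C}_t \comp \call{C}_\call{I} \in \call{C}_h^\star \setminus {\call{C}_h'}^\star$, so $\call{C}_h^\star \not\subseteq {\call{C}_h'}^\star$.

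For $\Rightarrow$, I would first apply Lemma~\ref{equiv:lem:split} to obtain one of two witnesses: either (a) an ASD $\call{C}_\call{I} \in \call{C}_h^\sfree \setminus {\call{C}_h'}^\star$, or (b) an ASD $\call{C}_\call{I} \in \call{C}_h^\sfree$ together with a testing ASD $\call{C}_t \in (\call{C}_\call{I} \comp \call{C}_h)^\star \setminus (\call{C}_\call{I} \comp \call{C}_h')^\star$. Case (a) is handled by taking the empty testing ASD (no deduction, no test), which lies in $\chi(\call{C}_\call{I})$ since $\call{C}_t \comp \call{C}_\call{I} = \call{C}_\call{I}$. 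For case (b), by Lemma~\ref{equiv:lem:symmetry} the connection $(\call{C}_t \comp \call{C}_\call{I}) \comp \call{C}_h$ is closed and satisfiable while $(\call{C}_t \comp \call{C}_\call{I}) \comp \call{C}_h'$ is closed but unsatisfiable. Since $\call{C}_\call{I}$ is deduction-only, Proposition~\ref{equiv:prop:stutter:free} forbids its equations from being the source of the failure; since we are in case (b) we also have $\call{C}_\call{I} \in {\call{C}_h'}^\star$, so the $\call{C}_h'$ equations are satisfied by the induced trace. Hence a failing equation must lie in $\call{S}_t$, and since $\call{K}_t = \emptyset$ and deduction equations only introduce fresh variables, the culprit must be an equality test $\call{V}_t(i) \unif \call{V}_t(j)$. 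Picking the earliest such test (in the total order on $\Ind_t$) fixes a canonical candidate.

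It remains to shrink $\call{C}_t$ to contain essentially this test alone. I would build an extended stutter-free $\call{C}_\call{I}'$ by appending to $\call{C}_\call{I}$ all the deduction states of $\call{C}_t$ strictly preceding the chosen test (rewiring re-uses and maintaining the total order of $\call{C}_t$'s prefix so that $\call{C}_\call{I}'$ stays deduction-only and well-formed), and let $\call{C}_t'$ be the testing ASD consisting of the chosen equality test together with, when needed, a single top-level deduction producing one operand from outputs of $\call{C}_\call{I}'$. Since $\call{C}_\call{I}'$ remains deduction-only, Proposition~\ref{equiv:prop:stutter:free} guarantees $\call{C}_\call{I}' \in \call{C}_h^\star$ and $\call{C}_\call{I}' \in {\call{C}_h'}^\star$, while the chosen test holds on the $\call{C}_h$-side and fails on the $\call{C}_h'$-side by construction; thus $\call{C}_t' \in \chi(\call{C}_\call{I}')$ with at most one deduction and one equality test.

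The main obstacle is the bookkeeping in this surgery: one must check that pushing deductions from $\call{C}_t$ into $\call{C}_\call{I}'$ preserves the deduction-only and well-formedness properties (so that Proposition~\ref{equiv:prop:stutter:free} can be invoked on $\call{C}_\call{I}'$), and that no earlier failure is inadvertently created inside $\call{C}_\call{I}'$. This is exactly why the cut must be made at the \emph{earliest} failing test in $\call{S}_t$: every equation moved into $\call{C}_\call{I}'$ is then a deduction equation, which by Proposition~\ref{equiv:prop:stutter:free} cannot cause unsatisfiability on either side of the comparison.
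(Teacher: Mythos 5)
Your $\Leftarrow$ direction and the reduction to a single failing equality test in $\call{S}_t$ are fine, but the ``surgery'' step has a genuine gap: pushing the deduction states of $\call{C}_t$ that precede the chosen test into $\call{C}_\call{I}'$ destroys well-formedness, so $\call{C}_\call{I}'\notin\call{C}_h^\sfree$ and the proposition's requirement on the stutter-free witness is no longer met. Concretely, the decomposition of Lemma~\ref{equiv:lem:decompose:ASD:connection} places a deduction state in the testing part precisely when it recomputes a value already produced at an earlier deduction state of $\call{C}_d$; after your re-insertion, $\call{C}_\call{I}'$ contains a deduction state $j$ and an earlier deduction state $i$ with $\call{V}(i)\sigma=\call{V}(j)\sigma$ and $\call{V}(i)\neq\call{V}(j)$, and since every state of an ASD occurs in \Outvar{}, $j$ is an emission state --- exactly the situation that Definition~\ref{def:well:formed} forbids. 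Choosing the earliest failing test guarantees satisfiability of the moved equations (via Proposition~\ref{equiv:prop:stutter:free}), but it does nothing for well-formedness, which is the property the rest of the decidability argument actually needs: complete sets of solutions of a finitary deduction system are only required to cover $\call{C}_h^\sfree$, not arbitrary deduction-only ASDs.

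The missing idea is the paper's: do not cut at the first failing equality test, but at the first deduction state $i_0$ of $\call{C}_t$ whose recomputed value diverges, under the trace induced by $\call{C}_h'$, from the value of the corresponding state $\psi(i_0)$ of $\call{C}_\call{I}$ (after first arguing, as in the paper's Claim, that deduction states of $\call{C}_t$ that never diverge can be converted into input or re-use states). By minimality of $i_0$, all operands of that deduction are input states whose values agree with $\call{C}_\call{I}$ under both traces, so the two-equation testing ASD with system $\set{x_0\unif f(x_1,\ldots,x_n),\; x_0\unif x_t}$, its inputs $x_1,\ldots,x_n,x_t$ connected to $\psi(i_1),\ldots,\psi(i_n),\psi(i_0)$, already distinguishes $\call{C}_h$ from $\call{C}_h'$ --- one deduction and one test, with $\call{C}_\call{I}$ left untouched. (If no deduction state diverges, your equality-test case applies and zero deductions suffice.) This is exactly why the statement allows one deduction in $\call{C}_t$: that deduction is inherently a stutter and cannot be absorbed into the stutter-free part.
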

\begin{proof}
  The converse direction is trivial.

  First let us note that if $\call{C}'\in \call{C}_h^\star \setminus
  {\call{C}_h'}^\star$ then, adding test equations to $\call{C}'$
  which are satisfied by \trace{\call{C}'}{\call{C}'\comp\call{C}_h}
  yields another symbolic derivation in $\call{C}'\in \call{C}_h^\star
  \setminus {\call{C}_h'}^\star$. Thus and wlog we let $\call{C}'\in
  \call{C}_h^\star \setminus {\call{C}_h'}^\star$ be an aware ASD.
  According to Lemma~\ref{equiv:lem:decompose:ASD:connection}
  $\call{C}'$ can be split into one stutter-free derivation
  $\call{C}_\call{I}=\defsd{_\call{I}}$ and one test derivation
  $\call{C}_t=\defsd{_t}$.  We also define a partition
  $\call{S}_t^d\cup\call{S}_t^t$ of $\call{S}_t$ such that
  $\call{S}_t^d$ contains only deduction equations and $\call{S}_t^t$
  contains only test equations. Let
  $\call{C}_t^d=(\call{V}_t,\call{S}_t^d,\call{K}_t,\Invar_t,\Outvar_t)$.
  Let us define the following substitutions:
  $$
  \left\lbrace
    \begin{array}{rcl@{\hspace*{3em}}rcl}
      \sigma_\call{I}&=&\trace{\call{C}_\call{I}}{\call{C}_\call{I}\comp\call{C}_h} &
      \sigma_\call{I}'&=&\trace{\call{C}_\call{I}}{\call{C}_\call{I}\comp\call{C}_h'} \\
      \sigma_t&=&\trace{\call{C}_t}{\call{C}_t\comp\call{C}_\call{I}\comp\call{C}_h} &
      \sigma_t'&=&\trace{\call{C}_t'}{\call{C}_t'\comp\call{C}_\call{I}\comp\call{C}_h}\\
    \end{array}
  \right.
  $$
  where the ASD $\call{C}_t'$ is constructed from $\call{C}_t$ as
  follows.  We note that, if $\call{V}_t(i)=\call{V}_t(j)$ for two
  distinct states $i,j$ which are not reuse states, we can introduce a
  new variable $x$, change $\call{V}_t(j)$ to $x$, and introduce in
  $\call{S}_t$ a new test equation $\call{V}_t(i)\unif x$. In other
  words we can assume \textit{wlog} that $\call{V}_t$ is injective on
  states which are not reuse states.  This permits one to ensure that
  the subset $\call{S}_t^d$ of equations which are not test equations
  is satisfiable in any closed connection with another symbolic
  derivation. We define
  $\sigma_t^d=\trace{\call{C}_t^d}{\call{C}_t^d\comp\call{C}_\call{I}\comp\call{C}_h'}$.

  By the second point of
  Lemma~\ref{equiv:lem:decompose:ASD:connection} there exists a
  mapping $\psi: \Ind_t\to\Ind_\call{I}$ such that for every
  $i\in\Ind_t$ we have
  $\call{V}_t(i)\sigma_t=\call{V}_\call{I}(\psi(i))\sigma_\call{I}$.
  \textit{Wlog} we assume that $\psi$ is defined as an extension of
  the connection between $\call{C}_\call{I}$ and $\call{C}_t$, thereby
  ensuring that for input states $i$ of $\call{C}_t$ we also have
  $\call{V}_t(i)\sigma_t'=\call{V}_\call{I}(\psi(i))\sigma_\call{I}'$.

  \begin{claim}
    Wlog we can assume that for any deduction state $i\in\Ind_t$ we
    have
    $\call{V}_t(i)\sigma_t'\neq\call{V}_\call{I}(\psi(i))\sigma_\call{I}'$.
  \end{claim}

  \begin{proofclaim}
    Let $i\in\Ind_t$ be a deduction state such that
    $\call{V}_t(i)\sigma_t' =
    \call{V}_\call{I}(\psi(i))\sigma_\call{I}'$. Adding a reuse state
    if necessary, we can change $i$ into an input state that is
    connected to $\psi(t)$ (or a state which is a reuse of $\psi(i)$).
    This construction does not change $\sigma_t$ nor $\sigma_t'$ and
    thus the fact that
    $\call{C}_t\comp\call{C}_\call{I}\comp\call{C}_h$ or
    $\call{C}_t\comp\call{C}_\call{I}\comp\call{C}_h'$ is satisfiable.
    When repeatedly applying it, we obtain a symbolic derivation
    $\call{C}_t$ that satisfies the claim.
  \end{proofclaim}

  We now split the analysis in two cases depending on whether the set
  $I_t\subseteq\Ind_t$ of indexes $i$ such that
  $\call{V}_t(i)\sigma_t'\neq\call{V}_\call{I}(\psi(i))\sigma_\call{I}'$
  is empty or not.  If it is empty, the claim implies that we can
  assume there is no deduction states in $\call{C}_t$, and thus that
  $\call{S}_t=\call{S}_t^t$. Since
  $\call{C}_t\comp\call{C}_\call{I}\comp\call{C}_h$ is satisfiable but
  not $\call{C}_t\comp\call{C}_\call{I}\comp\call{C}_h'$ there exists
  two input states $i,j$ and one equation
  $\call{V}_t(i)\unif\call{V}_t(j)$ in $\call{S}_t$ which is satisfied
  by $\sigma_t$ but not by $\sigma_t'$. Thus $\chi(\call{C}_\call{I})$
  contains one symbolic derivation $(\call{V}:i\in\set{1,2}\mapsto
  x_i,\set{x_1\unif x_2},\emptyset,\set{1,2},\emptyset)$ where $1$ is
  connected to $\psi(i)$ and $2$ is connected to $\psi(j)$.

  On the other hand, if $I_t$ is not empty, let $i_0$ be minimal in
  this set, and let $\call{V}_t(i_0)\unif
  f(\call{V}_t(i_1),\ldots,\call{V}_t(i_n))$ be the equation
  corresponding to this deduction state in $\call{S}_t^d$.  Given the
  claim we can assume that $i_t$ is the first deduction state, and
  thus that all preceding states are input states. Thus there exists
  an ordering on the set $\Ind_0=\set{t,0,\ldots,n}$ such that the
  following symbolic derivation is in $\chi(\call{C}_\call{I})$ and
  satisfies the proposition:
  $$
  (\call{V}:i\in\Ind_0\mapsto x_i,\set{x_0\unif f(x_1,\ldots,x_n) ~,~
    x_0 \unif x_t},\set{t,1,\ldots,n},\emptyset)
  $$
\end{proof}

Now we simply gather the results from Lemma~\ref{equiv:lem:symmetry}
and Proposition~\ref{equiv:prop:minimal:test}. 

\begin{proposition}{\label{equiv:prop:inclusion}}
  Given two HSDs $\call{C}_h$ and $\call{C}_h'$ we have
  $\call{C}_h^\star\subseteq{\call{C}_h'}^\star$ if, and only if, there exists
  a symbolic testing derivation $\call{C}_t$ with at most one deduction
  state and one equality and a connection $\varphi$ such that
  $(\call{C}_h\comp_\varphi\call{C}_t)^\sfree\subseteq(\call{C}_h'\comp_\varphi\call{C}_t)^\star$.
\end{proposition}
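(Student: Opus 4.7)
The plan is to derive Proposition~\ref{equiv:prop:inclusion} by stitching together the three preceding results: Lemma~\ref{equiv:lem:split}, Lemma~\ref{equiv:lem:symmetry}, and Proposition~\ref{equiv:prop:minimal:test}. Informally, Lemma~\ref{equiv:lem:split} already splits inclusion into a stutter-free part and a testing part, Lemma~\ref{equiv:lem:symmetry} lets us swap the roles of the testing ASD and the stutter-free ASD in the analysis, and Proposition~\ref{equiv:prop:minimal:test} bounds the size of the testing ASD that suffices. The statement of Proposition~\ref{equiv:prop:inclusion} is then exactly the symmetric reformulation obtained by reading the inclusion in the testing ASD's solution set.

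For the direct direction, I would assume $\call{C}_h^\star \subseteq {\call{C}_h'}^\star$ and pick an arbitrary testing ASD $\call{C}_t$ with at most one deduction and one equality test, together with a connection $\varphi$. Given any $\call{C}_\call{I} \in (\call{C}_h \comp_\varphi \call{C}_t)^\sfree$, I would unfold the definition to get that $\call{C}_t \comp_\varphi \call{C}_\call{I}$ is a solution of $\call{C}_h$, hence by hypothesis also a solution of $\call{C}_h'$. Re-associating the composition (which is legitimate by the associativity remark following Definition~\ref{def:connection:derivation}) and applying Lemma~\ref{equiv:lem:symmetry}, this would yield $\call{C}_\call{I} \in (\call{C}_h' \comp_\varphi \call{C}_t)^\star$, as required.

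For the converse, I would argue by contraposition. Assume $\call{C}_h^\star \not\subseteq {\call{C}_h'}^\star$. By Proposition~\ref{equiv:prop:minimal:test}, there exist a stutter-free ASD $\call{C}_\call{I} \in \call{C}_h^\sfree$ and a testing ASD $\call{C}_t$ with at most one deduction and one equality test such that $\call{C}_t \in \chi(\call{C}_\call{I})$, i.e.\ $\call{C}_t \comp \call{C}_\call{I} \in \call{C}_h^\star \setminus {\call{C}_h'}^\star$. Applying Lemma~\ref{equiv:lem:symmetry} to $\call{C}_\call{I} \in \call{C}_h^\sfree$ and $\call{C}_t$ places $\call{C}_\call{I}$ in $(\call{C}_h \comp_\varphi \call{C}_t)^\sfree$ for the appropriate connection $\varphi$, while the failure in ${\call{C}_h'}^\star$ means precisely that $\call{C}_\call{I} \notin (\call{C}_h' \comp_\varphi \call{C}_t)^\star$. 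This witnesses the negation of the right-hand side.

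The main obstacle will be bookkeeping the connection functions: composing $\call{C}_t$, $\call{C}_\call{I}$, $\call{C}_h$, and $\call{C}_h'$ in different orders uses different decompositions of a single connection, and one must carefully verify that the restrictions of the connection functions agree on input/output indices so that the implicit use of associativity and of Lemma~\ref{equiv:lem:symmetry} is justified. Beyond this, the argument is a direct combination of the earlier lemmas and introduces no new technical content.
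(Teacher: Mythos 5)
Your proposal is correct and follows essentially the same route as the paper: the direct direction is the associativity/re-association argument (the paper phrases it contrapositively, constructing $\call{C}_\call{I}'=\call{C}_\call{I}\comp_{\varphi'}\call{C}_t$ from a counterexample, while you argue it directly), and the converse is exactly the paper's combination of Proposition~\ref{equiv:prop:minimal:test} with Lemma~\ref{equiv:lem:symmetry}. The only cosmetic difference is that your invocation of Lemma~\ref{equiv:lem:symmetry} in the direct direction is superfluous, since only membership in $(\call{C}_h'\comp_\varphi\call{C}_t)^\star$ (not in the stutter-free set) is needed there and re-association already gives it.
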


\begin{proof}
  Let us first prove the contrapositive of the direct direction. Let
  $\call{C}_\call{I}$ be an ASD in
  $(\call{C}_h\comp_\varphi\call{C}_t)^\sfree\setminus(\call{C}_h'\comp_\varphi\call{C}_t)^\star$,
  and $\psi$ be a connection such that:
  $$
  \left\lbrace
    \begin{array}{rcl}
      \call{C}_\call{I}\comp_\psi(\call{C}_h\comp_\varphi\call{C}_t) & \hspace*{2em} &
      \text{is closed and satisfiable}\\
      \call{C}_\call{I}\comp_\psi(\call{C}_h'\comp_\varphi\call{C}_t) & \hspace*{2em} &
      \text{is closed and not satisfiable}\\
    \end{array}
  \right.
  $$
  From $\varphi$ and $\psi$ we easily define two connections
  $\varphi'$ and $\psi'$ such that
  $\call{C}_\call{I}\comp_{\varphi'}\call{C}_t$ is an ASD
  $\call{C}_\call{I}'$ such that
  $\call{C}_\call{I}'\comp_{\psi'}\call{C}_h$ is closed and
  satisfiable whereas $\call{C}_\call{I}'\comp_{\psi'}\call{C}_h'$ is
  closed but not satisfiable. Hence:
  $$
  (\call{C}_h\comp_\varphi\call{C}_t)^\sfree\setminus(\call{C}_h'\comp_\varphi\call{C}_t)^\star\neq\emptyset
  $$
  implies $\call{C}_h^\star\not\subseteq {\call{C}_h'}^\star$.

  Let us now prove the contrapositive of the converse implication and
  assume $\call{C}_h^\star\not\subseteq {\call{C}_h'}^\star$. By
  Proposition~\ref{equiv:prop:minimal:test} there exists a symbolic
  derivation $\call{C}_\call{I}\in\call{C}_h^\sfree$, a testing ASD
  $\call{C}_t$ and a connection $\psi$ such that:
  $$
  \left\lbrace
    \begin{array}{l}
      \call{C}_t\comp_\psi\call{C}_\call{I} \in {\call{C}_h}^\star\\
      \call{C}_t\comp_\psi\call{C}_\call{I} \notin {\call{C}_h'}^\star\\
      \call{C}_t \text{ contains at
        most one deduction and one equality test}\\
    \end{array}
  \right.
  $$
  By Lemma~\ref{equiv:lem:symmetry} this implies that there exists a
  connection $\varphi$ such that
  $\call{C}_\call{I}\in(\call{C}_h\comp_\varphi\call{C}_t)^\sfree$.
  Given the construction it is clear that $\call{C}_\call{I}\notin
  (\call{C}_h'\comp_\varphi\call{C}_t)^\star$.
\end{proof}

The proof of the following theorem depends on the fact that for
finitary deduction systems, the set $\min_<(
(\call{C}_t\comp\call{C}_h)^\sfree)$ is by definition finite. The test
of Proposition~\ref{equiv:prop:inclusion} thus becomes effective by
Lemma~\ref{equiv:lem:reduction:sol} when a finite witness set is
available.

\begin{theorem}{(Inclusion of $\call{C}_h^\star$ into ${\call{C}_h'}^\star$)\label{equiv:theo:inclusion}}
  Let \call{D} be a finitary deduction system. The inclusion
  $\call{C}_h^\star\subseteq{\call{C}_h'}^\star$ is decidable for any two honest
  \call D-symbolic derivations $\call{C}_h,\call{C}_h'$.
\end{theorem}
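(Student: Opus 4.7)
The plan is to combine Proposition~\ref{equiv:prop:inclusion} with the effective finitariness hypothesis. Proposition~\ref{equiv:prop:inclusion} already reduces $\call{C}_h^\star\subseteq{\call{C}_h'}^\star$ to verifying, for each testing ASD $\call{C}_t$ with at most one deduction state and one equality together with every admissible connection $\varphi$, the inclusion $(\call{C}_h\comp_\varphi\call{C}_t)^\sfree\subseteq(\call{C}_h'\comp_\varphi\call{C}_t)^\star$. The set of candidate pairs $(\call{C}_t,\varphi)$ is finite up to renaming: the unification system of $\call{C}_t$ is built from at most two non-input states using public symbols from $\call{F}_p$, and the monotone bijections $\varphi$ matching its input/output states to those of $\call{C}_h$ form a finite set that is readily enumerated.

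For each such pair, I would compute with the procedure provided by effective finitariness a finite complete set $\Sigma$ of solutions of $\call{C}_h\comp_\varphi\call{C}_t$ in the sense of Definition~\ref{equiv:def:complete:set}. By the covering property, every $\call{C}_\call{I}\in(\call{C}_h\comp_\varphi\call{C}_t)^\sfree$ satisfies $\call{C}_m\le \call{C}_\call{I}\comp\call{C}_c$ for some $\call{C}_m\in\Sigma$ and some stutter-free $\call{C}_c$. Combined with Lemma~\ref{equiv:lem:replace:constant:by:term} (which transfers solutions through openings) and Lemma~\ref{equiv:lem:reduction:sol} (which lifts inclusion from minimal elements to arbitrary ones), this means that the inclusion $(\call{C}_h\comp_\varphi\call{C}_t)^\sfree\subseteq(\call{C}_h'\comp_\varphi\call{C}_t)^\star$ holds if and only if each of the finitely many $\call{C}_m\in\Sigma$ belongs to $(\call{C}_h'\comp_\varphi\call{C}_t)^\star$.

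It remains to argue that this pointwise membership test is decidable. Fixing $\call{C}_m\in\Sigma$ and the connection $\psi$ witnessing $\call{C}_m\in(\call{C}_h\comp_\varphi\call{C}_t)^\star$, the diff-style bijection between the input/output states of $\call{C}_h$ and $\call{C}_h'$ transports $\psi$ to a canonical candidate $\psi'$ that yields a closed connection $\call{C}_m\comp_{\psi'}(\call{C}_h'\comp_\varphi\call{C}_t)$. By Lemma~\ref{lem:closed:imply:ground} this closed symbolic derivation admits at most one ground substitution in normal form, which is computed inductively along the state ordering; membership in $(\call{C}_h'\comp_\varphi\call{C}_t)^\star$ then reduces to checking that this candidate satisfies the finitely many equality constraints, which in turn is a satisfiability question over \call{D} and therefore decidable by effective finitariness. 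Iterating the procedure over every pair $(\call{C}_t,\varphi)$ and every $\call{C}_m\in\Sigma$ yields the required decision procedure.

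The main obstacle I expect is the second paragraph: justifying that a complete set $\Sigma$ in the sense of Definition~\ref{equiv:def:complete:set} can genuinely play the role of $\min_<\bigl((\call{C}_h\comp_\varphi\call{C}_t)^\sfree\bigr)$ demanded by Lemma~\ref{equiv:lem:reduction:sol}. The covering inequality $\call{C}_m\le\call{C}_\call{I}\comp\call{C}_c$ introduces an extra stutter-free composition $\call{C}_c$ that must be absorbed cleanly, and the argument must show that testing membership of every $\call{C}_m\in\Sigma$ in ${(\call{C}_h'\comp_\varphi\call{C}_t)}^\star$ suffices not just for $\call{C}_\call{I}\comp\call{C}_c$ but for $\call{C}_\call{I}$ itself, using the stability under opening provided by Lemma~\ref{equiv:lem:replace:constant:by:term}. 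The remaining steps (enumeration of $(\call{C}_t,\varphi)$, computation of the ground substitution in the closed connection) are essentially bookkeeping once this core argument is in place.
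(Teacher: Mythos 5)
Your route is the same as the paper's: reduce via Proposition~\ref{equiv:prop:inclusion} to finitely many pairs $(\call{C}_t,\varphi)$ with $\call{C}_t$ of bounded size, compute for each pair a finite complete set $\Sigma$ of solutions of $\call{C}_h\comp_\varphi\call{C}_t$, and decide the inclusion by testing each member of $\Sigma$ for membership in $(\call{C}_h'\comp_\varphi\call{C}_t)^\star$. The step you explicitly leave open --- absorbing the auxiliary stutter-free derivation $\call{C}_c$ in the covering inequality $\call{C}_m\le\call{C}_\call{I}\comp\call{C}_c$ --- is exactly where the paper's proof spends its effort, and it is closed by two applications of Lemma~\ref{equiv:lem:replace:constant:by:term} rather than by invoking Lemma~\ref{equiv:lem:reduction:sol} as a black box. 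Concretely, let $\call{C}_\tau\in(\call{C}_h\comp_\varphi\call{C}_t)^\sfree\setminus(\call{C}_h'\comp_\varphi\call{C}_t)^\star$ witness non-inclusion. Completeness of $\Sigma$ gives $\call{C}_\sigma\in\Sigma$ and a stutter-free $\call{C}_c$ with $\call{C}_\sigma\le\call{C}_\tau\comp\call{C}_c$, and unfolding the definition of $\le$ gives a set of constants $C$ and a stutter-free $\call{C}_\theta$ with $\open{\call{C}_\sigma}{C}\comp\call{C}_\theta=\call{C}_\tau\comp\call{C}_c$. Lemma~\ref{equiv:lem:replace:constant:by:term} applied with $C=\emptyset$ shows that stacking $\call{C}_c$ onto the closed, satisfiable connection $\call{C}_\tau\comp_\psi(\call{C}_h\comp_\varphi\call{C}_t)$ preserves satisfiability, while stacking it onto the closed but unsatisfiable connection with $\call{C}_h'$ cannot restore satisfiability; a second application then shows that if $\call{C}_\sigma$ belonged to $(\call{C}_h'\comp_\varphi\call{C}_t)^\star$, so would $\open{\call{C}_\sigma}{C}\comp\call{C}_\theta=\call{C}_\tau\comp\call{C}_c$, a contradiction. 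Hence the covering element $\call{C}_\sigma\in\Sigma$ is itself a distinguishing ASD, which is what makes the finite test both sound and complete.

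Two smaller remarks. First, your per-pair biconditional (``the $\sfree$-inclusion holds iff every $\call{C}_m\in\Sigma$ is in $(\call{C}_h'\comp_\varphi\call{C}_t)^\star$'') is not literally what the argument gives, since $\Sigma$ is only guaranteed to lie in $(\call{C}_h\comp_\varphi\call{C}_t)^\star$ and not in the stutter-free subset; the correct aggregate statement, which is what the paper proves and what your procedure actually needs, is that $\call{C}_h^\star\not\subseteq{\call{C}_h'}^\star$ iff \emph{some} pair $(\call{C}_t,\varphi)$ yields a $\call{C}_\sigma\in\Sigma$ outside $(\call{C}_h'\comp_\varphi\call{C}_t)^\star$. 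Second, your appeal to effective finitariness for the pointwise membership test is not quite the right tool: once the candidate connection is closed, Lemma~\ref{lem:closed:imply:ground} together with the ground convergence of the $o$-completion lets one compute the unique candidate normal-form substitution and check the remaining equalities; the paper leaves this step implicit, so making it explicit as you do is a genuine (if minor) improvement.
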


\begin{proof}
  By Prop.~\ref{equiv:prop:inclusion} the inclusion does not hold if,
  and only if, there exists an ASD $\call{C}_t$ of bounded length and
  a connection function $\varphi$ such that:
  $$
  \Delta= (\call{C}_h\comp_\varphi\call{C}_t)^\sfree \setminus
  (\call{C}_h'\comp_\varphi\call{C}_t)^\star \neq\emptyset
  $$
  Let $\call{C}_\tau$ be an ASD in $\Delta$. By definition of finitary
  deduction systems one can compute from
  $\call{C}_h\comp_\varphi\call{C}_t$ a finite set $\Sigma$ of ASDs
  such that there exists $\call{C}_\sigma\in\Sigma$ and $\call{C}_c$
  stutter free such that
  $\call{C}_\call{I}'\le\call{C}_\call{I}\comp\call{C}_c$.  By
  definition of the ordering there exists a stutter free derivation
  $\call{C}_\theta$ and a set of constants $C$ such that:
  $$
  \open{\call{C}_\sigma}{C}\comp\call{C}_\theta=\call{C}_\tau\comp\call{C}_c
  $$
  By hypothesis there exists a connection function $\psi$ such that
  $\call{C}_\tau\comp_\psi(\call{C}_h\comp_\varphi\call{C}_t)$ is
  closed and satisfiable whereas
  $\call{C}_\tau\comp_\psi(\call{C}_h'\comp_\varphi\call{C}_t)$ is
  closed but not satisfiable.  By
  Lemma~\ref{equiv:lem:replace:constant:by:term} (employed with
  $C=\emptyset$)
  $\call{C}_c\comp(\call{C}_\tau\comp_\psi(\call{C}_h\comp_\varphi\call{C}_t))$
  is satisfiable whereas, since
  $\call{C}_\tau\comp_\psi(\call{C}_h'\comp_\varphi\call{C}_t)$ is
  closed,
  $\call{C}_c\comp(\call{C}_\tau\comp_\psi(\call{C}_h'\comp_\varphi\call{C}_t))$
  is not.  By Lemma~\ref{equiv:lem:replace:constant:by:term} if
  $\call{C}_\sigma\in{\call{C}_h'}^\star$ then so is
  $\call{C}_c\comp(\call{C}_\tau\comp_\psi(\call{C}_h'\comp_\varphi\call{C}_t))$.
  Since $\call{C}_\sigma\in\Sigma$ implies
  $\call{C}_\sigma\in(\call{C}_h\comp_\varphi\call{C}_t)^\star$ we
  thus have
  $\call{C}_\sigma\in(\call{C}_h\comp_\varphi\call{C}_t)^\star\setminus
  (\call{C}_h'\comp_\varphi\call{C}_t)^\star$. Thus, if
  $\call{C}_h\not\subseteq\call{C}_h'$ one can guess (in bounded time)
  a symbolic derivation $\call{C}_t$ and compute a finite $\Sigma$ of
  symbolic derivations that contains one which is not in
  $(\call{C}_h'\comp\call{C}_t)^\star$.

  Conversely it is clear if one such derivation is found then
  $\call{C}_h^\star\not\subseteq{\call{C}_h'}^\star$.
\end{proof}

As a trivial consequence we obtain the announced theorem.

\begin{reftheorem}{equiv:theo:dy:decidable}
  Symbolic equivalence is decidable for finitary deduction systems.
\end{reftheorem}

\section{Conclusion}

We have introduced in this paper the notion of \emph{finitary
  deduction systems}, and proved that symbolic equivalence is
decidable for such attacker models. We believe that definition also
captures the essence of \emph{lazy intruder} techniques that are
employed in many tools. Accordingly, we believe that a practical
consequence of this paper will be the inclusion in existing
reachability analysis tools of a symbolic equivalence checking
algorithm.

In terms of comparison of expected runtimes for tools currently
deciding reachability, a back-of-the-enveloppe computation for tools
employing lazy constraint solving techniques such as
OFMC~\cite{DBLP:journals/ijisec/BasinMV05} and CL-AtSe~\cite{T-RTA06}
would be twice (given that two protocols have to be analyzed and
assuming tool is not parallelized) the runtime for safe (since these
tools usually stop at the first attack found, and thus typically have
a much shorter running time in these cases) protocols of a similar
size. We refer the interested reader to~\cite{T-RTA06} for more
details, but given that CL-AtSe now implements a concurrent search
algorithm and has been deployed on Amazon's EC2, we believe that less
than 10s for reasonable industrial protocols is achievable nowadays.

\bibliography{biblio}

\end{document}
\begin{shortversion}

\newpage{}
\appendix{}

\section{Examples}

We give in this section examples to illustrate the definitions given.

\begin{example}{(Symbolic derivations modeling honest agents)\label{ex:narration}}
  Let us consider the cryptographic protocol for deduction system
  $\DY$ where $\call{F}_\call{D}$ and $\call{P}_\call{D}$ have been
  extended by a free public symbol $f$:
  $$
  \begin{array}{c@{\rightarrow}c@{:}l}
    A & B &  ~\penc{N_a}{\pk B}\\
    B & A & ~\penc{f(N_a) }{\pk A}\\
    \multicolumn 3l {\text{\bf where }}\\
    \multicolumn 3l {A\textbf{ knows }A,B,\pk B,\pk A, \sk A} \\
    \multicolumn 3l {B\textbf{ knows }A,B,\pk A,\pk B, \sk B} \\
  \end{array}
  $$
  Let us define a symbolic derivation for role $B$:
  $$
  \begin{array}{rcl}
    \Ind &=& \set{0,\ldots,8}\\ 
    \call{V}{}&=& i\in\Ind \mapsto x_i\\
    \call{K}{}  &=&\set{A,B,\pk A,\pk B, \sk B}\\
    \Invar{} &=&\set{5}\\
    \Outvar{} &=&\set{8}\\
    {\call{S}}&=&\{x_0\unif{}A,x_1\unif{}B,x_2\unif{}\pk A,x_3\unif{}\pk B, x_4\unif{} \sk B\\
    &&x_6\unif{}\pdec{x_5}{x_4}, x_7\unif{}f(x_6), x_8\unif{} \penc{x_7}{x_2}\}
  \end{array}
  $$
  The set of deduction states is $\{6,7,8\}$, there are no re-use
  state, the set of memory states is $\{0,\ldots,4\}$ and the only
  reception state is $5$. Assuming that the role $B$ tests whether the
  received message is a cipher, one may add a ninth deduction state
  with $x_9\unif{}\penc{x_6}{x_3}$ and an equation $x_5\unif{} x_9$.
\end{example}

\begin{example}{(Attacker symbolic derivation\label{ex:asd}} The
  following derivation $\call{C}'$ is an ASD for the same deduction
  system as Example~\ref{ex:narration}:
  $$
  \begin{array}{rcl}
    \Ind' &=& \set{0',\ldots,8} \\ 
    \call{V}'{} &=& i'\in\Ind' \mapsto z_i\\
    \call{K}{} &=&\set{n} \subset \Nonces\\
    \Invar{}'  &=&\set{0',\ldots,3',8'}\\
    \Outvar{}' &=&\set{5'}\cup\Ind'\\
    {\call{S}}'      &=&\{z_{4}\unif{}n,z_{5}\unif{}\penc{z_{4}}{z_{3}}, \\
    & & z_{6}\unif{}f(z_{4}), z_{7} \unif{} \penc{z_6}{z_2},z_8\unif{}z_7\}
  \end{array}
  $$
  Informally the ASD expresses that the attacker receives some key
  $k$, creates a nonce $n$, sends the encrypted nonce to a role $B$ as
  in Example~\ref{ex:narration}.  Then the attacker tries to check
  that applying $f$ to $n$ gives a term equal to the decryption of B's
  response.
\end{example}

\begin{example}{(Connections)\label{ex:compos}}
  Let $\call{C}_h$ be the symbolic derivation in
  Example~\ref{ex:narration}:
  $$
  \begin{array}{rcl}
    \Ind_h &=& \set{0,\ldots,8}\\ 
    \call{V}{}_h&=& i\in\Ind \mapsto x_i\\
    \call{K}{}_h  &=&\set{A,B,\pk A,\pk B, \sk B}\\
    \Invar{}_h &=&\set{5}\\
    \Outvar{}_h &=&\set{0,\ldots,8,8}\\
    {\call{S}}_h&=&\{x_0\unif{}A,x_1\unif{}B,x_2\unif{}\pk A,x_3\unif{}\pk B, x_4\unif{} \sk B\\
    &&x_6\unif{}\pdec{x_5}{x_4}, x_7\unif{}f(x_6), x_8\unif{} \penc{x_7}{x_2}\}
  \end{array}
  $$
  We model the initial knowledge of the intruder with another symbolic
  derivation $\call{C}_K$:
  $$
  \begin{array}{rcl}
    \Ind_K &=& \set{0^k,\ldots,3^k}\\ 
    \call{V}_K&=& i^k\in\Ind_k \mapsto y_i\\
    \call{K}{}_K  &=&\set{A,B,\pk A,\pk B}\\
    \Invar{}_K &=&\emptyset\\
    \Outvar{}_K &=&\Ind_K\\
    {\call{S}}_K&=&\set{y_{0}\unif{}A,y_{1}\unif{}B,y_{2}\unif{}\pk A,y_3\unif{}\pk B}
  \end{array}
  $$
  and we let $\call{C}'$ be the following derivation:
  $$
  \begin{array}{rcl}
    \Ind' &=& \set{0',\ldots,8} \\ 
    \call{V}'{} &=& i'\in\Ind' \mapsto z_i\\
    \call{K}{} &=&\set{n} \subset \Nonces\\
    \Invar{}'  &=&\set{0',\ldots,3',8'}\\
    \Outvar{}' &=&\set{5'}\cup\Ind'\\
    {\call{S}}'      &=&\{z_{4}\unif{}n,z_{5}\unif{}\penc{z_{4}}{z_{3}}, \\
    & & z_{6}\unif{}f(z_{4}), z_{7} \unif{} \penc{z_6}{z_2},z_8\unif{}z_7\}
  \end{array}
  $$
  Let $\phi$ be the application from $0^k,\ldots, 3^k, 5',8$ to
  $0',\ldots, 3', 5, 8'$ respectively and $\psi$ be a function of
  empty domain.  Then we have $(\call{C}_h\comp_\psi \call{C}_K)
  \comp_{\phi} \call{C}'$:
  $$
  \begin{array}{rcl}
    \Ind &=& \set{0,\ldots,4,0^k,\ldots,3^k,5',6',7',6,7,8}\\ 
    \call{V}{} &=& {\call{V}_h}_{|\Ind} \cup {\call{V}_K}_{|\Ind}\cup {\call{V}'}_{|\Ind}\\
    \call{K}{} &=&\set{A,B,\pk A, \pk B, \sk B,n} \\
    \Invar{}  &=&\emptyset\\
    \Outvar{} &=&\Ind\cap\Ind'\\
    {\call{S}}      &=&\{x_0\unif{}A,x_1\unif{}B,x_2\unif{}\pk A,x_3\unif{}\pk B, x_4\unif{} \sk B\\
    &&x_6\unif{}\pdec{x_5}{x_4}, x_7\unif{}f(x_6), x_8\unif{} \penc{x_7}{x_2}\\
    &&y_{0}\unif{}A,y_{1}\unif{}B,y_{2}\unif{}\pk A,y_3\unif{}\pk B\\
    && z_{5}\unif{}n,z_{6}\unif{}\penc{z_{5}}{z_{3}},\\
    && z_{7}\unif{}f(z_{5}), z_{8} \unif{} \penc{z_7}{z_2},z_9\unif{}z_8\}
  \end{array}
  $$
  with the ordering:
  $$
  \begin{array}{l}
    0 < 1 < 2 < 3 < 4 < 5' < 6 < 7 < 8\\
    0^k < \ldots < 3^k <  4' < \ldots < 7'<8\\
  \end{array}
  $$
\end{example}

\section{Proofs}
One important property of unification systems that we use in the
proofs is the following replacement property~\cite{DBLP:conf/frocos/ChevalierLR07}.

\begin{lemma}{\label{lem:unif:replacement}}
  For any equational theory \call{E}, if a \call{E}-unification system
  \call{S} is satisfied by a substitution $\sigma$, and $c$ is any
  free constant in \Constants{} away from $\call{S}$, then for any
  term $t$, $\sigma\delta_{c,t}$ is also a solution of \call{S}.
\end{lemma}

The proof of the following lemma is a direct consequence of the
definition of symbolic derivations.

\begin{lemma}{(Properties of symbolic derivations)\label{lem:ASD:properties}}
  Let $\call{C}=\defsd{}$ be a symbolic derivation. We have:
  \begin{enumerate}{(i)}
  \item For every variable $\call{V}(i)$ there is at most one equation
    in \call{S} of the form $\call{V}(i)\unif{} f(t_1,\ldots,t_n)$;
  \item If $\call{V}(i)$ is a variable such that the above equation is
    in \call{S}, then either
    \begin{ilitz}
    \item $i$ is a deduction state and $i=\min(j\tq \call V(i)=\call
      V(j))$, or
    \item $i$ is a re-use state.
    \end{ilitz}
  \end{enumerate}
\end{lemma}

\begin{proposition}{\label{equiv:prop:stutter:free}}
  Let $\call{C}_\call{I}=\defsd{_\call{I}}\in\call{C}_h^\star$ be a deduction-only ASD. Then for
  any ground substitution $\sigma$ of domain $\Invar_\call{I}$ the
  unification system $\call{S}_\call{I}\sigma$ is satisfiable in the empty theory.
\end{proposition}

  \begin{proof}
    We remind that a unification system $\call{S}$ is in solved form
    in the empty theory if and only if there exists an ordering $<_u$
    on variables such that $\call{S}$ contains, for each variable $x$,
    at most one equation $x\unif{}t$ and if for every $y\in\Var{t}$ we
    have $y<_u x$.  First let us notice that since $\call{C}_\call{I}$
    is deduction-only, $\call{S}_\call{I}$ does not contain any
    equation $\call{V}_\call{I}(i)\unif \call{V}_\call{I}(j)$ with
    $\call{V}_\call{I}(i)\neq \call{V}_\call{I}(j)$ for the second
    condition would otherwise be impossible to satisfy for any unifier
    of $\call{S}_\call{I}$.

    By definition $\call{S}_\call{I}$ contains exactly one equation
    $\call{V}_\call{I}(i)\unif t$ if $i$ is not an input or the re-use
    of an input state, and none otherwise.  In the former case we can
    assume that for a mgu $\theta$ of $\call{S}$ we have
    $\call{V}(i)\theta=\call{V}(i)$.  Given the condition on the
    deduction equations, $\call{S}_\call{I}$ is in solved form, adding
    to $\call{S}_\call{I}$ equations $\call{V}_\call{I}(i)\unif t_i$,
    for $i\in\Invar_\call{I}$ and $t_i$ a ground term thus leads to a
    unification system also in solved form.
  \end{proof}

\begin{reflemma}{equiv:lem:replace:constant:by:term}
  Let $\call{C}_\call{I}\in\call{C}_h^\star$ with $\call{C}_\call{I}=\defsd{_\call{I}}$, let
  $C\subseteq\call{K}_\call{I}$ and let $\call{C}_c\in{\call{C}_h'}^\sfree$ for some HSD
  $\call{C}_h'$. If a connection $\call{C}_c\comp \call{C}_h\comp\open{\call{C}_\call{I}}{C}$ is
  closed then it is satisfiable.
\end{reflemma}

\begin{proof}
  By Proposition~\ref{equiv:prop:stutter:free} the substitution
  $\trace{\call{C}_c}{\call{C}_c\comp\call{C}_h\comp\open{\call{C}_\call{I}}{\set{c}}}$ satisfies
  $\call{S}_c$. Since $\call{C}_\call{I}$ is an ASD we have $C\cap\Sub{\call{K}\setminus
    C}=\emptyset$, and thus $C\cap\Sub{\call{S}_h}=\emptyset$. Let us denote
  $\call{S}_\call{I}'$ the unification system $\call{S}_\call{I}$ in which the equations
  $x\unif c$ with $c\in C$ are removed. For any substitution $\sigma$
  and any constant $c\in C$, Lemma~\ref{lem:unif:replacement} and
  $\sigma\models_{\call{E}}\call{S}_h\comp\call{S}_\call{I}'$ imply
  $\sigma\delta_{c,t}\models_{\call{E}}\call{S}_h\comp\call{S}_\call{I}'$.

  Let $\sigma'=\trace{\call{C}_\call{I}}{\call{C}_c\comp\call{C}_h\comp\open{\call{C}_\call{I}}{C}}$. For
  each memory state $i\in\Ind_\call{I}$ that contains a constant $c\in C$ we
  let $t_c=\call{V}_\call{I}(i)\sigma'$. We define $\delta$ as the replacement of
  each constant $c\in C$ by the term $t_c$.

  By induction on the indexes of the connection
  $\call{C}_c\comp\call{C}_h\comp\open{\call{C}_\call{I}}{C}$ we have:
  $$
  \trace{\call{C}_c\comp\call{C}_h\comp\open{\call{C}_\call{I}}{C}}
  {\call{C}_c\comp\call{C}_h\comp\open{\call{C}_\call{I}}{C}} =
  \trace{\call{C}_h\comp\call{C}_\call{I}}{\call{C}_h\comp\call{C}_\call{I}}\delta
  $$
  Thus every equation in $\call{S}_h\cup\call{S}_\call{I}$ (minus the removed memory
  equations) is satisfied by the composition with $\call{C}_c$.
  Since every equation in its unification system is satisfied 
  the connection  $\call{C}_c\comp\call{C}_h\comp\open{\call{C}_\call{I}}{C}$ is satisfiable.
\end{proof}

\begin{reflemma}{equiv:lem:decompose:ASD:connection}
  Let $\call{C_I}$ be a $(\call{C}_h,\varphi)$-well-formed ASD. Then
  there exists a connection $\psi$, a well-formed deduction-only ASD
  $\call{C}_d$, and a testing ASD $\call{C}_t$ such that:
  \begin{itemize}
  \item $\call{C_I}=\call{C}_d \comp_\psi \call{C}_t$,
  \item for all HSD $\call{C}'$ and connection $\psi$, the connection
    $\call{C}'\comp_\psi \call{C_I}$ is closed if, and only if,
    $\call{C}'\comp_\psi \call{C}_d$ is closed.
  \end{itemize}
\end{reflemma}

 \begin{proof}
    Let $\call{C}_h$ be a HSD, $\call{C_I}=\defsd{_{\call{I}}}$ bean
    ASD, and $\varphi$ be a connection such that
    $(\call{C_I},\varphi)\in \call{C}_h^\star$.  We construct a
    sequence of couples $(\call{C}_d,\call{C}_t)$ of ASDs such that
    $\call{C}_d$ is deduction-only, $\call{C}_t$ is testing, and such
    that in the end $\call{C}_d$ is well-formed.  We start from:
    $$
    \left\lbrace
      \begin{array}{rcl}
        \call{C}_d&=&(\call{V_I},\call{S_I}\setminus
        \call{S}_=,\call{K_I},\Invar_{\call I},\Outvar_{\call I}\cup
        \Ind_{\call I})\\
        \call{C}_t&=&(\call{V_I},\call{S}_=,\emptyset,
        \Ind_{\call I},\Outvar_{\call I})\\
      \end{array}
    \right.
    $$
    and the connection $\psi$ being the identity. By construction
    $\call{C}_t$ is testing and $\call{C}_d$ is deduction-only.
    However $\call{C}_d$ may not be well-formed.

    For each deduction state $i$ in $\call{C}_d$ such that there
    exists a deduction state $j<i$ with $\call{V_I}(i)\sigma =
    \call{V_I}(j)\sigma$, let $\call{S}_{\call{V_I}(i)}$ be the subset
    of equations of $\call{S_I}$ in which $\call{V_I}(i)$ occurs.
    Since $i$ is a deduction state, $\call{S}_{\call{V_I}(i)}$
    contains one equation $\call{V_I}(i) \unif f(x_1,\ldots,x_n)$.
    Since the ASD is well-formed, all other equations in
    $\call{S}_{\call{V_I}(i)}$ are of the form $\call{V_I}(i) \unif
    \call{V_I}(j)$, and thus are already in $\call{S}_=$. We obtain a
    new couple of ASDs $(\call{C}_d',\call{C}_t')$ by removing the
    state $i$ from $\call{C}_d$ (and thus from the output variables of
    $\call{C}_d$, removing $i$ from the input states of $\call{C}_t$,
    and adding the equation $\call{V_I}(i) \unif f(x_1,\ldots,x_n)$ to
    the unification system of $\call{C}_t$, thereby making $i$ a
    deduction state in $\call{C}_t$.

    It is clear that once the construction is performed on every
    deduction states from $\call{C}_d$, this symbolic derivation will
    be well-formed.
  \end{proof}

\begin{reflemma}{lem:well:formed}
  Let $\call{C}_h,\call{C}_h'$ be two HSDs such that
  $\call{C}_h^\star\setminus {\call{C}_h'}^\star\neq\emptyset$. Then
  $\call{C}_h^\star\setminus {\call{C}_h'}^\star$ contains a
  $(\call{C}_h,\varphi)$-well-formed ASD.
\end{reflemma}

\begin{proof}
  Assume
  $(\call{C_I},\varphi)\in\call{C}_h^\star\setminus{\call{C}_h'}^\star$,
  and $\call{C_I}=\defsd{_{\call{I}}}$, and
  $\sigma=\trace{\call{C_I}}{\call{C_I}\comp_\varphi\call{C}_h}$.  By
  hypothesis $\sigma$ satisfies $\call{S_I}$. Let $\call{S}_1$ be the
  set of equations $\call{V_I}(i)\unif\call{V_I}(j)$ on all states
  $i,j$ such that:
  \begin{ilitz}
  \item $i$ is a deduction state, and
  \item $i<j$, and
  \item $\call{V_I}(i)\sigma=\call{V_I}(j)\sigma$.
  \end{ilitz}
  It is clear that $\call{S_I}\cup\call{S}_1$ is also satisfied by
  $\sigma$.

  Then, replace in \call{S_I} each equation $x\unif
  f(\ldots,\call{V_I}(j),\ldots)$ such that there exists a deduction
  state $i<j$ with $\call{V_I}(i)\sigma=\call{V_I}(j)\sigma$ by the
  equation $x\unif f(\ldots,\call{V_I}(i),\ldots)$, and let
  $\call{S_I}'$ be the obtained unification system. Given the
  equations in $\call{S}_1$ it is clear that
  $\call{S_I}\cup\call{S}_1$ and $\call{S_I}'\cup\call{S}_1$ are
  satisfied by the same set of substitutions. 

  Let $\call{C_I}'=(\call {V_I},\call {S_I}'\cup\call{S_1}, \call
  {K_I}, \Invar_{\call I}, \Outvar_{\call{I}})$. It remains to note
    that:
  \begin{itemize}
  \item
    $\trace{\call{C_I}\comp_\varphi\call{C}_h}{\call{C_I}\comp_\varphi\call{C}_h}=
    \trace{\call{C_I}'\comp_\varphi\call{C}_h}{\call{C_I}'\comp_\varphi\call{C}_h}$;
  \item $\trace{\call{C_I}\comp_\varphi\call{C}_h'}{\call{C_I}\comp_\varphi\call{C}_h'}=
    \trace{\call{C_I}'\comp_\varphi\call{C}_h'}{\call{C_I}'\comp_\varphi\call{C}_h'}$, and thus 
    $(\call{C_I}',\varphi)\notin\call{C}_h'$; 
  \item by construction $\call{C_I}'$ is $(\call{C}_h,\varphi)$-well-formed.
  \end{itemize}
  Thus, $\call{C_I}'$ is $(\call{C}_h,\varphi)$-well-formed ASD in
  $\call{C}_h^\star\setminus {\call{C}_h'}^\star$.
\end{proof}

As a consequence, we obtain the following lemma that permits to split
the symbolic equivalence problem into two simpler problems.

\begin{reflemma}{equiv:lem:split}
  Let $\call{C}_h$ and $\call{C}_h'$ be two HSDs. We have
  $\call{C}_h^\star\subseteq\call{C}_h'^\star$ if, and only if:
  \begin{itemize}
  \item $\call{C}_h^{\sfree}\subseteq \call{C}_h'^\star$;
  \item and for each ASD
    $\call{C}_\call{I}\in\call{C}_h^{\sfree}$ and for all testing ASD
    $\call{C}_t\in (\call{C}_\call{I}\comp\call{C}_h)^\star$ we have
    $\call{C}_t\in(\call{C}_\call{I}\comp\call{C}_h')^\star$.
  \end{itemize}
\end{reflemma} 

\begin{proof}
  Assume
  $(\call{C_I},\psi)\in\call{C}_h^\star\setminus\call{C}_h'^\star$.
  By Lemma~\ref{lem:well:formed} we can assume wlog that
  $\call{C_I}=\defsd{_{\call{I}}}$ is well-formed. By
  Lemma~\ref{equiv:lem:decompose:ASD:connection} \call{C_I} can be
  written $\call{C}_d\comp_\varphi \call{C}_t$ where $\call{C}_d$ is a
  stutter-free ASD and $\call{C}_t$ is a testing ASD. By construction
  we have $(\call{C}_t,\varphi)\in (\call{C}_d\comp_\psi
  \call{C}_h)^\star$.  Since $\call{C}_d \comp_\varphi \call{C}_t =
  \call{C_I} \notin {\call{C}_h'}^\star$ then either
  $\call{C}_d\comp_\psi \call{C}_h'$ is closed, but not satisfiable,
  or $\call{C}_t\comp_\varphi (\call{C}_d \comp_\psi \call{C}_h')$.
  In the former case we have $\call{C}_h^{\sfree}\not\subseteq
  \call{C}_h'^\star$, and in the latter case we have $\call{C}_t\in
  (\call{C}_\call{I}\comp\call{C}_h)^\star \setminus
  (\call{C}_\call{I}\comp\call{C}_h')^\star$.

  Conversely, if one of the two points does not hold, we easily
  construct an ASD in $\call{C}_h^\star\subseteq\call{C}_h'^\star$.
\end{proof}

\begin{reflemma}{equiv:lem:symmetry}
  Assume $\call{C}_\call{I}\in\call{C}_h^\sfree$ and $\call{C}_t\in
  (\call{C}_\call{I}\comp\call{C}_h)^\star$. Then $\call{C}_\call{I}\in (\call{C}_t\comp\call{C}_h)^\sfree$.
\end{reflemma}

\begin{proof}
  We let $\call{C}_\call{I}$, $\call{C}_h$, and $\call{C}_t$ be as in the statement of the
  lemma, and denote them as follows:
  $$
  \left\lbrace
    \begin{array}{rcl}
      \call{C}_\call{I}&=&\defsd{_\call{I}}\\
      \call{C}_h&=&\defsd{_h}\\
      \call{C}_t&=&\defsd{_t}\\
    \end{array}
  \right.
  $$
  Since $\call{C}_\call{I}\in\call{C}_h^\sfree$ there exists a one-to-one\footnote{Since
    the connection is closed the mapping is total.} mapping
  $\varphi:\Invar_\call{I}\cup\Invar_h\to\Outvar_\call{I}\cup\Outvar_h$ such that
  $\call{C}_h'=\call{C}_\call{I}\comp_\varphi\call{C}_h$ is closed and satisfiable.  Let us
  denote $\call{C}_h'=\defsd{_h'}$.
  
  Also by hypothesis there exists a one-to-one mapping $\psi:
  \Invar_h'\cup\Invar_t\to\Outvar_h'\cup\Outvar_t$ such that
  $\call{C}_t\comp_\psi\call{C}_h'$ is closed and satisfiable. Since $\call{C}_h'$ is
  closed the function $\psi$ is actually a mapping from $\Invar_t$ to
  $\Outvar_h'\cup\Outvar_t$. Let $D$ be the subset of the domain of
  $\psi$ of indexes $i$ such that $\psi(i)\in\Outvar_\call{I}$, and
  $\bar{D}$ be its complement in the domain of $\psi$.  Let us define
  from $\psi$ and $D$ two functions:
  $$
  \left\lbrace
    \begin{array}{rcl}
      \psi'&=&\psi_{|\bar{D}}\\
      \varphi'&=&\psi_{|D}\cup\varphi\\
    \end{array}
  \right.
  $$
  Let $\call{C}_h''=\call{C}_h\comp_{\psi'}\call{C}_t$. Since by construction
  $$
  \call{C}_\call{I}\comp_{\varphi'} ( \call{C}_h\comp_{\psi'}\call{C}_t ) =
  \call{C}_t\comp_\psi (\call{C}_h\comp_\varphi\call{C}_\call{I})
  $$
  and $\call{C}_t\in(\call{C}_h\comp_\varphi\call{C}_\call{I})^\star$ the connection between
  $\call{C}_\call{I}$ and $\call{C}_h''$ is also closed and satisfiable, and thus
  $\call{C}_\call{I}\in(\call{C}_h'')^\star$. Since $\call{C}_\call{I}\in\call{C}_h^\sfree$ the first two
  points of the definition of stutter free derivations are satisfied by
  $\call{C}_\call{I}$. Given that:
  $$
  \varphi'_{\Invar_h\cup\Invar_\call{I}} = \varphi_{\Invar_h\cup\Invar_\call{I}}
  $$
  it is easy to see that:
  $$
  \trace{\call{C}_\call{I}}{\call{C}_\call{I}\comp_{\varphi'} ( \call{C}_h\comp_{\psi'}\call{C}_t )}=
  \trace{\call{C}_\call{I}}{\call{C}_\call{I}\comp_{\varphi} \call{C}_h}
  $$
  As a consequence the hypothesis $\call{C}_\call{I}\in\call{C}_h^\sfree$ implies
  $\call{C}_\call{I}\in(\call{C}_h'')^\sfree$.
\end{proof}

\begin{refproposition}{equiv:prop:minimal:test}
  $\call{C}_h^\star \not\subseteq{\call{C}_h'}^\star$ if, and only if,
  there exists $\call{C}_\call{I}\in\call{C}_h^\sfree$ such that
  $\chi(\call{C}_\call{I})$ contains an ASD $\call{C}_t$ with at most
  one deduction and one equality test.
\end{refproposition}

\begin{proof}
  The converse direction is trivial.

  First let us note that if $\call{C}'\in \call{C}_h^\star \setminus
  {\call{C}_h'}^\star$ then, adding test equations to $\call{C}'$
  which are satisfied by \trace{\call{C}'}{\call{C}'\comp\call{C}_h}
  yields another symbolic derivation in $\call{C}'\in \call{C}_h^\star
  \setminus {\call{C}_h'}^\star$. Thus and wlog we let $\call{C}'\in
  \call{C}_h^\star \setminus {\call{C}_h'}^\star$ be an aware ASD.
  According to Lemma~\ref{equiv:lem:decompose:ASD:connection}
  $\call{C}'$ can be split into one stutter-free derivation
  $\call{C}_\call{I}=\defsd{_\call{I}}$ and one test derivation
  $\call{C}_t=\defsd{_t}$.  We also define a partition
  $\call{S}_t^d\cup\call{S}_t^t$ of $\call{S}_t$ such that
  $\call{S}_t^d$ contains only deduction equations and $\call{S}_t^t$
  contains only test equations. Let
  $\call{C}_t^d=(\call{V}_t,\call{S}_t^d,\call{K}_t,\Invar_t,\Outvar_t)$.
  Let us define the following substitutions:
  $$
  \left\lbrace
    \begin{array}{rcl@{\hspace*{3em}}rcl}
      \sigma_\call{I}&=&\trace{\call{C}_\call{I}}{\call{C}_\call{I}\comp\call{C}_h} &
      \sigma_\call{I}'&=&\trace{\call{C}_\call{I}}{\call{C}_\call{I}\comp\call{C}_h'} \\
      \sigma_t&=&\trace{\call{C}_t}{\call{C}_t\comp\call{C}_\call{I}\comp\call{C}_h} &
      \sigma_t'&=&\trace{\call{C}_t'}{\call{C}_t'\comp\call{C}_\call{I}\comp\call{C}_h}\\
    \end{array}
  \right.
  $$
  where the ASD $\call{C}_t'$ is constructed from $\call{C}_t$ as
  follows.  We note that, if $\call{V}_t(i)=\call{V}_t(j)$ for two
  distinct states $i,j$ which are not reuse states, we can introduce a
  new variable $x$, change $\call{V}_t(j)$ to $x$, and introduce in
  $\call{S}_t$ a new test equation $\call{V}_t(i)\unif x$. In other
  words we can assume \textit{wlog} that $\call{V}_t$ is injective on
  states which are not reuse states.  This permits one to ensure that
  the subset $\call{S}_t^d$ of equations which are not test equations
  is satisfiable in any closed connection with another symbolic
  derivation. We define
  $\sigma_t^d=\trace{\call{C}_t^d}{\call{C}_t^d\comp\call{C}_\call{I}\comp\call{C}_h'}$.

  By the second point of
  Lemma~\ref{equiv:lem:decompose:ASD:connection} there exists a
  mapping $\psi: \Ind_t\to\Ind_\call{I}$ such that for every
  $i\in\Ind_t$ we have
  $\call{V}_t(i)\sigma_t=\call{V}_\call{I}(\psi(i))\sigma_\call{I}$.
  \textit{Wlog} we assume that $\psi$ is defined as an extension of
  the connection between $\call{C}_\call{I}$ and $\call{C}_t$, thereby
  ensuring that for input states $i$ of $\call{C}_t$ we also have
  $\call{V}_t(i)\sigma_t'=\call{V}_\call{I}(\psi(i))\sigma_\call{I}'$.

  \begin{claim}
    Wlog we can assume that for any deduction state $i\in\Ind_t$ we
    have
    $\call{V}_t(i)\sigma_t'\neq\call{V}_\call{I}(\psi(i))\sigma_\call{I}'$.
  \end{claim}

  \begin{proofclaim}
    Let $i\in\Ind_t$ be a deduction state such that
    $\call{V}_t(i)\sigma_t' =
    \call{V}_\call{I}(\psi(i))\sigma_\call{I}'$. Adding a reuse state
    if necessary, we can change $i$ into an input state that is
    connected to $\psi(t)$ (or a state which is a reuse of $\psi(i)$).
    This construction does not change $\sigma_t$ nor $\sigma_t'$ and
    thus the fact that
    $\call{C}_t\comp\call{C}_\call{I}\comp\call{C}_h$ or
    $\call{C}_t\comp\call{C}_\call{I}\comp\call{C}_h'$ is satisfiable.
    When repeatedly applying it, we obtain a symbolic derivation
    $\call{C}_t$ that satisfies the claim.
  \end{proofclaim}

  We now split the analysis in two cases depending on whether the set
  $I_t\subseteq\Ind_t$ of indexes $i$ such that
  $\call{V}_t(i)\sigma_t'\neq\call{V}_\call{I}(\psi(i))\sigma_\call{I}'$
  is empty or not.  If it is empty, the claim implies that we can
  assume there is no deduction states in $\call{C}_t$, and thus that
  $\call{S}_t=\call{S}_t^t$. Since
  $\call{C}_t\comp\call{C}_\call{I}\comp\call{C}_h$ is satisfiable but
  not $\call{C}_t\comp\call{C}_\call{I}\comp\call{C}_h'$ there exists
  two input states $i,j$ and one equation
  $\call{V}_t(i)\unif\call{V}_t(j)$ in $\call{S}_t$ which is satisfied
  by $\sigma_t$ but not by $\sigma_t'$. Thus $\chi(\call{C}_\call{I})$
  contains one symbolic derivation $(\call{V}:i\in\set{1,2}\mapsto
  x_i,\set{x_1\unif x_2},\emptyset,\set{1,2},\emptyset)$ where $1$ is
  connected to $\psi(i)$ and $2$ is connected to $\psi(j)$.

  On the other hand, if $I_t$ is not empty, let $i_0$ be minimal in
  this set, and let $\call{V}_t(i_0)\unif
  f(\call{V}_t(i_1),\ldots,\call{V}_t(i_n))$ be the equation
  corresponding to this deduction state in $\call{S}_t^d$.  Given the
  claim we can assume that $i_t$ is the first deduction state, and
  thus that all preceding states are input states. Thus there exists
  an ordering on the set $\Ind_0=\set{t,0,\ldots,n}$ such that the
  following symbolic derivation is in $\chi(\call{C}_\call{I})$ and
  satisfies the proposition:
  $$
  (\call{V}:i\in\Ind_0\mapsto x_i,\set{x_0\unif f(x_1,\ldots,x_n) ~,~
    x_0 \unif x_t},\set{t,1,\ldots,n},\emptyset)
  $$
\end{proof}

\begin{refproposition}{equiv:prop:inclusion}
  Given two HSDs $\call{C}_h$ and $\call{C}_h'$ we have
  $\call{C}_h^\star\subseteq{\call{C}_h'}^\star$ if, and only if, there exists
  a symbolic testing derivation $\call{C}_t$ with at most one deduction
  state and one equality and a connection $\varphi$ such that
  $(\call{C}_h\comp_\varphi\call{C}_t)^\sfree\subseteq(\call{C}_h'\comp_\varphi\call{C}_t)^\star$.
\end{refproposition}

\begin{proof}
  Let us first prove the contrapositive of the direct direction. Let
  $\call{C}_\call{I}$ be an ASD in
  $(\call{C}_h\comp_\varphi\call{C}_t)^\sfree\setminus(\call{C}_h'\comp_\varphi\call{C}_t)^\star$,
  and $\psi$ be a connection such that:
  $$
  \left\lbrace
    \begin{array}{rcl}
      \call{C}_\call{I}\comp_\psi(\call{C}_h\comp_\varphi\call{C}_t) & \hspace*{2em} &
      \text{is closed and satisfiable}\\
      \call{C}_\call{I}\comp_\psi(\call{C}_h'\comp_\varphi\call{C}_t) & \hspace*{2em} &
      \text{is closed and not satisfiable}\\
    \end{array}
  \right.
  $$
  From $\varphi$ and $\psi$ we easily define two connections
  $\varphi'$ and $\psi'$ such that $\call{C}_\call{I}\comp_{\varphi'}\call{C}_t$ is an
  ASD $\call{C}_\call{I}'$ such that $\call{C}_\call{I}'\comp_{\psi'}\call{C}_h$ is closed and
  satisfiable whereas $\call{C}_\call{I}'\comp_{\psi'}\call{C}_h'$ is closed but not
  satisfiable. Hence:
  $$
  (\call{C}_h\comp_\varphi\call{C}_t)^\sfree\setminus(\call{C}_h'\comp_\varphi\call{C}_t)^\star\neq\emptyset
  $$
  implies $\call{C}_h^\star\not\subseteq {\call{C}_h'}^\star$.

  Let us now prove the contrapositive of the converse implication and
  assume $\call{C}_h^\star\not\subseteq {\call{C}_h'}^\star$. By
  Proposition~\ref{equiv:prop:minimal:test} there exists a symbolic
  derivation $\call{C}_\call{I}\in\call{C}_h^\sfree$, a testing ASD $\call{C}_t$ and a
  connection $\psi$ such that:
  $$
  \left\lbrace
    \begin{array}{l}
      \call{C}_t\comp_\psi\call{C}_\call{I} \in {\call{C}_h}^\star\\
      \call{C}_t\comp_\psi\call{C}_\call{I} \notin {\call{C}_h'}^\star\\
      \call{C}_t \text{ contains at
        most one deduction and one equality test}\\
    \end{array}
  \right.
  $$
  By Lemma~\ref{equiv:lem:symmetry} this implies that there exists a
  connection $\varphi$ such that
  $\call{C}_\call{I}\in(\call{C}_h\comp_\varphi\call{C}_t)^\sfree$. Given the construction it
  is clear that $\call{C}_\call{I}\notin (\call{C}_h'\comp_\varphi\call{C}_t)^\star$.
\end{proof}

\begin{reftheorem}{equiv:theo:inclusion}
  Let \call{D} be a finitary deduction system. The inclusion
  $\call{C}_h^\star\subseteq{\call{C}_h'}^\star$ is decidable for any two honest
  \call D-symbolic derivations $\call{C}_h,\call{C}_h'$.
\end{reftheorem}

  \begin{proof}
    By Prop.~\ref{equiv:prop:inclusion} the inclusion does not hold
    if, and only if, there exists an ASD $\call{C}_t$ of bounded
    length and a connection function $\varphi$ such that:
  $$
  \Delta= (\call{C}_h\comp_\varphi\call{C}_t)^\sfree \setminus
  (\call{C}_h'\comp_\varphi\call{C}_t)^\star \neq\emptyset
  $$
  Let $\call{C}_\tau$ be an ASD in $\Delta$. By definition of finitary
  deduction systems one can compute from
  $\call{C}_h\comp_\varphi\call{C}_t$ a finite set $\Sigma$ of ASDs
  such that there exists $\call{C}_\sigma\in\Sigma$ and $\call{C}_c$
  stutter free such that
  $\call{C}_\call{I}'\le\call{C}_\call{I}\comp\call{C}_c$.  By
  definition of the ordering there exists a stutter free derivation
  $\call{C}_\theta$ and a set of constants $C$ such that:
  $$
  \open{\call{C}_\sigma}{C}\comp\call{C}_\theta=\call{C}_\tau\comp\call{C}_c
  $$
  By hypothesis there exists a connection function $\psi$ such that
  $\call{C}_\tau\comp_\psi(\call{C}_h\comp_\varphi\call{C}_t)$ is
  closed and satisfiable whereas
  $\call{C}_\tau\comp_\psi(\call{C}_h'\comp_\varphi\call{C}_t)$ is
  closed but not satisfiable.  By
  Lemma~\ref{equiv:lem:replace:constant:by:term} (employed with
  $C=\emptyset$)
  $\call{C}_c\comp(\call{C}_\tau\comp_\psi(\call{C}_h\comp_\varphi\call{C}_t))$
  is satisfiable whereas, since
  $\call{C}_\tau\comp_\psi(\call{C}_h'\comp_\varphi\call{C}_t)$ is
  closed,
  $\call{C}_c\comp(\call{C}_\tau\comp_\psi(\call{C}_h'\comp_\varphi\call{C}_t))$
  is not.  By Lemma~\ref{equiv:lem:replace:constant:by:term} if
  $\call{C}_\sigma\in{\call{C}_h'}^\star$ then so is
  $\call{C}_c\comp(\call{C}_\tau\comp_\psi(\call{C}_h'\comp_\varphi\call{C}_t))$.
  Since $\call{C}_\sigma\in\Sigma$ implies
  $\call{C}_\sigma\in(\call{C}_h\comp_\varphi\call{C}_t)^\star$ we
  thus have
  $\call{C}_\sigma\in(\call{C}_h\comp_\varphi\call{C}_t)^\star\setminus
  (\call{C}_h'\comp_\varphi\call{C}_t)^\star$. Thus, if
  $\call{C}_h\not\subseteq\call{C}_h'$ one can guess (in bounded time)
  a symbolic derivation $\call{C}_t$ and compute a finite $\Sigma$ of
  symbolic derivations that contains one which is not in
  $(\call{C}_h'\comp\call{C}_t)^\star$.

  Conversely it is clear if one such derivation is found then
  $\call{C}_h^\star\not\subseteq{\call{C}_h'}^\star$.
\end{proof}

\end{shortversion}

\end{document}